\documentclass{LMCS}



\usepackage{amsmath,amssymb}
\usepackage{proof} 
\newcommand\infertree[2]{\infer{#2}{#1}}

\usepackage{mathpartir}
\usepackage{stmaryrd}
\usepackage{diagrams}   
\usepackage{hyperref}
\usepackage{bigstrut}   
\usepackage{latexsym}   
\usepackage{url}        
\usepackage{enumerate}

\DeclareMathAlphabet{\mathsf}{OT1}{cmss}{m}{n}
\SetMathAlphabet{\mathsf}{bold}{OT1}{cmss}{b}{n}

\newtheorem{principle}[thm]{Principle}

\usepackage{shading}

\usepackage[usenames]{color}
\definecolor{tlcacolor}{rgb}{0.5,0.1,0.1}
\definecolor{trcolor}{rgb}{0.2,0.4,0.2}
\definecolor{wjlcolor}{rgb}{0.4,0.1,0.4}
\definecolor{Gray}{rgb}{0.8,0.8,0.8}

\newcommand{\boxassum}[1]{\ \ (#1)}

\newcommand{\inferaxiom}[1]{\inferrule{ }{#1}}
\newcommand{\rulename}[1]{\textbf{#1}}
\newcommand{\namedaxiom}[2]{\inferrule{ }{#1}\ (\rulename{#2})}
\newcommand{\namedrule}[3]{\inferrule{#1}{#2}\ (\rulename{#3})}
\newcommand{\rulesheader}[1]{\hfill\fbox{\ensuremath{#1}} \\}
\newcommand{\rulesheadernobreak}[1]{\hfill\fbox{\ensuremath{#1}} \nopagebreak}
\newcommand{\grammarheader}[1]{\begin{split} \hspace{-1em}\textbf{#1} \end{split}}




\newenvironment{indented}
    {\begin{list}{}{%
        \setlength{\topsep}{0.5em}%
        \setlength{\leftmargin}{1em}%
        \setlength{\rightmargin}{0em}%
        \setlength{\listparindent}{\parindent}%
        \setlength{\itemindent}{\parindent}%
        \setlength{\parsep}{\parskip}%
     }%
     \item[]
    }
    {\end{list}}
\newenvironment{proofcase}
    {\begin{tabbing}}
    {\end{tabbing}}

\newcommand\lit[1]{\mathsf{#1}}

\newcommand\arrow{\rightarrow}
\newcommand\intersect{\wedge}
\newcommand\rintersect{\mathrel{\wedge}}
\newcommand\refines{\mathrel{\sqsubset}}
\newcommand\hassort{\mathrel{::}}
\newcommand\subtype{\leq}
\newcommand\asubtype{\leqq}
\newcommand\Nspine{N_1 \ldots N_k}

\newcommand\decl[2]{#1{:}#2}
\newcommand\rdecl[3]{#1{\hassort}#2{\refines}#3}
\newcommand\arefdecl[3]{#1{\refines}#2{\hassort}#3}
\newcommand\mrefdecl[2]{#1{\hassort}#2}
\newcommand\subdecl[2]{#1{\subtype}{#2}}
\newcommand\bdot{.\,}
\newcommand\var[1]{\textit{#1}}
\newcommand\even{\var{e}}
\newcommand\odd{\var{o}}
\newcommand\pos{\var{p}}

\newcommand\ktype{\lit{type}}
\newcommand\kpi[3]{\Pi \decl{#1}{#2} \bdot #3}
\newcommand\kunit{1}
\newcommand\kprod[2]{#1 \times #2}

\newcommand\csort{\lit{sort}}
\newcommand\cpi[3]{\Pi \mrefdecl{#1}{#2} \bdot #3}
\newcommand\ccpi[4]{\Pi \rdecl{#1}{#2}{#3} \bdot #4}
\newcommand\ctop{\top}
\newcommand\cinter[2]{#1 \mathrel{\intersect} #2}

\newcommand\api[3]{\Pi \decl{#1}{#2} \bdot #3}
\newcommand\asigma[3]{\Sigma \decl{#1}{#2} \bdot #3}
\newcommand\asubset[3]{\{ \decl{#1}{#2} \mid #3 \}}
\newcommand\aapp[2]{#1\ #2}
\newcommand\aarrow[2]{#1 \arrow #2}

\newcommand\aunit{1}
\newcommand\aprod[2]{#1 \times #2}
\newcommand\afst[1]{\pi_1\, #1}
\newcommand\asnd[1]{\pi_2\, #1}

\newcommand\spi[3]{\Pi \mrefdecl{#1}{#2} \bdot #3}
\newcommand\sspi[4]{\Pi \rdecl{#1}{#2}{#3} \bdot #4}
\newcommand\sarrow[2]{#1 \arrow #2}
\newcommand\sapp[2]{#1\ #2}
\newcommand\sstop{\top}
\newcommand\sinter[2]{#1 \mathrel{\intersect} #2}

\newcommand\mapp[2]{#1\ #2}
\newcommand\mlam[2]{\lambda #1 \bdot #2}
\newcommand\munit{\langle \rangle}
\newcommand\mpair[2]{\langle #1, #2 \rangle}
\newcommand\mfst[1]{\pi_1\, #1}
\newcommand\msnd[1]{\pi_2\, #1}

\newcommand\synths{\Rightarrow}
\newcommand\checks{\Leftarrow}

\newcommand\kind{\lit{kind}}
\newcommand\jclass[4][] {#2 \vdash_{#1} #3 \refines #4}

\newcommand\jclasstform[5][] {#2 \vdash_{#1} #3 \refines #4
                              \stackrel{\mathrm{form}}{\leadsto} #5}

\newcommand\kindtpred[2] {#1 \stackrel{\mathrm{pred}}{\leadsto} #2}
\newcommand\kindtsub[2] {#1 \stackrel{\subtype}{\leadsto} #2}

\newcommand\jsynthsortclass[6][]
    {#2 \vdash_{#1} #3 \refines #4 \synths #5}
\newcommand\jsynthsortclasst[6][]
    {#2 \vdash_{#1} #3 \refines #4 \synths #5 \leadsto #6}
\newcommand\jchecksortclass[4][]{#2 \vdash_{#1} #3 \refines #4}
\newcommand\jchecksortclasst[5][]{#2 \vdash_{#1} #3 \refines #4 \leadsto #5}
\newcommand\jnchecksortclass[4][]{#2 \not\vdash_{#1} #3 \refines #4}

\newcommand\jsynthtermsort[4][]{#2 \vdash_{#1} #3 \synths #4}
\newcommand\jchecktermsort[4][]{#2 \vdash_{#1} #3 \checks #4}

\newcommand\jsynthtermsortt[5][]{#2 \vdash_{#1} #3 \synths #4 \leadsto #5}
\newcommand\jchecktermsortt[5][]{#2 \vdash_{#1} #3 \checks #4 \leadsto #5}

\newcommand\jsig[1]{\vdash #1\ \lit{sig}}
\newcommand\jnsig[1]{\not\vdash #1\ \lit{sig}}
\newcommand\jsigt[2]{\vdash #1\ \lit{sig} \leadsto #2}
\newcommand\jctx[2][]{\vdash_{#1} #2\ \lit{ctx}}
\newcommand\jctxt[3][]{\vdash_{#1} #2\ \lit{ctx} \leadsto #3}
\newcommand\eraserefs[1]{#1^*}
\newcommand\jkind[3][]{#2 \vdash_{#1} #3 \checks \kind}
\newcommand\jsynthtypekind[4][]{#2 \vdash_{#1} #3 \synths #4}
\newcommand\jchecktypekind[3][]{#2 \vdash_{#1} #3 \checks \ktype}
\newcommand\jsynthtermtype[4][]{#2 \vdash_{#1} #3 \synths #4} 
\newcommand\jchecktermtype[4][]{#2 \vdash_{#1} #3 \checks #4} 
\newcommand\jnchecktermtype[4][]{#2 \not\vdash_{#1} #3 \checks #4} 

\newcommand\asynths{\Rrightarrow}
\newcommand\achecks{\Lleftarrow}
\newcommand\jasynth[4][]{#2 \vdash_{#1} #3 \asynths #4}
\newcommand\jacheck[4][]{#2 \vdash_{#1} #3 \achecks #4}
\newcommand\jnacheck[4][]{#2 \not\vdash_{#1} #3 \achecks #4}
\newcommand\japply[3]{#1 \mathrel{@} #2 = #3}
\newcommand\under[3][]{#2 \vdash_{#1} #3}
\newcommand\jsubtype[2]{#1 \subtype #2}
\newcommand\jasubtype[2]{#1 \asubtype #2}
\newcommand\jnsubtype[2]{#1 \not\subtype #2}
\newcommand\jnasubtype[2]{#1 \not\asubtype #2}

\newcommand\jchecksubtype[5][]
            {#3 \subtype #4}
\newcommand\jsynthsubtype[7][]
            {#3 \subtype #4}

\newcommand\jchecksubtypet[5][]{#2 \vdash_{#1} #3 \subtype #4 \leadsto #5}
\newcommand\jsynthsubtypet[3]{#1 \subtype #2 \leadsto #3}

\newcommand\subst[3]{[#1/#2]\,#3}
\newcommand\hsubst[5]{[#2/#3]_{#4}^{\mathrm{#1}}\,#5}
\newcommand\jhsubst[6]{\hsubst{#1}{#2}{#3}{#4}{#5} = #6}
\newcommand\ssplit[1]{\mathrm{split}(#1)}
\newcommand\binter[1]{\bigwedge(#1)}
\newcommand\expand[2]{\eta_{#1}(#2)}

\newcommand\erasedeps[1]{(#1)^-}

\newcommand\treduce[4]{\decl{#1}{#2} \vdash #3 : #4}

\newcommand\defeq{\stackrel{\text{\scriptsize{def}}}{=}}
\newcommand\FV{\mathop{\mathrm{FV}}}
\newcommand\D{\mathop{\mathcal D}}
\newcommand\E{\mathop{\mathcal E}}

\newcommand\head{\mathop{\mathrm{head}}}


\newcommand\metalam[2]{\boldsymbol{\lambda} #1 \textbf{.}\  #2}
\newcommand{\metalp}{\boldsymbol{(}}
\newcommand{\metarp}{\boldsymbol{)}}
\newcommand{\metacomma}{\boldsymbol{,\,}}
\newcommand\metaapp[2]{#1 \metalp #2 \metarp}
\newcommand\metapair[2]{\metalp #1 \metacomma #2 \metarp}
\newcommand\metalamtwo[3]{\metalam {\metapair {#1} {#2}} {#3}}
\newcommand\metaapptwo[3]{#1 {\metapair {#2} {#3}}}

\newcommand\metalamfive[6]{\metalam {\metalp {#1}       %
                                     \metacomma {#2}    %
                                     \metacomma {#3}    %
                                     \metacomma {#4}    %
                                     \metacomma {#5}    %
                                     \metarp}           %
                                    {#6}}
\newcommand\metaappfive[6]{#1 \metalp {#2}      %
                                \metacomma {#3} %
                                \metacomma {#4} %
                                \metacomma {#5} %
                                \metacomma {#6} %
                              \metarp}
\newcommand\revapp[2]{#1 @ #2}
\newcommand\what\widehat
\newcommand\whatQQ{\what {Q_1{\textit-}Q_2}}
\newcommand\whatQQp{\what {Q_1{\textit-}Q'}}

\newcommand\f{\ensuremath{_{\mathrm{f}}}}
\newcommand\p{\ensuremath{_{\mathrm{p}}}}
\newcommand\longform\f
\newcommand\s{\ensuremath{_{\text{s}}}}
\newcommand\form[1]{\operatorname{form}(#1)}
\newcommand\judge[1]{\Gamma \vdash \mathcal{#1}}
\newcommand\judgetrans[2]{\Gamma \vdash \mathcal{#1} \leadsto #2}

\newcommand\GammaL{\Gamma_{\mathrm L}}
\newcommand\GammaR{\Gamma_{\mathrm R}}

\usepackage{listings}
\usepackage[usenames]{color}


\lstdefinelanguage{Twelf}{%
    identifierstyle=\it,                %
    commentstyle=\rm,                   
    keepspaces=true,                    %
    keywordstyle=[1]\textbf,            %
    keywordstyle=[2]\textsf,            %
    alsoletter={-+},                    %
    morecomment=[l]{\%},                %
    morecomment=[n]{\%\{}{\}\%},        %
    otherkeywords={\%total,\%worlds,\%mode,\%infix},  %
    morekeywords={[2]type},             %
    literate=*{->}{$\rightarrow$}{2}%
             {<-}{$\leftarrow$}{2}%
             {\{}{$\Pi$}{1}%
             {\}}{$.$}{1}%
             {[}{$\lambda$}{1}%
             {]}{$.$}{1}%
}

\lstdefinelanguage{LFR}[]{Twelf}{%
    keepspaces=true,                    %
    morekeywords={[2]sort},             %
    literate=*
             {-}{-}{1}%
             {->}{$\rightarrow$}{2}%
             {<-}{$\leftarrow$}{2}%
             {\{}{$\Pi$}{1}%
             {\}}{$.$}{1}%
             {[}{$\lambda$}{1}%
             {]}{$.\!$}{1}%
              {<}{$\langle$}{1}%
              {>}{$\rangle$}{1}%
             {<:}{$\leq$}{2}%
             {<<}{$\sqsubset$}{2}%
             {^}{$\wedge$}{1}%
             {\#}{$\top$}{1}%
             {_1}{$_1$}{2}%
             {_2}{$_2$}{2}%
             {X_1}{$X_1$}{2}%
             {X_2}{$X_2$}{2}%
             {al_1}{$\alpha_1$}{2}%
             {al_2}{$\alpha_2$}{2}%
             {*}{$\times$}{1}%
             {1}{$\mathbf{1}$}{1}%
             {\#1}{$\pi_1$}{2}%
             {\#2}{$\pi_2$}{2}%
             {:-}{$\div$}{1}%
             {-:>}{$\iarrow$}{3}%
             {[[}{[}{1}
             {]]}{]}{1}
}

\lstdefinelanguage{ExtendedTwelf}[]{Twelf}{%
    literate=*{->}{$\rightarrow$}{2}%
             {<-}{$\leftarrow$}{2}%
             {\{}{$\Pi$}{1}%
             {\}}{$.$}{1}%
             {[}{$\lambda$}{1}%
             {]}{$.$}{1}%
             {*}{$\times$}{1}%
             {1}{$\mathbf{1}$}{1}%
             {\#1}{$\pi_1$}{2}%
             {\#2}{$\pi_2$}{2}%
}

\lstset{columns=fullflexible, %
        mathescape=true, escapechar=`, language=LFR}
\newcommand\lst\lstinline

\newcommand{\promote}[1]{#1^{\oplus}}
 
\newcommand\idecl[2]{#1{\div}#2}

\newcommand\miapp[2]{#1\ [#2]}
\newcommand\aiapp[2]{#1\ [#2]}

\newlength{\arrowlength}
\settowidth{\arrowlength}{$\rightarrow$}
\newcommand\iarrow{\mathrel{\rlap{$\rightarrow$}\makebox[\arrowlength]{$\div$}}}
\newcommand\aiarrow[2]{#1 \iarrow #2}


\newcommand{\hatted}[1]{\ensuremath{\what{\var{#1}}}}
\newcommand{\zhat}{\hatted{z}}
\newcommand{\shat}{\hatted{s}}
\newcommand{\doublestar}{\ensuremath{\var{double*}}}
\newcommand{\doublestarhat}{\hatted{double*}}
\newcommand{\dblzhat}{\hatted{dbl/z}}

\def\doi{6 (4:5) 2010}
\lmcsheading%
{\doi}
{1--50}
{}
{}
{Nov.~27, 2009}
{Dec.~\phantom05, 2010}
{}

\begin{document}

\title[Refinement Types for Logical Frameworks]
      {Refinement Types for Logical Frameworks and Their Interpretation as
       Proof Irrelevance\rsuper*}

\author[W.~Lovas]{William Lovas}
\address{Carnegie Mellon University \\
         Pittsburgh, PA 15213, USA}
\email{\{wlovas,fp\}@cs.cmu.edu}

\author[F.~Pfennig]{Frank Pfenning}
\address{\vskip-6 pt}

\keywords{Logical frameworks, refinement types, proof irrelevance}
\subjclass{F.3.3, F.4.1}
\titlecomment{{\lsuper*}The work was partially supported by the
Funda\c{c}\~{a}o para a Ci\^{e}ncia e Tecnologia (FCT), Portugal,
under a grant from the Information and Communications Technology Institute
(ICTI) at Carnegie Mellon University.}




%
%
%

\begin{abstract}
Refinement types sharpen systems of simple and dependent types by offering
expressive means to more precisely classify well-typed terms.
We present a system of refinement types for LF in the style of
recent formulations where only canonical forms are well-typed.  Both the
usual LF rules and the rules for type refinements are bidirectional,
leading to a straightforward proof of decidability of typechecking even in
the presence of intersection types.  Because we insist on canonical forms,
structural rules for subtyping can now be derived rather than being assumed
as primitive.  We illustrate the expressive power of our system with
examples and validate its design by demonstrating a precise correspondence
with traditional presentations of subtyping.

Proof irrelevance provides a mechanism for selectively hiding the identities
of terms in type theories.
We show that LF refinement types can be interpreted as
predicates using proof irrelevance,
establishing a uniform relationship between two previously studied
concepts in type theory.
The interpretation and its correctness proof are surprisingly complex,
lending support to the claim that refinement types are a fundamental
construct rather than just a convenient surface syntax for certain uses of
proof irrelevance.
\end{abstract}

\maketitle

%



\section{Introduction}

\noindent LF was created as a framework for defining logics and
programming languages \cite{Harper93jacm}.  Since its inception, it
has been used to represent and formalize reasoning about a number of
deductive systems, which are prevalent in the study of logics and
programming languages.\footnote{See \cite{Pfenning01handbook} for an
  introduction to logical frameworks and further references.}  In its
most recent incarnation as the Twelf metalogic \cite{Pfenning99cade},
it has been used to encode and mechanize the metatheory of programming
languages that are prohibitively complex to reason about on paper
\cite{Crary03popl,Lee07}.

It has long been recognized that some LF encodings would benefit from the
addition of a subtyping mechanism to LF \cite{Pfenning93types,Aspinall01}.
In LF encodings, judgments are represented by type families, and many
subsets of data types and judgmental inclusions can be elegantly
represented via subtyping.

Prior work has explored adding subtyping and intersection types to LF via
\emph{refinement types} \cite{Pfenning93types}.  Many of that system's
metatheoretic properties were proven indirectly by translation into other
systems, though, giving little insight into notions of adequacy or
implementation strategies.  We begin this paper by presenting a refinement
type system for LF based on the modern \emph{canonical forms} approach
\cite{Watkins02tr,hl07mechanizing}, and by doing so we obtain direct proofs of
important properties like decidability.  Moreover, the theory of canonical
forms provides the basis for a study of adequacy theorems exploiting
refinement types.

In canonical forms-based LF, only $\beta$-normal $\eta$-long terms are
well-typed --- the syntax restricts terms to being $\beta$-normal, while
the typing relation forces them to be $\eta$-long.  Since standard
substitution might introduce redexes even when substituting a normal term
into a normal term, it is replaced with a notion of \emph{hereditary
substitution} that contracts redexes along the way, yielding another normal
term.  Since only canonical forms are admitted, type equality is just
$\alpha$-equivalence, and typechecking is manifestly decidable.

Canonical forms are exactly the terms one cares about when adequately
encoding a language in LF, so this approach loses no expressivity.
Since all terms are normal, there is no notion of reduction, and thus the
metatheory need not directly treat properties related to reduction, such as
subject reduction, Church-Rosser, or strong normalization.  All of the
metatheoretic arguments become straightforward structural inductions, once
the theorems are stated properly.

By introducing a layer of refinements distinct from the usual layer of types,
we prevent subtyping from interfering with our extension's metatheory.  We
also follow the general philosophy of prior work on refinement types
\cite{Freeman91,Freeman94,Davies05} in only assigning refined types to terms
already well-typed in pure LF, ensuring that our extension is conservative.

As a simple example, we study the representation of natural numbers
as well as even and odd numbers.  In normal logical discourse, we
might define these with the following grammar:
\[\begin{array}{llcl}
\mbox{Natural numbers} & n & ::= & z \mid s(n) \\[1ex]
\mbox{Even numbers} & e & ::= & z \mid s(o) \\
\mbox{Odd numbers} & o & ::= & s(e)
\end{array}\]
The first line can be seen as defining the \emph{abstract syntax} of
natural numbers in unary form, the second and third lines as
defining two subsets of the natural numbers defined in the first line.
We will follow this informal convention, and represent the first
as a \emph{type} with two constructors.
\begin{quote}
$ \var{nat} : \ktype. $ \\
$ \var {z} : \var{nat}. $ \\
$ \var {s} : \var{nat} \rightarrow \var{nat}. $
\end{quote}
The second and third line define even and odd numbers as
a subset of the natural numbers, which we represent as \emph{refinements}
of the type \lst{nat}.
\begin{quote}
$ \var{even} \refines \var{nat}.\quad\var{odd} \refines \var{nat}. $\\
$ \var{z} :: \var{even}.$\\ 
$ \var{s} :: \var{even} \arrow \var{odd}\ \rintersect\ \var{odd} \arrow \var{even}. $
\end{quote}
In the above, \lst{even << nat} declares \lst{even} as a refinement of the
type \lst{nat}, and the declarations using ``\lst{::}'' give more precise
sorts for the constructors \lst{z} and \lst{s}.  Note that since the successor
function satisfies two unrelated properties, we give two refinements for it
using an intersection sort.  We can give similar representations of all
regular tree grammars as refinements, which then represent regular tree
types \cite{DartRegular}.  Our language generalizes this further to allow
binding operators and dependent types, both of which it inherits from LF,
thereby going far beyond what can be recognized with tree automata
\cite{tata2007}.

Already in this example we can see that it is natural to use refinements to
represent certain subsets of data types.  Conversely, refinements can be
interpreted as defining subsets.  In the second part of this paper, we exhibit
an interpretation of LF refinement types which we refer to as the
``\emph{subset interpretation}'', since a sort refining a type is interpreted
as a predicate embodying the refinement, and the set of terms having that sort
is simply the subset of terms of the refined type that also satisfy the
predicate.  For example, under
the subset interpretation, we translate the refinements \lst{even} and
\lst{odd} to predicates on natural numbers.
The refinement declarations for \lst{z} and \lst{s} turn into constructors for
proofs of these predicates.\smallskip
\begin{quote}
$    even : nat \rightarrow\ktype.\quad odd : nat \rightarrow\ktype. $\\
$    \zhat\phantom{_1} : even\,\, z. $\\
$    \shat_1 : \raisebox{2pt}{$\scriptstyle\prod$} x{:}nat.\,\, even\,\, x \rightarrow odd\, (s\, x). $\\
$    \shat_2 : \raisebox{2pt}{$\scriptstyle\prod$}x{:}nat.\,\, odd\,\, x \rightarrow even\, (s\, x). $
\end{quote}\smallskip
The successor function's two unrelated sorts translate to proof
constructors for two different predicates.

We show that our interpretation is correct by proving, for instance, that a
term $N$ has sort $S$ if and only if its translation $\what N$ has type
$\metaapp {\what S} N$, where $\metaapp {\what S} -$ is the translation of the
sort $S$ into a type family representing a predicate; thus, an adequate
encoding using refinement types remains adequate after translation.  The chief
complication in proving correctness is the dependency of types on terms, which
forces us to deal with a \textit{coherence} problem
\cite{Tannen91coercion,REYNOLDS91}.

Normally, subset interpretations are not subject to the issue of
coherence---that is, of ensuring that the interpretation of a judgment is
independent of its derivation---since the terms in the target of the
translation are \textit{the same} as the terms in the source, just with the
stipulation that a certain property hold of them.  The proofs of these
properties are computationally immaterial, so they may simply be ignored.  But
the presence of full dependent types in LF means that the interpretation of a
sort might depend on these proofs, potentially violating the adequacy of
representations.

In order to solve the coherence problem we employ \emph{proof irrelevance}, a
technique used in type theories to selectively hide the identities of terms
representing proofs \cite{Pfenning01lics,AwodeyBauer04}.  In the example, the
terms whose identity should be irrelevant are those constructing proofs of
\lst{odd($n$)} and \lst{even($n$)}, that is, those composed from $\zhat$,
$\shat_1$, and $\shat_2$.

The subset interpretation completes our intuitive understanding of refinement
types as representing subsets of types.  It turns out that in the presence of
variable binding and dependent types, this understanding is considerably more
difficult to attain than it might seem from the small example above.

In the remainder of the paper, we describe our refinement type system
alongside a few illustrative examples (Section~\ref{sect:system-examples}).
Then we explore its metatheory and sketch proofs of key results, including
decidability (Section~\ref{sect:metatheory}).  We note that our approach
leads to subtyping only being defined at base types, but we show that
this is no restriction at all: subtyping at higher types is intrinsically
present due to the use of canonical forms
(Section~\ref{sect:higher-subsorting}).  Next, we take a brief detour to
review prior work on proof irrelevance (Section~\ref{sect:proof-irrelevance}),
setting the stage for our subset interpretation and proofs of its
correctness (Section~\ref{sect:interpretation}).  Finally, we offer some
concluding remarks on the broader implications of our work
(Section~\ref{sect:conclusion}).

This paper represents a combination of the developments in a technical report
on the basic design of LF with refinement types \cite{Lovas07,Lovas08tr} and a
conference paper sketching the subset interpretation \cite{Lovas09subset}.

\section{System and Examples}
\label{sect:system-examples}

\noindent We present our system of LF with Refinements, LFR, through
several examples.  In what follows, $R$ refers to atomic terms and $N$
to normal terms.  Our atomic and normal terms are exactly the terms
from canonical presentations of LF\@.
\begin{align*}
R &::= c
    \mid x
    \mid \mapp R N
  & \text{atomic terms}
\\
N, M &::= R
    \mid \mlam x N
  & \text{normal terms}
\end{align*}
In this style of presentation, typing is defined bidirectionally by two
judgments: $R \synths A$, which says atomic term $R$ \emph{synthesizes}
type $A$, and $N \checks A$, which says normal term $N$ \emph{checks}
against type $A$.  Since $\lambda$-abstractions are always checked against
a given type, they need not be decorated with their domain types.

Types are similarly stratified into atomic and normal types.
\begin{align*}
P &::= a
    \mid \aapp P N
  & \text{atomic type families}
\\
A, B &::= P
    \mid \api x A B
  & \text{normal type families}
\end{align*}

The operation of hereditary substitution, written $\hsubst {} N x A$,
is a partial function which computes the normal form of the standard
capture-avoiding substitution of $N$ for $x$.  It is indexed by the
putative type of $x$, $A$, to ensure termination, but neither the variable
$x$ nor the substituted term $N$ are required to bear any relation to this
type index for the operation to be defined.  We show in
Section~\ref{sect:metatheory} that when $N$ and $x$ \emph{do} have type
$A$, hereditary substitution is a total function on well-formed terms.

As a philosophical aside, we note that restricting our attention to normal
terms in this way is similar to the idea of restricting one's attention to
cut-free proofs in a sequent calculus~\cite{pfenning00structural}.  Showing
that hereditary substitution can always compute a canonical form is
analogous to showing the cut rule admissible.  And just as cut
admissibility may be used to prove a \textit{cut elimination} theorem,
hereditary substitution may be used to prove a \textit{normalization}
theorem relating the canonical approach to traditional formulations.  We
will not explore the relationship any further in the present work: the
canonical terms are the only ones we care about when formalizing deductive
systems in a logical framework, so we simply take the canonical
presentation as primary.

Our layer of refinements uses metavariables $Q$ for atomic sorts and $S$
for normal sorts.  These mirror the definition of types above, except
for the addition of intersection and ``top'' sorts.
\begin{align*}
Q &::= s
    \mid \sapp Q N
  & \text{atomic sort families}
\\
S, T &::= Q
    \mid \sspi x S A T
    \mid \sstop
    \mid \sinter {S_1} {S_2}
  & \text{normal sort families}
\end{align*}
Sorts are related to types by a refinement relation, $S \refines A$ (``$S$
refines $A$''), discussed below.  We only sort-check well-typed terms, and
a term of type $A$ can be assigned a sort $S$ only when $S \refines A$.
These constraints are collectively referred to as the ``refinement
restriction''.  We occasionally omit the ``$\refines A$'' from function
sorts when it is clear from context.

Deductive systems are encoded in LF using the \textit{judgments-as-types}
principle~\cite{Harper93jacm,hl07mechanizing}: syntactic categories are
represented by simple types, and judgments over syntax are represented by
dependent type families.  Derivations of judgments are inhabitants of those
type families, and well-formed derivations correspond to well-typed LF
terms.  An LF signature is a collection of kinding declarations $a : K$ and
typing declarations $c : A$ that establishes a set of syntactic categories,
a set of judgments, and inhabitants of both.  In LFR, we can represent
syntactic subsets or sets of derivations that have certain
\textit{properties} using sorts.  Thus one might say that the methodology
of LFR is \textit{properties-as-sorts}.

\subsection{Example: Natural Numbers}

For the first running example we will use the natural numbers in
unary notation.  In LF, they would be specified as follows
\begin{quote}
$ \var{nat} : \ktype. $ \\
$ \var {z} : \var{nat}. $ \\
$ \var {s} : \var{nat} \rightarrow \var{nat}. $
\end{quote}
These declarations establish a syntactic category of natural numbers
populated by two constructors, a constant constructor representing
zero and a unary constructor representing the successor function.

Suppose we would like to distinguish the odd and the even numbers
as refinements of the type of all numbers.
\begin{quote}
$ \var{even} \refines \var{nat}. $ \\
$ \var{odd} \refines \var{nat}. $
\end{quote}
The form of the declaration is $s \refines a$ where $a$ is
a type family already declared and $s$ is a new sort family.
Sorts headed by $s$ are declared in this way to refine types headed by $a$.
The relation $S \refines A$ is extended through
the whole sort hierarchy in a compositional way.

Next we declare the sorts of the constructors.  For zero, this is easy:
\begin{quote}
$ \var{z} :: \var{even}. $
\end{quote}
The general form of this declaration is $c :: S$, where
$c$ is a constant already declared in the form
$c : A$, and where $S \refines A$.  The declaration for the successor is
slightly more difficult, because it maps even numbers to odd
numbers and vice versa.  In order to capture both properties
simultaneously we need to use an \emph{intersection sort}, written
as $S_1 \rintersect S_2$.\footnote{Intersection has lower precedence
than arrow.}
\begin{quote}
$ \var{s} ::
  \var{even} \arrow \var{odd}\ \rintersect\ \var{odd} \arrow \var{even}. $
\end{quote}
In order for an intersection to be well-formed, both components must refine
the same type.  The nullary intersection $\sstop$ can refine any type, and
represents the maximal refinement of that type.\footnote{As usual in LF, we
use $A \arrow B$ as shorthand for the dependent type $\api x A B$ when $x$
does not occur in $B$.}
\begin{mathpar}
\inferrule{s \refines a \in \Sigma}
          {s\ \Nspine \refines a\ \Nspine}
\and
\inferrule{S \refines A \\ T \refines B}
          {\spi x S T \refines \api x A B}
\and
\inferrule{S_1 \refines A \\ S_2 \refines A}
          {\sinter {S_1} {S_2} \refines A}
\and
\inferaxiom  {\sstop \refines A}
\end{mathpar}
To show that the declaration for $\var{s}$ is well-formed, we establish that
$\var{even} \arrow \var{odd} \rintersect \var{odd} \arrow \var{even} \
\refines\  \var{nat} \rightarrow \var{nat}$.

The \emph{refinement relation} $S \refines A$ should not be confused with
the usual \emph{subtyping relation}.  Although each is a kind of subset
relation%
\footnote{%
It may help to recall the interpretation of $S \refines A$: for a term to
be judged to have sort $S$, it must already have been judged to have type
$A$ for some $A$ such that $S \refines A$.  Thus, the refinement relation
represents an inclusion ``by fiat'': every term with sort $S$ is also a
term of type $A$, by invariant.  By contrast, subsorting $S_1 \subtype S_2$
is a more standard sort of inclusion: every term with sort $S_1$ is also a
term of sort $S_2$, by subsumption (see
Section~\ref{sect:higher-subsorting}).%
}%
, they are quite different:  Subtyping relates two types, is
contravariant in the domains of function types, and is transitive, while
refinement relates a sort to a type, so it does not make sense to consider
its variance or whether it is transitive.  We will discuss subtyping
below and in Section~\ref{sect:higher-subsorting}.

Now suppose that we also wish to distinguish the strictly positive natural
numbers.  We can do this by introducing a sort $\var{pos}$ refining
$\var{nat}$ and declaring that the successor function yields a $\var{pos}$
when applied to anything, using the maximal sort.
\begin{quote}
$ \var{pos} \refines \var{nat}. $ \\
$ \var{s} :: \cdots \rintersect \sstop \arrow \var{pos}. $
\end{quote}
Since we only sort-check well-typed programs and $\var{s}$ is declared
to have type $\var{nat} \arrow \var{nat}$, the sort $\sstop$ here acts as a
sort-level reflection of the entire $\var{nat}$ type.

We can specify that all odds are positive by declaring $\var{odd}$ to
be a subsort of $\var{pos}$.
\begin{quote}
$ \var{odd} \subtype \var{pos}. $
\end{quote}
Although any ground instance of $\var{odd}$ is evidently $\var{pos}$, we
need the subsorting declaration to establish that variables of sort
$\var{odd}$ are also $\var{pos}$.

Putting it all together, we have the following:
\begin{quote}
$ \var{even} \refines \var{nat}. $ \quad
$ \var{odd} \refines \var{nat}. $ \quad
$ \var{pos} \refines \var{nat}. $ \\
$ \var{odd} \subtype \var{pos}. $ \\
$ \var{z} :: \var{even}. $ \\
$ \var{s} ::
      \var{even} \arrow \var{odd}\ \rintersect\ 
      \var{odd} \arrow \var{even}\ \rintersect\ 
      \sstop \arrow \var{pos}. $
\end{quote}

Now we should be able to verify that, for example, $\var{s}\
(\var{s}\ \var{z}) \checks \var{even}$.
To explain how, we analogize with pure canonical LF\@.
Recall that atomic types have the form $a\ \Nspine$ for a type family $a$
and are denoted by
$P$.  Arbitrary types $A$ are either atomic ($P$) or
(dependent) function types ($\api x A B$).  Canonical terms are
then characterized by the rules shown in the left column above.

\begin{figure}
\begin{center}
\begin{tabular*}{\textwidth}{@{\extracolsep{\fill}}c|c}
Canonical LF                    &   LF with Refinements \\
\hline
  \begin{minipage}{0.419\textwidth}
    \begin{mathpar}
    \inferrule{\jchecktermtype {\Gamma, \decl x A} N B}
              {\jchecktermtype \Gamma {\mlam x N} {\api x A B}}
    \and
    \inferrule{\jsynthtermtype \Gamma R {P'} \\ P' = P}
              {\jchecktermtype \Gamma R P}
    \end{mathpar}
    \hrule
    \begin{mathpar}
    \inferrule{\decl x A \in \Gamma}
              {\jsynthtermtype \Gamma x A}
    \and
    \inferrule{\decl c A \in \Sigma}
              {\jsynthtermtype \Gamma c A}
    \and
    \inferrule{\jsynthtermtype \Gamma R {\api x A B}
            \\ \jchecktermtype \Gamma N A
              }
              {\jsynthtermtype \Gamma {\mapp R N} {\hsubst {} N x A B}}
    \end{mathpar}
  \end{minipage}
    &
    \begin{minipage}{0.53\textwidth}
      \begin{mathpar}
        \namedrule{\jchecktermtype {\Gamma, \rdecl x S A} N T}
                  {\jchecktermtype \Gamma {\mlam x N} {\sspi x S A T}}
                  {$\Pi$-I}
        \and
        \namedrule{\jsynthtermtype \Gamma R {Q'} \\ Q' \subtype Q}
                  {\jchecktermtype \Gamma R Q}
                  {switch}
      \end{mathpar}
      \hrule
      \begin{mathpar}
        \namedrule{\rdecl x S A \in \Gamma}
                  {\jsynthtermtype \Gamma x S}
                  {var}
        \and
        \namedrule{c :: S \in \Sigma}
                  {\jsynthtermtype \Gamma c S}
                  {const}
        \and
        \namedrule{\jsynthtermtype \Gamma R {\sspi x S A T}
                \\ \jchecktermtype \Gamma N S
                  }
                  {\jsynthtermtype \Gamma {\mapp R N} {\hsubst {} N x A T}}
                  {$\Pi$-E}
      \end{mathpar}
    \end{minipage}
\end{tabular*}
\end{center}
\end{figure}

There are two typing judgments, $N \checks A$ which means that
$N$ checks against $A$ (both given) and $R \synths A$ which
means that $R$ synthesizes type $A$ ($R$ given as input, $A$ produced as
output).  Both take place in a
context $\Gamma$ assigning types to variables.
To force terms to be $\eta$-long, the rule for checking an atomic term $R$
only checks it at an atomic type $P$.  It does so by synthesizing a type
$P'$ and comparing it to the given type $P$.  In canonical LF, all types
are already canonical, so this comparison is just $\alpha$-equality.

On the right-hand side we have shown the corresponding rules for
sorts.  First, note that the format of the context $\Gamma$ is slightly
different, because it declares sorts for variables, not just types.
The rules for functions and applications are straightforward analogues
to the rules in ordinary LF\@.  The rule \rulename{switch} for checking
atomic terms $R$ at atomic sorts $Q$ replaces the equality check with a
subsorting check and is the only place where we appeal to subsorting
(defined below).  For applications, we use the type $A$ that refines the
type $S$ as the index parameter of the hereditary substitution.

Subsorting is exceedingly simple: it only needs to be defined
on atomic sorts, and is just the reflexive and transitive closure
of the declared subsorting relationship.
\begin{mathpar}
\inferrule{\subdecl{s_1}{s_2} \in \Sigma}
          {\sapp {s_1} \Nspine \subtype \sapp {s_2} \Nspine}
\and
\inferaxiom {Q \subtype Q}
\and
\inferrule{Q_1 \subtype Q'
        \\ Q' \subtype Q_2}
          {Q_1 \subtype Q_2}
\end{mathpar}

\noindent The sorting rules do not yet treat intersections.  In line
with the general bidirectional nature of the system,
the introduction rules are part of the
\emph{checking} judgment, and the elimination rules are
part of the \emph{synthesis} judgment.
Binary intersection $\sinter {S_1} {S_2}$ has one introduction and two
eliminations, while nullary intersection $\sstop$ has just one introduction.
\begin{mathpar}
\and
\namedrule{\jchecktermtype \Gamma N {S_1} \\ \jchecktermtype \Gamma N {S_2}}
          {\jchecktermtype \Gamma N {\sinter {S_1} {S_2}}}
          {$\intersect$-I}
\and
\namedaxiom{\jchecktermtype \Gamma N \sstop}
           {$\sstop$-I}
\and
\\
\namedrule{\jsynthtermtype \Gamma R {\sinter {S_1} {S_2}}}
          {\jsynthtermtype \Gamma R {S_1}}
          {$\intersect$-E$_1$}
\and
\namedrule{\jsynthtermtype \Gamma R {\sinter {S_1} {S_2}}}
          {\jsynthtermtype \Gamma R {S_2}}
          {$\intersect$-E$_2$}
\end{mathpar}
Note that although (canonical forms-style) LF type synthesis is unique, LFR
sort synthesis is not, due to the intersection elimination rules.

Now we can see how these rules generate a deduction of
$\var{s}\ (\var{s}\ \var{z}) \checks \var{even}$.
The context is always empty and therefore omitted.
To save space, we abbreviate $\var{even}$ as \even, $\var{odd}$ as \odd,
and
$\var{pos}$ as \pos, 
and we omit reflexive uses of subsorting.
\begin{mathpar}
\infertree{
    \infertree{
         \infertree{
          \infertree{
           \infertree{}{
           \jsynthtermtype {} s {\sinter {\aarrow \even \odd}
                                         {(\sinter {\aarrow \odd \even}
                                                   {\aarrow \sstop \pos})}}
           }
          }
          {\jsynthtermtype {} s {\sinter {\aarrow \odd \even}
                                        {\aarrow \sstop \pos}}}
         }
         {\jsynthtermtype {} s {\aarrow \odd \even}}
        &
         \infertree{
          \infertree{
           \infertree{
            \infertree{}{
            \jsynthtermtype {} s {\sinter {\aarrow \even \odd}
                                          {(\ldots)}}
            }
           }
           {\jsynthtermtype {} s {\aarrow \even \odd}}
           &
           \infertree{
            \infertree{}{\jsynthtermtype {} z \even}
           }
           {\jchecktermtype {} z \even}
          }
          {\jsynthtermtype {} {s\ z} \odd}
         }
         {\jchecktermtype {} {s\ z} \odd}
        }
        {\jsynthtermtype {} {s\ (s\ z)} \even}
   }
   {\jchecktermtype {} {s\ (s\ z)} \even}
\end{mathpar}

\noindent Using the \rulename{$\intersect$-I} rule, we can check that $\var{s}\
\var{z}$ is both odd and positive:
\begin{mathpar}
    \infer{\jchecktermtype {} {s\ z} {\sinter \odd \pos}}
        { \deduce{\jchecktermtype {} {s\ z} \odd}{\vdots}
        & \deduce{\jchecktermtype {} {s\ z} \pos}{\vdots} }
\end{mathpar}
Each remaining subgoal now proceeds similarly to the above example.

To illustrate the use of sorts with non-trivial type \emph{families},
consider the definition of the $\var{double}$ relation in LF\@.  We declare
a type family representing the doubling judgment and populate it with two
proof rules.
\begin{quote}
$ \var{double} : \var{nat} \arrow \var{nat} \arrow \ktype. $ \\
$ \var{dbl/z} : \var{double}\ \var{z}\ \var{z}. $ \\
$ \var{dbl/s} : \api X {\var{nat}} {\api Y {\var{nat}} {\var{double}\ X\ Y}
       \arrow \var{double}\ (\var{s}\ X)\ (\var{s}\ (\var{s}\ Y))}. $
\end{quote}
With sorts, we can now directly express the property that the second
argument to $\var{double}$ must be even.  But to do so, we require a notion
analogous to \emph{kinds} that may contain sort information.  We call these
\emph{classes} and denote them by $L$.
\begin{align*}
K &::= \ktype
    \mid \kpi x A K
  & \text{kinds}
\\
L &::= \csort
    \mid \ccpi x S A L
    \mid \ctop
    \mid \cinter {L_1} {L_2}
  & \text{classes}
\end{align*}
Classes $L$ mirror kinds $K$, and they have a refinement relation $L
\refines K$ similar to $S \refines A$.  (We elide the rules here, but they
are included in Appendix~\ref{app:lfr-rules}.)
Now, the general form of the $s \refines a$ declaration is $s \refines a ::
L$, where $a : K$ and $L \refines K$; this declares sort constant $s$ to
refine type constant $a$ and to have class $L$.

For now, we reuse the type name $\var{double}$ as a sort, as no ambiguity
can result.  As before, we use $\sstop$ to represent a $\var{nat}$ with no
additional restrictions.
\begin{quote}
$ \var{double} \refines \var{double} ::
        \sstop \arrow \var{even} \arrow \csort. $ \\
$ \var{dbl/z} :: \var{double}\ \var{z}\ \var{z}. $ \\
$ \var{dbl/s} :: \spi X \sstop {\spi Y {\var{even}} {\var{double}\ X\ Y}
        \arrow \var{double}\ (\var{s}\ X)\ (\var{s}\ (\var{s}\ Y))}. $
\end{quote}

\noindent After these declarations, it would be a static \emph{sort error}
to pose a query such as 
\begin{center}
``$\texttt{?-}\ \var{double}\ X\ (\var{s}\ 
(\var{s}\ (\var{s}\ \var{z}))).$''
\end{center}
before any search is ever attempted.  In LF, queries like this
could fail after a long search or even not
terminate, depending on the search strategy.
One of the
important motivations for considering sorts for LF is
to avoid uncontrolled search in favor of decidable
static properties whenever possible.

The tradeoff for such precision is that now
sort checking itself is non-deterministic and has to
perform search because of the choice between the two intersection
elimination rules.  
As Reynolds has shown, this non-determinism causes intersection type
checking to be PSPACE-hard \cite{REYNOLDS96}, even for normal terms
as we have here \cite{REYNOLDS89B}.  Using techniques such as
focusing, we believe that for practical cases they can be analyzed
efficiently for the purpose of sort checking.\footnote{The present paper
concentrates primarily on decidability, though, not efficiency.}

\subsection{A Second Example: The \texorpdfstring{$\lambda$}{lambda}-Calculus}

As a second example, we use an intrinsically typed version of the
call-by-value simply-typed $\lambda$-calculus.  This means every object language
expression is indexed by its object language type.  We use sorts to
distinguish the set of \emph{values} from the set of arbitrary
\emph{computations}.  While this can be encoded in LF in a variety
of ways, it is significantly more cumbersome.

\newcommand\RA{\Mapsto}
\newcommand\ra{\arrow}
\begin{quote}
\begin{tabbing}
$ \var{tp} : \ktype. $    \hspace{6em} \= \%\ the type of object language
                                               types \\
$ \RA\ : \var{tp} \ra \var{tp} \ra \var{tp}. $
                                        \> \%\ object language function space \\
$ \textbf{\%infix}\ \var{right}\ 10\ \RA. $ \\
 \\
$ \var{exp} : \var{tp} \ra \ktype. $    \> \%\ the type of expressions \\
$ \var{cmp} \refines \var{exp}. $       \> \%\ the sort of computations \\
$ \var{val} \refines \var{exp}. $       \> \%\ the sort of values \\
 \\
$ \var{val} \subtype \var{cmp}. $       \> \%\ every value is a (trivial)
                                               computation \\
 \\
$ \var{lam} :: (\var{val}\ A \ra \var{cmp}\ B) \ra \var{val}\ (A \RA B). $ \\
$ \var{app} :: \var{cmp}\ (A \RA B) \ra \var{cmp}\ A \ra \var{cmp}\ B. $
\end{tabbing}
\end{quote}

\noindent In the last two declarations, we follow Twelf convention and
leave the quantification over $A$ and $B$ implicit, to be inferred by
type reconstruction.  Also, we did not explicitly declare a type for
$\var{lam}$ and $\var{app}$.  We posit a front end that can recover
this information from the refinement declarations for $\var{val}$ and
$\var{cmp}$, avoiding redundancy.

The most interesting declaration is the one for the constant $\var{lam}$.
The argument type $(\var{val}\ A \arrow \var{cmp}\ B)$ indicates that
$\var{lam}$ binds a variable which stands for a value
of type $A$ and the body is an arbitrary computation of type $B$.  The result
type $\var{val}\ (A \RA B)$ indicates that any $\lambda$-abstraction is a
value.  Now we have, for example (parametrically in $A$ and
$B$):
$\jchecktermtype {\rdecl A \sstop {\var{tp}}, \rdecl B \sstop {\var{tp}}}
                    {\var{lam}\ \mlam x \var{lam}\ \mlam y x}
                    {\var{val}\ (A \RA (B \RA A))}$.

Now we can express that evaluation must always returns a value.
Since the declarations below are intended to represent a logic program,
we follow the logic programming convention of reversing the arrows in the
declaration of $\var{ev-app}$.
\begin{quote}
\begin{tabbing}
$ \var{eval}
    :: \var{cmp}\ A \arrow \var{val}\ A \arrow \csort. $ \\
$ \var{ev-lam}
    :: \var{eval}\ (\var{lam}\ \mlam x {E\ x})\ (\var{lam}\ \mlam x {E\ x}). $\\
$ \var{ev-app}
      :: $ \= $ \var{eval}\ (\var{app}\ {E_1}\ {E_2})\ V $ \\
      \> $ \leftarrow \var{eval}\ E_1\ (\var{lam}\ \mlam x {E_1'\ x}) $ \\
      \> $ \leftarrow \var{eval}\ E_2\ V_2 $ \\
      \> $ \leftarrow \var{eval}\ (E_1'\ V_2)\ V. $
\end{tabbing}
\end{quote}
Sort checking the above declarations demonstrates that when evaluation
returns at all, it returns a syntactic value.
%
%
Moreover, if sort reconstruction gives $E_1'$ the ``most general'' sort
$\var{val}\ A \arrow \var{cmp}\ B$,
the declarations also
ensure that the language is indeed call-by-value: it would
be a sort error to ever substitute a computation for
a $\var{lam}$-bound variable, for example, by evaluating
$(E_1'\ E_2)$ instead of $(E_1'\ V_2)$ in the $\var{ev-app}$ rule.
An interesting question for future work is whether type reconstruction can
always find such a ``most general'' sort for implicitly quantified
metavariables.

A side note: through the use of
sort families indexed by object language types,
the sort checking not only guarantees that the language
is call-by-value and that evaluation, if it succeeds, will
always return a value, but also that the object language
type of the result remains the same (type preservation).

\section{Metatheory}
\label{sect:metatheory}

\noindent In this section, we present some metatheoretic results about our
framework.  These follow a similar pattern as previous work using
hereditary substitutions \cite{Watkins02tr,Nanevski07tocl,hl07mechanizing}.
We give sketches of all proofs.  Technically tricky proofs are available
from a companion technical report \cite{Lovas08tr}.

\subsection{Hereditary Substitution}

Recall that we replace ordinary capture-avoiding substitution with
\emph{hereditary substitution}, $\hsubst {} N x A$, an operation which
substitutes a normal term into a canonical form yielding another canonical
form, contracting redexes ``in-line''.  The
operation is indexed by the putative type of $N$ and $x$ to facilitate a
proof of termination.
In fact, the type index on hereditary substitution need only be
a simple type to ensure termination.  To that end, we denote simple types
by $\alpha$ and define an erasure to simple types $\erasedeps A$.
\begin{align*}
\alpha &::= a \mid \aarrow {\alpha_1} {\alpha_2} 
&\erasedeps {\aapp a \Nspine} &= a 
&\erasedeps {\api x A B} &= {\aarrow {\erasedeps A} {\erasedeps B}}
\end{align*}
For clarity, we also index hereditary substitutions
by the syntactic category on which they operate, so for example we have
$\jhsubst n N x A M {M'}$ and $\jhsubst s N x A S {S'}$;
Table~\ref{tab:hsubst} lists all of the judgments defining substitution.
We write $\jhsubst n N x A M {M'}$ as short-hand for $\jhsubst n N x
{\erasedeps A} M {M'}$.

\newlength{\oldarrayrulewidth}
\setlength{\oldarrayrulewidth}{\arrayrulewidth}
\setlength{\arrayrulewidth}{0.2pt}
\setlength{\doublerulesep}{\arrayrulewidth}

\begin{table}
\begin{center}
\begin{tabular*}{0.80\textwidth}%
                {l@{\extracolsep{\fill}}c@{\extracolsep{0pt}}l}
\hline\hline\hline\hline
{\bf Judgment:}                            && {\bf Substitution into:}
                                               \bigstrut \\
\hline\hline\hline\hline
$\jhsubst {rr} {N_0} {x_0} {\alpha_0} R {R'}$
                                            && Atomic terms (yielding atomic)
                                            \bigstrut[t] \\
$\jhsubst {rn} {N_0} {x_0} {\alpha_0} R (N', \alpha')$
                                            && Atomic terms (yielding normal)
                                            \\
$\jhsubst n {N_0} {x_0} {\alpha_0} N {N'}$  && Normal terms \bigstrut[b] \\
\hline
$\jhsubst p {N_0} {x_0} {\alpha_0} P {P'}$  && Atomic types \bigstrut[t] \\
$\jhsubst a {N_0} {x_0} {\alpha_0} A {A'}$  && Normal types \bigstrut[b] \\
\hline
$\jhsubst q {N_0} {x_0} {\alpha_0} Q {Q'}$  && Atomic sorts \bigstrut[t] \\
$\jhsubst s {N_0} {x_0} {\alpha_0} S {S'}$  && Normal sorts \bigstrut[b] \\
\hline
$\jhsubst k {N_0} {x_0} {\alpha_0} K {K'}$  && Kinds    \bigstrut[t] \\
$\jhsubst l {N_0} {x_0} {\alpha_0} L {L'}$  && Classes  \bigstrut[b] \\
\hline
$\jhsubst \gamma {N_0} {x_0} {\alpha_0} \Gamma {\Gamma'}$ && Contexts \bigstrut
\\
\hline\hline\hline\hline
\end{tabular*}
\end{center}
\caption{Judgments defining hereditary substitution.}
\label{tab:hsubst}
\end{table}

\setlength{\arrayrulewidth}{\oldarrayrulewidth}

Our formulation of hereditary substitution is defined judgmentally by
inference rules.  The only place $\beta$-redexes might be introduced is
when substituting a normal term $N$ into an atomic term $R$: $N$ might be a
$\lambda$-abstraction, and the variable being substituted for may occur at
the head of $R$.  Therefore, the judgments defining substitution into
atomic terms are the most interesting ones.

We denote substitution into atomic terms by two judgments: $\jhsubst {rr}
{N_0} {x_0} {\alpha_0} R {R'}$, for when the head of $R$ is \emph{not}
$x_0$, and $\jhsubst {rn} {N_0} {x_0} {\alpha_0} R {(N', \alpha')}$, for when
the head of $R$ \emph{is} $x_0$, where $\alpha'$ is the simple type of the
output $N'$.  The former is just defined
compositionally; the latter is defined by two rules:
\begin{mathpar}
\namedaxiom {\jhsubst {rn} {N_0} {x_0} {\alpha_0} {x_0} {(N_0, \alpha_0)}}
            {subst-rn-var}
\and
\namedrule{\jhsubst {rn} {N_0} {x_0} {\alpha_0} {R_1}
                            {(\mlam x {N_1}, \aarrow {\alpha_2} {\alpha_1})}
        \\\\ \jhsubst n {N_0} {x_0} {\alpha_0} {N_2} {N_2'}
        \\ \jhsubst n {N_2'} x {\alpha_2} {N_1} {N_1'}}
          {\jhsubst {rn} {N_0} {x_0} {\alpha_0} {\mapp {R_1} {N_2}}
                                                {(N_1', \alpha_1)}}
          {subst-rn-$\beta$}
\end{mathpar}
The rule \rulename{subst-rn-var} just returns the substitutend $N_0$ and
its putative type index ${\alpha_0}$.  The rule \rulename{subst-rn-$\beta$}
applies when the result of substituting into the head of an application is
a $\lambda$-abstraction; it avoids creating a redex by hereditarily
substituting into the body of the abstraction.

A simple lemma establishes that these two judgments are mutually
exclusive by examining the head of the input atomic term.
\begin{align*}
\head(x) &= x
& \head(c) &= c
& \head(\mapp R N) &= \head(R)
\end{align*}
\begin{lem} ~
\label{lem:rr-rn-excl}
\begin{enumerate}[\em(1)]
\item If\/ $\jhsubst {rr} {N_0} {x_0} {\alpha_0} R {R'}$,
      then $\head(R) \neq x_0$.
\item If\/ $\jhsubst {rn} {N_0} {x_0} {\alpha_0} R {(N', \alpha')}$,
      then $\head(R) = x_0$.
\end{enumerate}
\end{lem}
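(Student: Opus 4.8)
The plan is to prove both statements simultaneously by structural induction on the derivation of the hereditary substitution judgment, exploiting the fact that the two judgments $\jhsubst{rr}{N_0}{x_0}{\alpha_0}{R}{R'}$ and $\jhsubst{rn}{N_0}{x_0}{\alpha_0}{R}{(N',\alpha')}$ track precisely the distinction between an atomic term whose head is $x_0$ and one whose head is not. Since $\head$ is defined by recursion that ignores the argument and descends only into the applied atomic term, the key observation is that $\head(R)$ is stable under the structure of $R$, so each inference rule for the substitution judgments can be matched against the corresponding clause of $\head$.

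\emph{First} I would handle the $rr$ judgment (part~1). The text says this judgment is ``defined compositionally,'' so I expect its rules to cover the three head cases: a variable $x \neq x_0$ (base case, where $\head(R) = x \neq x_0$ holds by the side condition that distinguishes $rr$ from \rulename{subst-rn-var}), a constant $c$ (base case, $\head(c) = c \neq x_0$ since $c$ is a constant and $x_0$ a variable), and an application $\mapp{R_1}{N_2}$ (inductive case). In the application case, the compositional rule recurses on $R_1$ using the $rr$ judgment, so the induction hypothesis gives $\head(R_1) \neq x_0$, and since $\head(\mapp{R_1}{N_2}) = \head(R_1)$ the conclusion follows immediately.

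\emph{Next} I would handle the $rn$ judgment (part~2), which has exactly the two rules displayed in the excerpt. For \rulename{subst-rn-var}, the input term is literally $x_0$, so $\head(x_0) = x_0$ by definition. For \rulename{subst-rn-$\beta$}, the input is an application $\mapp{R_1}{N_2}$ whose first premise is an $rn$ substitution on $R_1$; the induction hypothesis yields $\head(R_1) = x_0$, and again $\head(\mapp{R_1}{N_2}) = \head(R_1) = x_0$ closes the case.

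\emph{The main obstacle} is essentially bookkeeping rather than mathematical depth: I must be careful that the ``compositional'' definition of the $rr$ judgment really does impose the head-distinctness side condition in its variable base case, rather than silently overlapping with \rulename{subst-rn-var}. If the two judgments were allowed to overlap on a variable head, the lemma would fail, so the proof implicitly relies on the variable rule for $rr$ carrying the hypothesis $x \neq x_0$ (paralleling how \rulename{subst-rn-var} fires exactly when the head \emph{is} $x_0$). Granting that the rule set is designed this way --- which is exactly the design the lemma is meant to certify --- the induction goes through cleanly with no substantive case analysis beyond the three head shapes.
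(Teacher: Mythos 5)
Your proposal is correct and follows essentially the same argument as the paper, which proves the lemma by induction on the given derivation; your case analysis (the $x \neq x_0$ side condition and constant rule for the ``rr'' judgment, \rulename{subst-rn-var} and \rulename{subst-rn-$\beta$} for the ``rn'' judgment, with $\head(\mapp{R_1}{N_2}) = \head(R_1)$ driving the inductive cases) is exactly the elaboration of that induction, and your worry about the $rr$ variable rule is resolved precisely as you guessed: the rule carries the premise $x \neq x_0$.
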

\begin{proof}
By induction on the given derivation.
\end{proof}

\noindent Substitution into normal terms has two rules for atomic
terms $R$, one which calls the ``rr'' judgment and one which calls the
``{rn}'' judgment.
\begin{mathpar}
\namedrule{\jhsubst {rr} {N_0} {x_0} {\alpha_0} R {R'}}
          {\jhsubst n {N_0} {x_0} {\alpha_0} R {R'}}
          {subst-n-atom}
\and
\namedrule{\jhsubst {rn} {N_0} {x_0} {\alpha_0} R {(R', a')}}
          {\jhsubst n {N_0} {x_0} {\alpha_0} R {R'}}
          {subst-n-atom-norm}
\end{mathpar}
Note that the latter rule requires both the term and the type returned by
the ``{rn}'' judgment to be atomic.

Every other syntactic category's substitution judgment is defined
compositionally, tacitly renaming bound variables to avoid capture.  For
example, the remaining rule defining substitution into normal terms, the
rule for substituting into a $\lambda$-abstraction, just recurses on the
body of the abstraction.
\begin{mathpar}
\inferrule{\jhsubst n {N_0} {x_0} {\alpha_0} N {N'}}
          {\jhsubst n {N_0} {x_0} {\alpha_0} {\mlam x N} {\mlam x {N'}}}
\end{mathpar}


\noindent Although we have only defined hereditary substitution
relationally, it is easy to show that it is in fact a partial function
by proving that there only ever exists one ``output'' for a given set
of ``inputs''.

\begin{thm}[Functionality of Substitution]
\label{thm:functionality}
Hereditary substitution is a functional relation.  In particular:
\begin{enumerate}[\em(1)]
    \item If\/ $\jhsubst {rr} {N_0} {x_0} {\alpha_0} R {R_1}$
          and $\jhsubst {rr} {N_0} {x_0} {\alpha_0} R {R_2}$,
          then ${R_1} = {R_2}$,
    \item If\/ $\jhsubst {rn} {N_0} {x_0} {\alpha_0} R {({N_1}, {\alpha_1})}$
          and $\jhsubst {rn} {N_0} {x_0} {\alpha_0} R {({N_2}, {\alpha_2})}$,
          then ${N_1} = {N_2}$ and ${\alpha_1} = {\alpha_2}$,
    \item If\/ $\jhsubst n {N_0} {x_0} {\alpha_0} N {N_1}$
          and $\jhsubst n {N_0} {x_0} {\alpha_0} N {N_2}$,
          then ${N_1} = {N_2}$,
\end{enumerate}
and similarly for other syntactic categories.
\end{thm}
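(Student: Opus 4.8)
The plan is to proceed by \emph{simultaneous induction on the structure of one of the two given derivations}. Since the substitution judgments for the various syntactic categories (the ``rr'', ``rn'', ``n'', ``p'', ``a'', and so on) are mutually recursive, all the statements listed must be proved together in a single induction. I would fix the inputs $N_0$, $x_0$, $\alpha_0$ and the object being substituted into, let $\mathcal{D}_1$ and $\mathcal{D}_2$ be two derivations sharing these inputs, and induct on the structure of $\mathcal{D}_1$. Stating the induction hypothesis uniformly across all categories at once is essential, because the rule for, e.g., substitution into an application appeals to substitution into both a smaller atomic term and a normal term.

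For most cases the argument is routine: the rules defining substitution into each category are syntax-directed by the object being substituted into, so inversion forces $\mathcal{D}_2$ to conclude by the same rule, the induction hypothesis applied to corresponding premises yields equality of the intermediate outputs, and equality of the final output follows immediately. The one place where syntax-directedness needs an auxiliary argument is substitution into a \emph{normal} term whose body is an atomic term $R$, where either \rulename{subst-n-atom} (via the ``rr'' judgment) or \rulename{subst-n-atom-norm} (via the ``rn'' judgment) might in principle apply. Here I would appeal to Lemma~\ref{lem:rr-rn-excl} to rule out overlap: the ``rr'' premise forces $\head(R) \neq x_0$ while the ``rn'' premise forces $\head(R) = x_0$, so $\mathcal{D}_1$ and $\mathcal{D}_2$ must invoke the same rule.

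The interesting case, and the main obstacle, is \rulename{subst-rn-$\beta$}, where the input atomic term is $\mapp {R_1} {N_2}$. Its third premise hereditarily substitutes into the body $N_1$ of the $\lambda$-abstraction \emph{produced by} the first premise, using the new substituend $N_2'$, variable $x$, and type index $\alpha_2$ --- all of which arise from the \emph{outputs} of the earlier premises rather than from the original inputs. Thus the object being substituted into in this premise is not a subterm of the overall input, so a naive induction on the input term would break down precisely here. The reason induction on the \emph{derivation} succeeds is that this third premise is nonetheless a genuine subderivation of $\mathcal{D}_1$, so the induction hypothesis applies to it no matter what its inputs are. I would exploit this as follows. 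First, apply the induction hypothesis to the first premises of $\mathcal{D}_1$ and $\mathcal{D}_2$ to conclude their $\lambda$-outputs agree, so that $N_1$, $\alpha_2$, and $\alpha_1$ are common to both derivations; next, apply it to the second premises to conclude $N_2'$ agrees. At this point the inputs to the two third premises coincide, so a final application of the induction hypothesis to the third premise of $\mathcal{D}_1$ against that of $\mathcal{D}_2$ yields $N_1'$ uniquely, which gives equality of the conclusions and completes the case.
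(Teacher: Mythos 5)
Your proposal is correct and takes essentially the same approach as the paper: induction on the first derivation with inversion applied to the second, using Lemma~\ref{lem:rr-rn-excl} precisely where the paper does, namely to show that in the \rulename{subst-n-atom}/\rulename{subst-n-atom-norm} overlap both derivations must end with the same rule. Your careful treatment of the \rulename{subst-rn-$\beta$} case---observing that the third premise is a subderivation even though its subject is not a subterm, so induction on the derivation rather than the term is what makes the argument go through---is a correct elaboration of a detail the paper's proof leaves implicit.
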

\begin{proof}
Straightforward induction on the first derivation, applying inversion to
the second derivation.  The cases for rules \rulename{subst-n-atom} and
\rulename{subst-n-atom-norm} require Lemma~\ref{lem:rr-rn-excl} to show
that the second derivation ends with the same rule as the first one.
\end{proof}

\noindent Additionally, it is worth noting that hereditary substitution 
behaves just like ``ordinary'' substitution on terms that do not contain the
distinguished free variable.
\begin{thm}[Trivial Substitution]
Hereditary substitution for a non-occurring variable has no effect.
    \begin{enumerate}[\em(1)]
    \item If\/ ${x_0} \not\in \FV(R)$,
        then $\jhsubst {rr} {N_0} {x_0} {\alpha_0} R R$,
    \item If\/ ${x_0} \not\in \FV(N)$,
        then $\jhsubst n {N_0} {x_0} {\alpha_0} N N$,
    \end{enumerate}
and similarly for other syntactic categories.
\end{thm}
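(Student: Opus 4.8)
The plan is to prove all the stated implications simultaneously by structural induction on the input term (respectively type, sort, kind, class, or context), since the substitution judgments for the various syntactic categories are mutually recursive. The guiding observation is that the hypothesis $x_0 \notin \FV(R)$ guarantees $\head(R) \neq x_0$, so the only non-compositional rules, \rulename{subst-rn-var} and \rulename{subst-rn-$\beta$}, are never invoked; every case can be discharged using a purely compositional rule together with the induction hypotheses on proper subterms. In this sense the argument is entirely about staying within the ``$\mathrm{rr}$'' world, where hereditary substitution degenerates to ordinary substitution.

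For atomic terms the base cases are immediate. A constant $c$ has no free variables, and its compositional $\mathrm{rr}$ clause returns it unchanged; a variable $x$ with $x_0 \notin \FV(x)$ satisfies $x \neq x_0$, so the $\mathrm{rr}$ clause for variables applies and returns $x$. For an application $\mapp {R_1} {N_2}$, the hypothesis splits into $x_0 \notin \FV(R_1)$ and $x_0 \notin \FV(N_2)$; applying part~(1) of the induction hypothesis to $R_1$ and part~(2) to $N_2$ yields $\jhsubst {rr} {N_0} {x_0} {\alpha_0} {R_1} {R_1}$ and $\jhsubst n {N_0} {x_0} {\alpha_0} {N_2} {N_2}$, which the compositional $\mathrm{rr}$ rule for applications recombines into $\jhsubst {rr} {N_0} {x_0} {\alpha_0} {\mapp {R_1} {N_2}} {\mapp {R_1} {N_2}}$.

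For normal terms there are two cases. When $N$ is atomic, part~(1) of the induction hypothesis supplies $\jhsubst {rr} {N_0} {x_0} {\alpha_0} R R$, and rule \rulename{subst-n-atom} lifts it to $\jhsubst n {N_0} {x_0} {\alpha_0} R R$. When $N = \mlam x {N'}$, the usual variable convention lets us assume $x \neq x_0$ and $x \notin \FV(N_0)$, so that $x_0 \notin \FV(N')$; the induction hypothesis on $N'$ together with the compositional rule for $\lambda$ then yields the result. The remaining categories---types, sorts, kinds, classes, and contexts---are defined compositionally with constants (vacuously free-variable-free) as their only base cases, so each of their cases follows the same template: inherit the non-occurrence hypothesis to every subcomponent, apply the induction hypothesis, and reassemble with the corresponding compositional rule.

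The argument is essentially routine; the single point deserving care is the atomic-term case of the normal-term judgment, where one must check that the applicable branch is \rulename{subst-n-atom} rather than \rulename{subst-n-atom-norm}. This is exactly where the free-variable hypothesis earns its keep, via $x_0 \notin \FV(R) \Rightarrow \head(R) \neq x_0$ (cf.\ Lemma~\ref{lem:rr-rn-excl}), which rules out the ``$\mathrm{rn}$'' derivation entirely. Beyond that, the $\lambda$-case depends only on the standard renaming convention to preserve the non-occurrence hypothesis when descending under the binder.
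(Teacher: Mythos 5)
Your proof is correct and takes essentially the same approach as the paper, which simply states ``straightforward induction on term structure''; your write-up fills in exactly the details that makes that induction go through (the $x \neq x_0$ base case, the split of the non-occurrence hypothesis at applications, the variable convention under binders). One small remark: since the theorem is an existence claim, in the atomic case of the normal-term judgment you may simply apply \rulename{subst-n-atom} once the induction hypothesis supplies the ``rr'' derivation---no appeal to the exclusivity of Lemma~\ref{lem:rr-rn-excl} is actually required, though invoking it is harmless.
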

\begin{proof}
Straightforward induction on term structure.
\end{proof}

\subsection{Decidability}
\label{sect:decidability}

A hallmark of the canonical forms/hereditary substitution
approach is that it allows a decidability proof to be carried out
comparatively early, before proving anything about the behavior of
substitution, and without dealing with any complications introduced by
$\beta$/$\eta$-conversions inside types.  Ordinarily in a dependently typed
calculus, one must first prove a substitution theorem before proving
typechecking decidable,
since typechecking relies on type equality, type
equality relies on $\beta$/$\eta$-conversion, and
$\beta$/$\eta$-conversions rely on substitution preserving well-formedness.
(See for example \cite{Harper05tocl} for a typical non-canonical forms-style
account of LF definitional equality.)

In contrast, if only canonical forms are permitted, then type equality is
just $\alpha$-con\-vert\-i\-bil\-i\-ty, so one only needs to show \emph{decidability}
of substitution in order to show decidability of typechecking.  Since LF
encodings represent judgments as type families and proof-checking as
typechecking, it is comforting to have a decidability proof that relies on
so few assumptions.

\begin{lem}
If\/ $\jhsubst {rn} {N_0} {x_0} {\alpha_0} R {(N', \alpha')}$,
then $\alpha'$ is a subterm of $\alpha_0$.
\label{lem:subst-rn-alpha-subterm}
\end{lem}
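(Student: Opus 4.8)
The plan is to argue by induction on the derivation of $\jhsubst {rn} {N_0} {x_0} {\alpha_0} R {(N', \alpha')}$. The ``rn'' judgment is defined by exactly the two rules \rulename{subst-rn-var} and \rulename{subst-rn-$\beta$}, and the subterm relation on simple types is reflexive and transitive, so the whole argument should be brief: I only need to check that neither rule ever produces an output type lying outside $\alpha_0$.

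If the derivation ends with \rulename{subst-rn-var}, then $R = x_0$ and the output is $(N_0, \alpha_0)$, so $\alpha' = \alpha_0$, which is a subterm of $\alpha_0$ by reflexivity. If instead it ends with \rulename{subst-rn-$\beta$}, then $R = \mapp {R_1} {N_2}$ and $\alpha' = \alpha_1$, where the recursive ``rn'' premise is $\jhsubst {rn} {N_0} {x_0} {\alpha_0} {R_1} {(\mlam x {N_1}, \aarrow {\alpha_2} {\alpha_1})}$. I would apply the induction hypothesis to this premise to obtain that $\aarrow {\alpha_2} {\alpha_1}$ is a subterm of $\alpha_0$. Since $\alpha_1$ is the codomain of $\aarrow {\alpha_2} {\alpha_1}$ and hence an immediate subterm of it, transitivity yields that $\alpha_1$ is a subterm of $\alpha_0$, which is exactly what is needed.

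I do not expect a genuine obstacle, since the lemma holds for a structural reason: the only rule that alters the carried simple type is the $\beta$ rule, and it replaces a function type by its codomain, which strictly descends within $\alpha_0$. The one subtlety worth flagging is the bookkeeping about which premise to recurse on: the two remaining premises of \rulename{subst-rn-$\beta$} are ``n'' substitutions and carry no relevant type output, so the induction hypothesis must be invoked specifically on the ``rn'' premise, whose output type $\aarrow {\alpha_2} {\alpha_1}$ is what determines $\alpha'$.
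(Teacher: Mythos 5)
Your proof is correct and matches the paper's own argument exactly: induction on the derivation, with $\alpha' = \alpha_0$ in the \rulename{subst-rn-var} case, and in the \rulename{subst-rn-$\beta$} case the induction hypothesis applied to the ``rn'' premise giving that $\aarrow{\alpha_2}{\alpha_1}$ is a subterm of $\alpha_0$, hence so is $\alpha_1$. Your added remarks on reflexivity/transitivity of the subterm relation and on recursing only on the ``rn'' premise are just explicit bookkeeping of what the paper leaves implicit.
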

\begin{proof}
By induction on the derivation of $\jhsubst {rn} {N_0} {x_0} {\alpha_0} R
{(N', \alpha')}$.  In rule \rulename{subst-rn-var}, $\alpha'$ is the same
as $\alpha_0$.  In rule \rulename{subst-rn-$\beta$}, our inductive
hypothesis tells us that $\aarrow {\alpha_2} {\alpha_1}$ is a subterm of
$\alpha_0$, so $\alpha_1$ is as well.
\end{proof}

\noindent By working in a constructive metalogic, we are able to prove
decidability of a judgment by proving an instance of the law of the
excluded middle; the computational content of the proof then
represents a decision procedure.

\begin{thm}[Decidability of Substitution]
Hereditary substitution is decidable.  In particular:
    \begin{enumerate}[\em(1)]
     \item Given $N_0$, $x_0$, $\alpha_0$, and $R$, either
         $\exists R' \bdot \jhsubst {rr} {N_0} {x_0} {\alpha_0} R {R'}$, or 
         $\not\exists R' \bdot \jhsubst {rr} {N_0} {x_0} {\alpha_0} R {R'}$,
     \item Given $N_0$, $x_0$, $\alpha_0$, and $R$, either
         $\exists (N', \alpha') \bdot \jhsubst {rn} {N_0} {x_0} {\alpha_0}
         R {(N', \alpha')}$, or \\
         $\not\exists (N', \alpha') \bdot \jhsubst {rn} {N_0} {x_0} {\alpha_0}
         R {(N', \alpha')}$,
     \item Given $N_0$, $x_0$, $\alpha_0$, and $N$, either
        $\exists N' \bdot \jhsubst n {N_0} {x_0} {\alpha_0} N {N'}$, or
        $\not\exists N' \bdot \jhsubst n {N_0} {x_0} {\alpha_0} N {N'}$,
    \end{enumerate}
    and similarly for other syntactic categories.
\end{thm}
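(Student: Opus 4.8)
The plan is to prove all clauses simultaneously by a lexicographic induction whose primary component is the type index $\alpha_0$ ordered by the proper-subterm relation, and whose secondary component is the structure of the term (or type, sort, kind, class, or context) being substituted into. The reason the type index must dominate is the rule \rulename{subst-rn-$\beta$}: its third premise $\jhsubst n {N_2'} x {\alpha_2} {N_1} {N_1'}$ substitutes into $N_1$, which is produced as an output of the first premise and bears no structural relation to the input $\mapp {R_1} {N_2}$, so no induction on the term alone could justify it. What does decrease there is the type index: by Lemma~\ref{lem:subst-rn-alpha-subterm} the output type $\aarrow {\alpha_2} {\alpha_1}$ of the first premise is a subterm of $\alpha_0$, hence $\alpha_2$ is a \emph{proper} subterm of $\alpha_0$, and the recursive call sits at a strictly smaller primary measure. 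Every other recursive call keeps $\alpha_0$ fixed while passing to a strict subterm of the input, so the secondary component decreases.

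Given the measure, I would proceed by case analysis on the form of the input in each clause, exploiting the fact that the defining rules are essentially syntax-directed. For clause (1), substitution into atomic terms via ``rr'': constants and variables $y \neq x_0$ yield themselves; the input $x_0$ has no ``rr'' output, since its head is $x_0$, forbidden by Lemma~\ref{lem:rr-rn-excl}; and for $\mapp {R_1} {N_2}$ I recursively decide the ``rr'' substitution into $R_1$ and the ``n'' substitution into $N_2$, assembling an output when both exist and reporting none otherwise. For clause (2), ``rn'': the input $x_0$ always yields $(N_0, \alpha_0)$; a constant or $y \neq x_0$ has no ``rn'' output; and for $\mapp {R_1} {N_2}$ I first decide the ``rn'' substitution into $R_1$, then inspect its output $(M, \alpha')$, which is unique by Theorem~\ref{thm:functionality}, checking whether $M$ is a $\lambda$-abstraction $\mlam x {N_1}$ and $\alpha'$ an arrow $\aarrow {\alpha_2} {\alpha_1}$. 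Only when both hold does \rulename{subst-rn-$\beta$} apply, at which point I decide the second premise to obtain $N_2'$ and then, invoking the induction hypothesis at the strictly smaller index $\alpha_2$, decide the hereditary substitution into $N_1$. For clause (3), ``n'' into an atomic $R$, I compute $\head(R)$: when it differs from $x_0$ only \rulename{subst-n-atom} can fire, so I defer to clause (1); when it equals $x_0$ only \rulename{subst-n-atom-norm} can fire, so I defer to clause (2) and additionally check that its output term and output type are both atomic. Substitution into $\mlam x {N_1}$ simply recurses on $N_1$.

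In each case, when all the relevant premises are decided to have outputs I exhibit the concluding output, and when any required premise is decided to have \emph{no} output, or when a shape check fails, I conclude that no output exists, appealing to inversion on the rule forms, which are mutually exclusive by Lemma~\ref{lem:rr-rn-excl}, to rule out any alternative derivation. The remaining syntactic categories --- types, sorts, kinds, classes, and contexts --- are defined purely compositionally and invoke the ``n'' substitution only at their term leaves, so their decidability follows by the same induction with no new difficulties.

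The main obstacle is exactly the \rulename{subst-rn-$\beta$} case, and specifically getting the termination argument right: because its hereditary premise recurs on an output term of unbounded size, the proof hinges on having already established Lemma~\ref{lem:subst-rn-alpha-subterm} so that the type index strictly decreases there, and on organizing the whole induction lexicographically with that index as the dominant measure. Once that measure is in place, every case reduces to a routine decision built from the inductive hypotheses and a few syntactic shape checks.
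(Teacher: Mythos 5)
Your proposal follows essentially the same route as the paper: a lexicographic induction dominated by the type index $\alpha_0$, with Lemma~\ref{lem:subst-rn-alpha-subterm} supplying the strict decrease to $\alpha_2$ in the \rulename{subst-rn-$\beta$} case, and a syntax-directed case analysis with shape checks everywhere else. The one place your argument is imprecise is the termination measure itself. When you decide clause (3) for an atomic $R$ by ``deferring to clause (1)'' or ``deferring to clause (2)'', that appeal is at the \emph{same} type index and the \emph{same} term $R$, so neither component of your stated measure (type index, then term structure) decreases; as written, the mutual induction is not well-founded at exactly that step. The paper closes this hole by making the lexicographic order three-fold---the type subscript, then the term, then the \emph{clause number}---so that the pass from clause (3) to clauses (1) and (2) is licensed by a decrease in clause number, while the reverse dependencies (clauses (1) and (2) at an application invoking clause (3) on $N_2$, or on $N_1$ at type $\alpha_2$) are licensed by a decrease in term or in type. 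Adding this clause-number tie-breaker, with the clauses ordered as they are stated, repairs your proof; everything else you wrote goes through unchanged.
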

\begin{proof}
By lexicographic induction on the type subscript $\alpha_0$, the main
subject of the substitution judgment, and the clause number.  For each
applicable rule defining hereditary substitution, the premises are at a
smaller type subscript, or if the same type subscript, then a smaller term,
or if the same term, then an earlier clause.  The case for rule
\rulename{subst-rn-$\beta$} relies on
Lemma~\ref{lem:subst-rn-alpha-subterm} to know that $\alpha_2$ is a strict
subterm of $\alpha_0$.
\end{proof}

\begin{thm}[Decidability of Subsorting]
Given $Q_1$ and $Q_2$, either $\jchecksubtype \Gamma {Q_1} {Q_2} P$ or
$\jnsubtype {Q_1} {Q_2}$.
\end{thm}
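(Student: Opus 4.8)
The plan is to reduce derivability of $Q_1 \subtype Q_2$ to two manifestly decidable checks: a syntactic comparison of argument spines and a reachability question in a finite directed graph on sort-family constants. Since the three subsorting rules never mention $\Gamma$, the relation is context-independent, and I may work with $Q_1 \subtype Q_2$ directly.

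First I would prove a characterization lemma. Write each atomic sort as $Q_i = \sapp{s_i}{N^i_1 \ldots N^i_{k_i}}$, exposing its head constant $s_i$ and its spine, and let $\preceq$ denote the reflexive--transitive closure of the finite relation $\{(s,s') \mid \subdecl{s}{s'} \in \Sigma\}$ on the sort-family constants of $\Sigma$. The claim is that $Q_1 \subtype Q_2$ is derivable if and only if $Q_1$ and $Q_2$ have identical spines (so $k_1 = k_2$ and $N^1_j = N^2_j$ for every $j$) and $s_1 \preceq s_2$. The ``if'' direction follows by induction on a chain $s_1 = t_0, \ldots, t_n = s_2$ witnessing $s_1 \preceq s_2$: the case $n = 0$ gives $s_1 = s_2$, hence $Q_1 = Q_2$, discharged by the reflexivity axiom, while each step uses the declared-subsorting rule $\sapp{t_i}{\Nspine} \subtype \sapp{t_{i+1}}{\Nspine}$ over the common spine $\Nspine$, with transitivity stitching the chain together. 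The ``only if'' direction is by induction on the derivation of $Q_1 \subtype Q_2$: the reflexivity axiom forces $Q_1 = Q_2$, hence equal spines and $s_1 = s_2 \preceq s_2$; the declared-subsorting rule makes the spines equal by its very form and supplies $\subdecl{s_1}{s_2} \in \Sigma$, hence $s_1 \preceq s_2$; and for transitivity, the two inductive hypotheses yield a common intermediate $Q'$ whose spine agrees with those of both $Q_1$ and $Q_2$ and whose head satisfies $s_1 \preceq \head(Q') \preceq s_2$, so equal spines (being equality) and $\preceq$ (being transitive) both close up.

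Given the lemma, decidability is immediate. Because $\Sigma$ is finite, only finitely many sort constants occur in it, the declared relation is a finite set of pairs, and its reflexive--transitive closure $\preceq$ is decidable as reachability in a finite graph; comparing the spines of $Q_1$ and $Q_2$ is just deciding $\alpha$-equality of finitely many normal terms, which is decidable. The conjunction of these two decidable conditions is decidable, and by the characterization it is equivalent to $Q_1 \subtype Q_2$, so working in a constructive metalogic the resulting procedure is exactly the required instance of excluded middle. The only genuine content is the characterization lemma, and within it the transitivity case of the ``only if'' direction --- showing that routing a subsorting derivation through an intermediate sort cannot alter the spine --- is the crux, though it falls out directly from the spine-preserving shape of the declared-subsorting rule.
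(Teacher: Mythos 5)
Your proposal is correct and takes essentially the same route as the paper: the paper's (one-sentence) proof likewise observes that subsorting is the reflexive--transitive closure of the declared relation on head constants, and decides it by computing that closure, checking the heads are related, and checking the argument spines are equal. Your characterization lemma and its derivation-induction simply spell out in full the equivalence that the paper leaves implicit.
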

\begin{proof}
Since the subsorting relation $\jchecksubtype \Gamma {Q_1} {Q_2} P$ is just
the reflexive, transitive closure of the declared subsorting relation $s_1
\subtype s_2$, it suffices to compute this closure, check that the heads of
$Q_1$ and $Q_2$ are related by it, and ensure that all of the arguments of
$Q_1$ and $Q_2$ are equal.
\end{proof}

We prove decidability of typing by exhibiting a deterministic algorithmic
system that is equivalent to the original.  Instead of synthesizing a
single sort for an atomic term, the algorithmic system synthesizes an
intersection-free list of sorts, $\Delta$.
\begin{align*}
  \Delta &::= \cdot \mid \Delta, Q \mid \Delta, \sspi x S A T
\end{align*}
(As usual, we freely overload comma to mean list concatenation, as no
ambiguity can result.)
One can think of $\Delta$ as the intersection of all its elements.  Instead
of applying intersection eliminations, the algorithmic system eagerly breaks
down intersections using a ``split'' operator, leading to a deterministic
``minimal-synthesis'' system.
\begin{align*}
  \ssplit Q &= Q
& \ssplit {\sinter {S_1} {S_2}} &= \ssplit {S_1}, \ssplit {S_2} \\
  \ssplit {\sspi x S A T} &= \sspi x S A T
& \ssplit \sstop &= \cdot
\end{align*}
\begin{mathpar}
\inferrule{\mrefdecl c S \in \Sigma
          }
          {\jasynth \Gamma c {\ssplit S}}
\and
\inferrule{\rdecl x S A \in \Gamma}
          {\jasynth \Gamma x {\ssplit S}}
\and
\inferrule{\jasynth \Gamma R \Delta
        \\ \under \Gamma {\japply \Delta N {\Delta'}}}
          {\jasynth \Gamma {\mapp R N} {\Delta'}}
\end{mathpar}
The rule for applications uses an auxiliary judgment $\under \Gamma
\japply \Delta N {\Delta'}$ which computes the possible types of $\mapp R
N$ given that $R$ synthesizes to all the sorts in $\Delta$.  It has
two key rules:
\begin{mathpar}
\inferaxiom {\under \Gamma {\japply \cdot N \cdot}}
\and
\inferrule{\under \Gamma \japply \Delta N {\Delta'}
        \\ \jacheck \Gamma N S
        \\ \jhsubst s N x A T {T'}}
          {\under \Gamma \japply {(\Delta, \sspi x S A T)} N
                                 {\Delta', \ssplit {T'}}}
\end{mathpar}
The other rules force the judgment to be defined when neither of the above
two rules apply.
\begin{mathpar}
\inferrule{\under \Gamma \japply \Delta N {\Delta'}
        \\ \jnacheck \Gamma N S}
          {\under \Gamma \japply {(\Delta, \sspi x S A T)} N {\Delta'}}
\and
\inferrule{\under \Gamma \japply \Delta N {\Delta'}
        \\ \not\exists {T'} \bdot \jhsubst s N x A T {T'}}
          {\under \Gamma \japply {(\Delta, \sspi x S A T)} N {\Delta'}}
\and
\inferrule{\under \Gamma \japply \Delta N {\Delta'}}
          {\under \Gamma \japply {(\Delta, Q)} N {\Delta'}}
\end{mathpar}
Finally, to tie everything together, we define a new checking judgment
$\jacheck \Gamma N S$ that makes use of the algorithmic synthesis
judgment; it looks just like $\jchecktermtype \Gamma N S$ except for the
rule for atomic terms.
\begin{mathpar}
\inferrule{\jasynth \Gamma R \Delta
        \\ Q' \in \Delta
        \\ {Q'} \subtype Q}
          {\jacheck \Gamma R Q}
\and
\inferrule{\jacheck {\Gamma, \rdecl x S A} N T}
          {\jacheck \Gamma {\mlam x N} {\sspi x S A T}}
\\
\and
\inferaxiom {\jacheck \Gamma N \sstop}
\and
\inferrule{\jacheck \Gamma N {S_1}
        \\ \jacheck \Gamma N {S_2}}
          {\jacheck \Gamma N {\sinter {S_1} {S_2}}}
\end{mathpar}

\noindent This new algorithmic system is manifestly decidable:
despite the negative conditions in some of the premises, the definitions of
the judgments are well-founded by the ordering used in the following proof.
(If we wished, we could also explicitly synthesize a definition of $\jnacheck
\Gamma N S$, but it would not illuminate the algorithm any further.)
\begin{thm}
\label{thm:checking-decidable}
Algorithmic sort checking is decidable.  In particular:
\begin{enumerate}[\em(1)]
\item Given $\Gamma$ and $R$, either
    $\exists \Delta \bdot \jasynth \Gamma R \Delta$ or
    $\not\exists \Delta \bdot \jasynth \Gamma R \Delta$.
\item Given $\Gamma$, $N$, and $S$, either $\jacheck \Gamma N S$ or
    $\jnacheck \Gamma N S$.
\item Given $\Gamma$, $\Delta$, and $N$, $\exists \Delta' \bdot
    \under \Gamma \japply \Delta N {\Delta'}$.
\label{cl:japply}
\end{enumerate}
\end{thm}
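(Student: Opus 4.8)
The plan is to prove all three clauses simultaneously by a single lexicographic induction, appealing to the already-established Decidability of Substitution and Decidability of Subsorting theorems at the leaves. The primary component of the induction metric is the structure of the subject term --- the $R$ in clause~(1) and the $N$ in clauses~(2) and~(3). For recursive appeals that keep this subject term fixed, I would order the three judgment forms so that synthesis precedes checking precedes application; and for the two judgments that recurse on themselves at a fixed subject term I would add a tertiary measure, namely the size of the sort $S$ for checking (to discharge the $\intersect$-introduction rule) and the length of the list $\Delta$ for application (to discharge its list recursion).

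First I would verify that every recursive appeal respects this order. In clause~(1), synthesizing for $\mapp R N$ appeals to clause~(1) on the strictly smaller $R$ and to clause~(3) on the strictly smaller subterm $N$, so both lie below the current term. In clause~(2), the $\lambda$-case descends to a strictly smaller term; the $\intersect$-case keeps $N$ fixed but shrinks the sort; the $\sstop$-case is an axiom; and the atomic case appeals to clause~(1) on the \emph{same} term together with a subsorting test, which terminates because synthesis precedes checking and because Decidability of Subsorting is available. In clause~(3), the judgment recurses on a shorter $\Delta$ at the same $N$ and appeals to clause~(2) at the same $N$ (licensed since checking precedes application) and to Decidability of Substitution. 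The only potentially worrisome edge --- application invoking checking at an unchanged term --- is harmless precisely because the judgment $\under \Gamma {\japply \Delta N {\Delta'}}$ is, at top level, only ever invoked from synthesis on an \emph{application} term $\mapp R N$, where its subject $N$ is a proper subterm; so every checking call spawned inside it sits strictly below the top-level synthesis subject.

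Next I would discharge the content of each clause. Clauses~(1) and~(2) are genuine decidability (excluded-middle) statements: for constants and variables synthesis succeeds exactly when the name is declared, and for $\mapp R N$ synthesis succeeds exactly when synthesis for $R$ does, since clause~(3) guarantees the ensuing application always yields some $\Delta'$. For checking, each rule is driven by the shape of $S$ (with the atomic case deciding membership $Q' \in \Delta$ and running a subsorting test), so the induction hypotheses and the two prior decidability theorems settle the dichotomy. Clause~(3), by contrast, is a \emph{totality} statement, proved by induction on $\Delta$: the crux is that the three rules governing a leading entry $\sspi x S A T$ --- succeed, fail-the-check, fail-the-substitution --- together with the single rule for a leading $Q$, partition all possibilities, since deciding $\jacheck \Gamma N S$ (clause~(2)) and deciding $\jhsubst s N x A T {T'}$ (Decidability of Substitution) tells us exactly which rule fires, and the empty case is an axiom.

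I expect the main obstacle to be justifying the termination metric rather than any individual case: one must confirm that the mutual dependency cycle synthesis $\to$ application $\to$ checking $\to$ synthesis never closes at a fixed subject term. The resolution is the observation above --- the single edge that preserves the term (application $\to$ checking) is reached only after synthesis has already descended into a proper subterm --- so that the lexicographic combination of (term size; judgment order synthesis $<$ checking $<$ application; then sort size or $\Delta$-length) is genuinely well-founded. Once this metric is pinned down, each case reduces to a routine inversion.
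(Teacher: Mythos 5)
Your proposal is correct and follows essentially the same argument as the paper's proof: a lexicographic induction on the subject term, then the judgment (clause) order with synthesis before checking before application, then the sort $S$ or the length of $\Delta$, with clauses (1) and (2) treated as excluded-middle statements discharged via the prior decidability theorems for substitution and subsorting, and clause (3) treated as a totality statement established by showing the rules cover all possibilities. Your extra discussion of the dependency cycle is a harmless elaboration of what the lexicographic metric already guarantees, since the only edge that moves to a later judgment (synthesis to application) strictly decreases the term.
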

\begin{proof}
By lexicographic induction on the term $R$ or $N$, the clause number, and
the sort $S$ or the list of sorts $\Delta$.  For each applicable rule, the
premises are either known to be decidable, or at a smaller term, or if the
same term, then an earlier clause, or if the same clause, then either a
smaller $S$ or a smaller $\Delta$.  For clause \ref{cl:japply}, we must use
our inductive hypothesis to argue that the rules cover all possibilities,
and so a derivation always exists.
\end{proof}
\noindent Note that the algorithmic synthesis system sometimes outputs
an empty $\Delta$ even when the given term is ill-typed, since the
$\under \Gamma \japply \Delta N {\Delta'}$ judgment is always defined.

It is straightforward to show that the algorithm is sound and complete with
respect to the original bidirectional system.
\begin{lem}
\label{lem:synth-split}
If\/ $\jsynthtermtype \Gamma R S$, then for all $S' \in \ssplit S$,
$\jsynthtermtype \Gamma R {S'}$.
\end{lem}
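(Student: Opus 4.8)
The plan is to proceed by structural induction on the sort $S$, since $\ssplit S$ is itself defined by recursion on the structure of $S$. The synthesis hypothesis $\jsynthtermtype \Gamma R S$ is carried along the induction, and at each step I peel off exactly the connective that the split operator inspects, so that every case either reduces the goal to the hypothesis directly or to an application of the intersection eliminations.

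First I dispatch the three cases in which $S$ is not an intersection. If $S = Q$ is atomic, then $\ssplit Q = Q$, so the sole element $S'$ of the list is $Q = S$ itself, and the required judgment is precisely the hypothesis. The case $S = \sspi x {S_0} A T$ is identical: $\ssplit{\sspi x {S_0} A T} = \sspi x {S_0} A T$ is a one-element list whose only member is $S$. When $S = \sstop$, the list $\ssplit \sstop = \cdot$ is empty, so there is no $S'$ to consider and the conclusion holds vacuously.

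The only interesting case is $S = \sinter {S_1} {S_2}$, where $\ssplit S = \ssplit {S_1}, \ssplit {S_2}$. Here I use the two intersection elimination rules: from $\jsynthtermtype \Gamma R {\sinter {S_1} {S_2}}$, rule \rulename{$\intersect$-E$_1$} yields $\jsynthtermtype \Gamma R {S_1}$ and rule \rulename{$\intersect$-E$_2$} yields $\jsynthtermtype \Gamma R {S_2}$. Any $S' \in \ssplit S$ lies in $\ssplit{S_1}$ or in $\ssplit{S_2}$; in the first case the induction hypothesis applied to $\jsynthtermtype \Gamma R {S_1}$ gives $\jsynthtermtype \Gamma R {S'}$, and symmetrically in the second. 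This closes the induction.

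There is no serious obstacle: the lemma is essentially the observation that the split operator mirrors iterated applications of the intersection eliminations, while the atomic and function cases are left untouched by splitting. The one point to get right is to induct on $S$ rather than on the synthesis derivation. Because the output of the \rulename{$\Pi$-E} rule is a hereditary substitution instance $\hsubst{}{N}{x}{A}{T}$ that may itself be an intersection, an induction on the derivation would force an awkward re-analysis of that substituted sort, whereas induction on the sort structure makes every case either immediate or a direct appeal to the elimination rules.
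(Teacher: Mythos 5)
Your proof is correct and matches the paper's own argument exactly: induction on the structure of $S$, with the non-intersection cases immediate and the intersection case handled by \rulename{$\intersect$-E$_1$} and \rulename{$\intersect$-E$_2$} followed by the induction hypothesis. The paper states this in one line; your write-up simply spells out the same cases (and your remark about inducting on the sort rather than the derivation is a sound observation, consistent with the paper's choice).
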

\begin{proof}
By induction on $S$, making use of the
\rulename{$\intersect$-E$_1$} and \rulename{$\intersect$-E$_2$} rules.
\end{proof}\medskip

\begin{thm}[Soundness of Algorithmic Typing] ~
\label{thm:sound-typing}
\begin{enumerate}[\em(1)]
\item If\/ $\jasynth \Gamma R \Delta$, then for all $S \in \Delta$,
    $\jsynthtermtype \Gamma R S$.
\item If\/ $\jacheck \Gamma N S$, then $\jchecktermtype \Gamma N S$.
\item If\/ $\under \Gamma \japply \Delta N {\Delta'}$,
    and for all $S \in \Delta$, $\jsynthtermtype \Gamma R S$,
    then for all $S' \in \Delta'$, $\jsynthtermtype \Gamma {\mapp R N} {S'}$.
\end{enumerate}
\end{thm}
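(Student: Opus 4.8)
The plan is to prove all three statements simultaneously by a single structural induction on the given algorithmic derivation, regarding $\jasynth \Gamma R \Delta$, $\jacheck \Gamma N S$, and $\under \Gamma {\japply \Delta N {\Delta'}}$ as one mutually inductive family. Every premise appearing in an algorithmic rule is a subderivation of one of these three judgments --- for instance, the checking rule for atomic terms has $\jasynth \Gamma R \Delta$ as a premise, and the synthesis rule for applications has both $\jasynth \Gamma R \Delta$ and $\under \Gamma {\japply \Delta N {\Delta'}}$ as premises --- so the induction is well-founded and each part may invoke the induction hypothesis of any of the three on a smaller derivation. The only external ingredient is Lemma~\ref{lem:synth-split}, which lets us pass from a single synthesized sort to each component of its split.

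Statements~(1) and~(2) are largely mechanical. For~(1), the constant and variable rules produce $\Delta = \ssplit S$ from a declaration assigning $R$ the sort $S$; the original \rulename{const}/\rulename{var} rule yields $\jsynthtermtype \Gamma R S$, and Lemma~\ref{lem:synth-split} then gives $\jsynthtermtype \Gamma R {S'}$ for every $S' \in \ssplit S$. The application rule is dispatched entirely to the induction hypotheses: part~(1) on $\jasynth \Gamma R \Delta$ supplies $\jsynthtermtype \Gamma R S$ for all $S \in \Delta$, and feeding this into part~(3) on $\under \Gamma {\japply \Delta N {\Delta'}}$ yields the claim for every $S' \in \Delta'$. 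For~(2), each algorithmic checking rule maps onto its bidirectional counterpart: the atomic case appeals to part~(1) to obtain $\jsynthtermtype \Gamma R {Q'}$ from $Q' \in \Delta$ and then uses \rulename{switch} with $Q' \subtype Q$, while the $\lambda$-, $\sstop$-, and intersection cases apply the part~(2) hypothesis to the premises and re-apply \rulename{$\Pi$-I}, \rulename{$\sstop$-I}, and \rulename{$\intersect$-I}.

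The real content lives in statement~(3). The empty rule is vacuous. In the rule that consumes a function sort $\sspi x S A T$ and produces $\Delta', \ssplit{T'}$, the hypothesis that $R$ synthesizes every sort in the input list provides both $\jsynthtermtype \Gamma R {\sspi x S A T}$ (since the function sort is literally a member of that list) and the premise needed to invoke part~(3) on the subderivation for the $\Delta'$ portion. Converting $\jacheck \Gamma N S$ to $\jchecktermtype \Gamma N S$ via part~(2) and applying \rulename{$\Pi$-E} gives $\jsynthtermtype \Gamma {\mapp R N} {\hsubst {} N x A T}$; since the algorithmic premise $\jhsubst s N x A T {T'}$ identifies this sort as exactly $T'$, Lemma~\ref{lem:synth-split} delivers the $\ssplit{T'}$ portion. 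Finally, the three ``force-defined'' rules --- firing when $N$ fails to check, when the substitution is undefined, or when the head sort is atomic --- each leave $\Delta'$ unchanged from their subderivation, so they are discharged by part~(3) alone.

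I expect no deep difficulty, since soundness never forces us to synthesize a sort the bidirectional system cannot already derive; indeed, the force-defined rules are precisely the point where this manifests: because they only \emph{discard} candidate sorts, the conclusion list is always a sublist of what one would otherwise obtain, and soundness is \emph{preserved} rather than threatened. The main obstacle is therefore purely bookkeeping --- keeping straight how each list-membership claim transfers across the list extension in the consume rule and across the threading of the three mutual hypotheses --- rather than any substantive reasoning about substitution or subsorting.
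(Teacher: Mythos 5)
Your proposal is correct and follows essentially the same route as the paper: a mutual structural induction on the given algorithmic derivation (over all three judgments at once), with Lemma~\ref{lem:synth-split} supplying the passage from a synthesized sort to each member of its split, both at the leaves of clause~(1) and for the $\ssplit{T'}$ extension in clause~(3). The paper states this proof in one line; your case analysis fills in exactly the details that line compresses, with no divergence in method.
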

\begin{proof}
By induction on the given derivation, using Lemma~\ref{lem:synth-split}.
\end{proof}

\noindent For completeness, we use the notation $\Delta \subseteq
\Delta'$ to mean that $\Delta$ is a sublist of $\Delta'$.

\begin{lem}
\label{lem:apply}
If\/ $\under \Gamma \japply \Delta N {\Delta'}$
    and $\jasynth \Gamma R \Delta$
    and $\sspi x S A T \in \Delta$
    and $\jacheck \Gamma N S$
    and $\jhsubst s N x A T {T'}$,
    then $\ssplit {T'} \subseteq \Delta'$.
\end{lem}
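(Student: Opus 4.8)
The plan is to argue by induction on the derivation of the apply judgment $\under \Gamma \japply \Delta N {\Delta'}$, equivalently on the length of the list $\Delta$, with a case analysis on \emph{where} the distinguished function sort $\sspi x S A T$ sits in $\Delta$. I note at the outset that the hypothesis $\jasynth \Gamma R \Delta$ plays no role: the apply judgment and its output $\Delta'$ depend only on $\Delta$, $N$, $\Gamma$, and the signature, never on $R$. So I will carry through the induction only the membership $\sspi x S A T \in \Delta$, the check $\jacheck \Gamma N S$, and the substitution $\jhsubst s N x A T {T'}$. In the base case $\Delta = \cdot$ there are no elements, so $\sspi x S A T \in \Delta$ is impossible and the claim holds vacuously.

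For the inductive step, write $\Delta = \Delta_0, E$ where $E$ is the last element, and split on whether the occurrence of $\sspi x S A T$ we are tracking is $E$ itself or lies in $\Delta_0$. Suppose the occurrence lies in $\Delta_0$. Each of the four rules applicable to a nonempty list has a premise $\under \Gamma \japply {\Delta_0} N {\Delta_0'}$, and in every case $\Delta'$ either equals $\Delta_0'$ (the three ``skipping'' rules) or extends it on the right by $\ssplit{T''}$ (the productive rule, when $E$ is a function sort that is consumed). Thus $\Delta_0' \subseteq \Delta'$ always. Applying the induction hypothesis to the premise gives $\ssplit{T'} \subseteq \Delta_0'$, and transitivity of the sublist relation yields $\ssplit{T'} \subseteq \Delta'$.

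Suppose instead that $E$ \emph{is} the distinguished sort, i.e.\ $E = \sspi x S A T$. Then only three rules could have produced the derivation, and I rule out two of them. The rule with premise $\jnacheck \Gamma N S$ is excluded because $\jnacheck \Gamma N S$ is by definition the negation of $\jacheck \Gamma N S$, which we are given (cf.\ Theorem~\ref{thm:checking-decidable}). The rule with premise $\not\exists T''.\, \jhsubst s N x A T {T''}$ is excluded because $\jhsubst s N x A T {T'}$ witnesses that such a $T''$ exists. Hence the productive rule must have fired, so $\Delta' = \Delta_0', \ssplit{T''}$ where $\jhsubst s N x A T {T''}$; by Functionality of Substitution (Theorem~\ref{thm:functionality}) we have $T'' = T'$, so $\ssplit{T'}$ is a suffix of $\Delta'$ and $\ssplit{T'} \subseteq \Delta'$.

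The main obstacle is precisely this last case: it is essential that when the matching function sort appears \emph{and} both side conditions of the productive rule hold, the two skipping rules cannot also apply, forcing the productive rule to fire and contribute $\ssplit{T'}$. This is exactly what the determinacy facts supply---the definitional complementarity of $\jacheck$ and $\jnacheck$ together with functionality of hereditary substitution---so the argument hinges on invoking Theorem~\ref{thm:functionality} and the meaning of $\jnacheck$ at the right moment. A minor point worth flagging is that $\sspi x S A T \in \Delta$ asserts only \emph{some} occurrence of the function sort; since the induction tracks a fixed occurrence and we need just one copy of $\ssplit{T'}$ to appear in $\Delta'$, any duplication of that sort in $\Delta$ causes no difficulty.
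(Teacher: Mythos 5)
Your proof is correct and is essentially the paper's proof, which reads in its entirety ``By straightforward induction on the derivation of $\under \Gamma \japply \Delta N {\Delta'}$''; your case analysis on the position of the distinguished occurrence, together with the appeals to the complementarity of algorithmic checking with its negation and to Theorem~\ref{thm:functionality} to force the productive rule, is exactly the content that ``straightforward'' elides. Your decision to drop the unused hypothesis $\jasynth \Gamma R \Delta$ is also the right move rather than a cosmetic one, since carrying it through would obstruct applying the induction hypothesis to the premise derivation, where no synthesis judgment for the prefix $\Delta_0$ is available.
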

\begin{proof}
By straightforward induction on the derivation of $\under \Gamma \japply
\Delta N {\Delta'}$.
\end{proof}\medskip

\begin{thm}[Completeness for Algorithmic Typing] ~
\label{thm:complete-typing}
\begin{enumerate}[\em(1)]
\item If\/ $\jsynthtermtype \Gamma R S$, then $\jasynth \Gamma R \Delta$
    and $\ssplit S \subseteq \Delta$.
\item If\/ $\jchecktermtype \Gamma N S$, then $\jacheck \Gamma N S$.
\end{enumerate}
\end{thm}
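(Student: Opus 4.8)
The plan is to prove both parts simultaneously by mutual induction on the two given derivations, since the application case of part~(1) appeals to part~(2) and the \rulename{switch} case of part~(2) appeals to part~(1). Throughout I will lean on the structural facts that $\ssplit Q = Q$ and $\ssplit{\sspi x S A T} = \sspi x S A T$ are singleton lists, so that the conclusion ``$\ssplit S \subseteq \Delta$'' degenerates to plain membership ``$Q \in \Delta$'' or ``$\sspi x S A T \in \Delta$'' at the relevant leaves, and that $\ssplit{\cdot}$ distributes over intersection.

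Most cases are immediate. For part~(1), the \rulename{var} and \rulename{const} cases hold by taking $\Delta = \ssplit S$, since the algorithmic rules synthesize exactly $\ssplit S$ and sublisting is reflexive; the \rulename{$\intersect$-E$_1$} and \rulename{$\intersect$-E$_2$} cases follow because the inductive hypothesis supplies a $\Delta$ with $\ssplit{S_1}, \ssplit{S_2} \subseteq \Delta$, and a prefix or suffix of a sublist is again a sublist. For part~(2), the \rulename{$\Pi$-I}, \rulename{$\intersect$-I}, and \rulename{$\sstop$-I} cases each apply the corresponding algorithmic checking rule directly to the results of the inductive hypotheses. The \rulename{switch} case is where part~(2) calls part~(1): from $\jsynthtermtype \Gamma R {Q'}$ the inductive hypothesis yields $\jasynth \Gamma R \Delta$ with $\ssplit{Q'} \subseteq \Delta$, i.e.\ $Q' \in \Delta$ since $Q'$ is atomic, and combining this with the given $Q' \subtype Q$ fires the algorithmic atomic-checking rule to give $\jacheck \Gamma R Q$.

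The one case requiring genuine work is \rulename{$\Pi$-E} in part~(1), where $\jsynthtermtype \Gamma {\mapp R N} {T'}$ follows from $\jsynthtermtype \Gamma R {\sspi x S A T}$, $\jchecktermtype \Gamma N S$, and $\jhsubst s N x A T {T'}$. First I would apply the part~(1) inductive hypothesis to obtain $\jasynth \Gamma R \Delta$ with $\sspi x S A T \in \Delta$, and the part~(2) inductive hypothesis to obtain $\jacheck \Gamma N S$. Next, since the apply judgment is total (Theorem~\ref{thm:checking-decidable}, clause~\ref{cl:japply}), there is some $\Delta'$ with $\under \Gamma \japply \Delta N {\Delta'}$, and this lets me form $\jasynth \Gamma {\mapp R N} {\Delta'}$. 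It then remains only to establish $\ssplit{T'} \subseteq \Delta'$, which is exactly the content of Lemma~\ref{lem:apply} applied to the facts just assembled.

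The main obstacle is precisely this last step, and the reason is that the apply judgment is deliberately greedy: its fallback rules silently discard any function sort in $\Delta$ whose domain $N$ fails to check against, or whose substitution instance is undefined. One therefore cannot read $\ssplit{T'} \subseteq \Delta'$ off the shape of the derivation of $\under \Gamma \japply \Delta N {\Delta'}$ alone; one must know that the particular entry $\sspi x S A T$ \emph{survives} the greedy pass, and that requires having already secured both $\jacheck \Gamma N S$ and the definedness of $\jhsubst s N x A T {T'}$, which are exactly the hypotheses of Lemma~\ref{lem:apply}. Threading these facts through the totality argument for apply is the crux; once Lemma~\ref{lem:apply} is in hand, the case closes immediately and the induction is complete.
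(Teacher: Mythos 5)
Your proof is correct and follows essentially the same route as the paper's: induction on the given derivation (the parts being mutually dependent), with the application case handled by invoking totality of the apply judgment (Theorem~\ref{thm:checking-decidable}, clause~\ref{cl:japply}) and then Lemma~\ref{lem:apply} to conclude $\ssplit{T'} \subseteq \Delta'$. Your closing discussion of why the greedy fallback rules make Lemma~\ref{lem:apply} indispensable is a faithful elaboration of exactly the step the paper's proof relies on.
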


\begin{proof}
By straightforward induction on the given derivation.  In the application
case, we make use of the fact that $\under \Gamma \japply \Delta N
{\Delta'}$ is always defined and apply Lemma~\ref{lem:apply}.
\end{proof}

\noindent Soundness, completeness, and decidability of the algorithmic
system gives us a decision procedure for the judgment $\jchecktermtype
\Gamma N S$.  First, decidability tells us that either $\jacheck
\Gamma N S$ or $\jnacheck \Gamma N S$.  Then soundness tells us that
if $\jacheck \Gamma N S$ then $\jchecktermtype \Gamma N S$, while
completeness tells us that if $\jnacheck \Gamma N S$ then
$\jnchecktermtype \Gamma N S$.

Decidability theorems and proofs for other syntactic categories' formation
judgments proceed similarly.  When all is said and done, we have enough to
show that the problem of sort checking an LFR signature is decidable.

\begin{thm}[Decidability of Sort Checking]  Sort checking is decidable.
In particular:
\begin{enumerate}[\em(1)]
    \item Given $\Gamma$, $N$, and $S$, either $\jchecktermtype \Gamma N S$
    or $\jnchecktermtype \Gamma N S$,
    \label{cl:term-checking-decidable}
    \item Given $\Gamma$, $S$, and $A$, either $\jchecksortclass \Gamma S
    A$ or $\jnchecksortclass \Gamma S A$, and
    \item Given $\Sigma$, either $\jsig \Sigma$ or $\jnsig \Sigma$.
\end{enumerate}
\end{thm}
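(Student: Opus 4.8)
The plan is to establish the three clauses in the order stated, reducing each to decidability facts already proved. Clause~1 is the crux, and the paragraph preceding the statement already sketches it: given $\Gamma$, $N$, and $S$, decidability of the algorithmic system (Theorem~\ref{thm:checking-decidable}) yields either $\jacheck \Gamma N S$ or $\jnacheck \Gamma N S$. In the first case, soundness of algorithmic typing (Theorem~\ref{thm:sound-typing}) gives $\jchecktermtype \Gamma N S$; in the second, the contrapositive of completeness (Theorem~\ref{thm:complete-typing}) gives $\jnchecktermtype \Gamma N S$. Since a constructive proof of an instance of excluded middle is exactly a decision procedure, this settles clause~1.

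For clause~2, I would observe that the refinement/formation judgment $\jchecksortclass \Gamma S A$ is syntax-directed: the top-level shape of $S$ (an atomic $Q$, a dependent function sort, a binary intersection, or $\sstop$), together with the corresponding shape of $A$, determines which formation rule can apply, and these cases are mutually exclusive. Decidability then follows by structural induction on $S$ and $A$: the atomic base case matches a head declaration $s \refines a \in \Sigma$ and compares term spines up to $\alpha$-equality (appealing to clause~1 and Decidability of Substitution to check that any index terms decorating atomic sorts are well-typed at their instantiated argument types), while the function and intersection cases recurse on strictly smaller components. Because the rule set is exhaustive and mutually exclusive, the failing outcome is itself detectable, yielding $\jnchecksortclass \Gamma S A$ when formation fails; class formation is handled identically.

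Clause~3 then follows by induction on the length of $\Sigma$, a signature being a sequence of declarations each checked against the prefix preceding it: that $K$ is a kind and $A$ a type for declarations $a : K$ and $c : A$; the relevant refinement and class-formation conditions for $s \refines a :: L$ and $c :: S$ (clause~2 and its class analogue, which in turn invoke clause~1); and, for a subsorting declaration $s_1 \subtype s_2$, that both sides refine a common type family (Decidability of Subsorting). Each check is decidable and there are finitely many declarations, so the whole signature check is as well. The main obstacle is essentially behind us: it was the construction and verification of the algorithmic system, which tamed the nondeterminism introduced by intersection elimination. What remains takes care rather than ingenuity --- I must verify that each formation judgment is syntax-directed enough that its failing case is genuinely detectable (so the negated judgments can be concluded), and that the apparent interdependence of the judgments is in fact a clean stratification (term checking depends only on substitution and subsorting, formation on term checking, signature checking on all of these) rather than a genuine simultaneous recursion, so that each clause may freely reuse the decidability of the layers beneath it.
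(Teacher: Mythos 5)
Your clause~1 is precisely the paper's own argument: the paragraph preceding the theorem derives it from decidability of the algorithmic system (Theorem~\ref{thm:checking-decidable}), soundness (Theorem~\ref{thm:sound-typing}) in the positive case, and the contrapositive of completeness (Theorem~\ref{thm:complete-typing}) in the negative case. Your stratified treatment of clause~3 (signature checking as a finite sequence of decidable checks against the preceding prefix) is also what the paper intends, modulo the small point that the rule for a subsorting declaration $\subdecl{s_1}{s_2}$ only requires the signature lookups $\arefdecl{s_1}{a}{L} \in \Sigma$ and $\arefdecl{s_2}{a}{L} \in \Sigma$, not Decidability of Subsorting.

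The gap is in clause~2. You assert that $\jchecksortclass \Gamma S A$ is syntax-directed with mutually exclusive rules, so that a plain structural induction suffices, the atomic case amounting to matching a head declaration and comparing spines. But the formation rule for an atomic sort ($Q$-F) has as its premise the class-synthesis judgment $\jsynthsortclass \Gamma Q {P'} L K$ with side conditions $P' = P$ and $L = \csort$, and that judgment is \emph{not} syntax-directed: classes include intersections $\cinter{L_1}{L_2}$ --- for instance, the paper's $\doublestar$ is declared with class $(\sstop \arrow \var{even} \arrow \csort) \intersect (\var{zero} \arrow \var{zero} \arrow \csort)$ --- and there are two class-level intersection-elimination rules, so a single atomic sort may synthesize many classes. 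Moreover, which sort each spine argument of $Q$ must check against depends on which intersection projection is taken, so ``check the index terms at their instantiated argument types'' is not one check but a search over projections. This is exactly the nondeterminism that forced the paper to build the split-based algorithmic system for terms, and the paper's remark that the other formation judgments ``proceed similarly'' means replicating that construction at the class level --- an algorithmic class synthesis producing an intersection-free list (the analogue of $\ssplit{-}$ and $\Delta$), with its own soundness, completeness, and decidability --- or at least a finite-enumeration argument over projections. Your closing hedge says syntax-directedness must be ``verified,'' but as stated the verification fails; the missing idea is precisely this class-level algorithmic treatment, without which the negative outcome $\jnchecksortclass \Gamma S A$ cannot be concluded in the atomic case.
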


\subsection{Identity and Substitution Principles}

Since well-typed terms in our framework must be canonical, that is
$\beta$-normal and $\eta$-long, it is non-trivial to prove $\aarrow S S$
for non-atomic $S$, or to compose proofs of $\aarrow {S_1} {S_2}$ and
$\aarrow {S_2} {S_3}$.  The Identity and Substitution principles ensure
that our type theory makes logical sense by demonstrating the reflexivity
and transitivity of entailment.  Reflexivity is witnessed by
$\eta$-expansion, while transitivity is witnessed by hereditary
substitution.

The Identity principle effectively says that synthesizing (atomic) objects
can be made to serve as checking (normal) objects.  The Substitution
principle dually says that checking objects may stand in for synthesizing
assumptions, that is, variables.

\subsubsection{Substitution}

The goal of this section is to give a careful proof of the following
substitution theorem.
   Suppose $\jchecktermsort \GammaL {N_0} {S_0}$~.  Then:
   \begin{enumerate}[(1)]
   \item
        If
          \begin{enumerate}[$\bullet$]
          \item
          $\jctx {\GammaL, \rdecl {x_0} {S_0} {A_0}, \GammaR}$~, and
          \item
          $\jchecksortclass {\GammaL, \rdecl {x_0} {S_0} {A_0}, \GammaR} S A$~,
          and
          \item
          $\jchecktermsort {\GammaL, \rdecl {x_0} {S_0} {A_0}, \GammaR} N S$~,
          \end{enumerate}
          then
          \begin{enumerate}[$\bullet$]
           \item 
                 $\jhsubst \gamma {N_0} {x_0} {A_0} \GammaR {\GammaR'}$ and
                 $\jctx {\GammaL, \GammaR'}$~, and
           \item 
                 $\jhsubst s {N_0} {x_0} {A_0} S {S'}$ and
                 $\jhsubst a {N_0} {x_0} {A_0} A {A'}$ and
                 $\jchecksortclass {\GammaL, \GammaR'} {S'} {A'}$~, and
           \item 
                 $\jhsubst n {N_0} {x_0} {A_0} N {N'}$ and
                 $\jchecktermsort {\GammaL, \GammaR'} {N'} {S'}$~,
          \end{enumerate}
    \item
        If
            \begin{enumerate}[$\bullet$]
            \item $\jctx {\GammaL, \rdecl {x_0} {S_0} {A_0}, \GammaR}$~ and
            \item $\jsynthtermtype {\GammaL, \rdecl {x_0} {S_0} {A_0}, \GammaR}
                    R S~$,
            \end{enumerate}
            then
            \begin{enumerate}[$\bullet$]
            \item $\jhsubst \gamma {N_0} {x_0} {A_0} \GammaR {\GammaR'}$ and
                  $\jctx {\GammaL, \GammaR'}$~, and
                  $\jhsubst s {N_0} {x_0} {A_0} S {S'}$~, and either
                \begin{enumerate}[$-$]
                    \item $\jhsubst {rr} {N_0} {x_0} {A_0} R {R'}$ and
                          $\jsynthtermtype {\GammaL, \GammaR'} {R'} {S'}$~,
                          or
                    \item $\jhsubst {rn} {N_0} {x_0} {A_0} R {(N', \alpha')}$
                          and
                          $\jchecktermtype {\GammaL, \GammaR'} {N'} {S'}$~,
                \end{enumerate}
            \end{enumerate}
   \end{enumerate}
    and similarly for other syntactic categories.
    (\textbf{Theorem~\ref{thm:substitution}} below.)

To prove the substitution theorem, we require a lemma about how
substitutions compose.  The corresponding property for a ordinary non-hereditary
substitution says that $\subst {N_0} {x_0} {\subst {N_2} {x_2} N} = \subst
{\subst {N_0} {x_0} {N_2}} {x_2} {\subst {N_0} {x_0} N}$.  For hereditary
substitutions, the situation is analogous, but we must be clear about which
substitution instances we must assume to be defined and which we may
conclude to be defined:  If the three ``inner'' substitutions are defined,
then the two ``outer'' ones are also defined, and equal.  Note that the
composition lemma is something like a diamond property; the notation below
is meant to suggest this connection.
\newcommand\pr[1]{#1^\prime}
\newcommand\phat[1]{\hat{\pr {#1}}}
\newcommand\bp[1]{#1^\backprime}
\newcommand\bpp[1]{#1^{\backprime\prime}}
\newcommand\bpphat[1]{\hat{\bpp {#1}}}
\begin{lem}[Composition of Substitutions]
  \label{lem:composition}
  Suppose {$\jhsubst n {N_0} {x_0} {\alpha_0} {N_2} {\bp {N_2}}$} and
  ${x_2} \not\in \FV({N_0})$.  Then:
  \begin{enumerate}[\em(1)]
    \item If\/ {$\jhsubst n {N_0} {x_0} {\alpha_0} N {\bp N}$}
           and {$\jhsubst n {N_2} {x_2} {\alpha_2} N {\pr N}$},
          then for some $\bpp N$, \\
               {$\jhsubst n {\bp {N_2}} {x_2} {\alpha_2} {\bp N} {\bpp N}$}
           and {$\jhsubst n {N_0} {x_0} {\alpha_0} {\pr N} {\bpp N}$}~,
    \label{cl:n-n}

    \item If\/ {$\jhsubst {rr} {N_0} {x_0} {\alpha_0} R {\bp R}$}
           and {$\jhsubst {rr} {N_2} {x_2} {\alpha_2} R {\pr R}$},
          then for some $\bpp R$, \\
               {$\jhsubst {rr} {\bp N_2} {x_2} {\alpha_2} {\bp R} {\bpp R}$}
           and {$\jhsubst {rr} {N_0} {x_0} {\alpha_0} {\pr R} {\bpp R}$}~,
    \label{cl:rr-rr}

    \item If\/ {$\jhsubst {rr} {N_0} {x_0} {\alpha_0} R {\bp R}$}
           and {$\jhsubst {rn} {N_2} {x_2} {\alpha_2} R {(\pr N, \beta)}$},
          then for some $\bpp N$, \\
               {$\jhsubst {rn} {\bp N_2} {x_2} {\alpha_2} {\bp R}
                               {(\bpp N, \beta)}$}
           and {$\jhsubst n {N_0} {x_0} {\alpha_0} {\pr N} {\bpp N}$}~,
    \label{cl:rr-rn}

    \item If\/ {$\jhsubst {rn} {N_0} {x_0} {\alpha_0} R {(\bp N, \beta)}$}
           and {$\jhsubst {rr} {N_2} {x_2} {\alpha_2} R {\pr R}$},
          then for some $\bpp N$, \\
               {$\jhsubst {n} {\bp N_2} {x_2} {\alpha_2} {\bp N} {\bpp N}$}
           and {$\jhsubst {rn} {N_0} {x_0} {\alpha_0} {\pr R}
                               {(\bpp N, \beta)}$}~,
    \label{cl:rn-rr}
  \end{enumerate}
and similarly for other syntactic categories.
\end{lem}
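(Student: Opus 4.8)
The plan is to prove all four clauses, together with the analogous statements for atomic/normal types, sorts, kinds, classes, and contexts, \emph{simultaneously} by a single well-founded induction, where essentially all the difficulty lies in choosing the termination measure. I would order the recursive appeals lexicographically by: first, the sum $|\alpha_0| + |\alpha_2|$ of the sizes of the two type subscripts; then the structure of the subject term $N$ (or $R$); and finally the clause number, ranking the normal-term clause above the two atomic-term clauses so that the dispatch of clause~\ref{cl:n-n} through \rulename{subst-n-atom} and \rulename{subst-n-atom-norm} onto clauses \ref{cl:rr-rr}--\ref{cl:rn-rr} (on the \emph{same} term at the \emph{same} subscripts) is justified; Lemma~\ref{lem:rr-rn-excl} and an analysis of $\head(R)$ determine which of those clauses applies. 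The overwhelming majority of cases are purely compositional: when neither substitution contracts a redex, both hereditary substitutions commute with the term former, and I close the case by invoking the induction hypothesis on strict subterms at unchanged subscripts and then reassembling, appealing to Functionality (Theorem~\ref{thm:functionality}) to identify the two resulting outputs.

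The heart of the argument is the pair of cases in which a $\beta$-redex is contracted: clause~\ref{cl:rr-rn}, where $\head(R) = x_2$ so $[N_2/x_2]R$ fires \rulename{subst-rn-$\beta$}, and clause~\ref{cl:rn-rr}, where $\head(R) = x_0$ so $[N_0/x_0]R$ fires it. Writing $R = \mapp {R_1} M$, the rule \rulename{subst-rn-$\beta$} extracts a $\lambda$-abstraction $\mlam x {N_1}$ from $R_1$ and performs a nested substitution $\jhsubst n {M'} x \alpha {N_1} {\pr N}$ into its body, where $\alpha$ is the domain of the abstraction's arrow type. The crucial observation is that, by Lemma~\ref{lem:subst-rn-alpha-subterm}, this $\alpha$ is a \emph{strict} subterm of the contracted variable's type subscript, so the residual commutation of the two substitutions over $N_1$ is an instance of the lemma at a strictly smaller type-size sum --- and the induction hypothesis therefore applies even though $N_1$ is not a subterm of the original $R$. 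The compositional pieces (commuting the two substitutions over $R_1$ and over the argument $M$) are handled by the hypothesis on strict subterms at unchanged subscripts, and the two halves are then glued back together by \rulename{subst-rn-$\beta$} on one side and the nested appeal on the other.

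The one genuinely delicate point --- and the reason a plain lexicographic order on the ordered pair $(\alpha_0,\alpha_2)$ does \emph{not} suffice --- surfaces in clause~\ref{cl:rn-rr}: contracting the $x_0$-redex \emph{consumes} the $[N_0/x_0]$ substitution, so the residual commutation is between $[\bp{N_2}/x_2]$ and the fresh body substitution $[M'/x]$, \emph{neither of which is} $[N_0/x_0]$. The distinguished ``outer'' substitution of the recursive call thus changes identity, and the two subscripts effectively swap roles. This is exactly why I measure by the unordered sum $|\alpha_0|+|\alpha_2|$ (equivalently, the multiset $\{\alpha_0,\alpha_2\}$) rather than by the pair: in both clause~\ref{cl:rr-rn} and clause~\ref{cl:rn-rr} one subscript is replaced by a strict subterm while the other is unchanged, so the sum strictly drops regardless of which substitution is consumed. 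In clause~\ref{cl:rr-rn}, by contrast, $[N_0/x_0]$ survives as the outer substitution and only $\alpha_2$ shrinks.

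Finally, the base cases are where the standing hypotheses are actually used. In clause~\ref{cl:rn-rr} with $R = x_0$, the rule \rulename{subst-rn-var} gives $\bp N = N_0$, the rr-substitution leaves $x_0$ untouched, and closing the case requires $\jhsubst n {\bp{N_2}} {x_2} {\alpha_2} {N_0} {N_0}$ --- which is precisely where the side hypothesis $x_2 \notin \FV(N_0)$ enters, via the Trivial Substitution theorem; the standing hypothesis $\jhsubst n {N_0} {x_0} {\alpha_0} {N_2} {\bp{N_2}}$ is consumed dually in the clause~\ref{cl:rr-rn} base case $R = x_2$. I expect the main obstacle to be neither the algebra of any single case nor the base cases, but pinning down the termination measure so that it simultaneously accommodates the role-swap of clause~\ref{cl:rn-rr} and still decreases on the same-subscript subterm recursions; once the sum/term/clause ordering is fixed, each case reduces to careful bookkeeping of which substitution instances are \emph{assumed} defined and which are \emph{concluded} defined, with Functionality (Theorem~\ref{thm:functionality}) used throughout to equate outputs. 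The remaining syntactic categories contain no redex contraction and reduce to routine compositional recursions that merely invoke the term-level clauses.
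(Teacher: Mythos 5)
Your proposal is correct and matches the paper's own proof in all essentials: the paper likewise proceeds by lexicographic induction on the \emph{unordered} pair of type subscripts $\alpha_0$, $\alpha_2$ (your size-sum/multiset measure is an equivalent instantiation) and then on the structure of the first substitution, invoking Lemma~\ref{lem:subst-rn-alpha-subterm} in the \rulename{subst-rn-$\beta$} cases of clauses~\ref{cl:rr-rn} and~\ref{cl:rn-rr}, with the role-swap in clause~\ref{cl:rn-rr} being exactly what forces the symmetric metric. Your additional bookkeeping (clause ordering for the dispatch from normal to atomic terms, and the base cases consuming $x_2 \notin \FV(N_0)$ via Trivial Substitution and the standing hypothesis on $N_2$) fills in details the paper's sketch leaves implicit, and is accurate.
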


\begin{proof}[Proof (sketch)]
By lexicographic induction on the unordered pair of ${\alpha_0}$ and
${\alpha_2}$, and on the first substitution derivation in each clause.
The cases
for rule \rulename{subst-rn-$\beta$}
in clauses \ref{cl:rr-rn} and \ref{cl:rn-rr}
appeal to the induction hypothesis at a smaller type using
Lemma~\ref{lem:subst-rn-alpha-subterm}.  The case in clause \ref{cl:rn-rr}
swaps the roles of ${\alpha_0}$ and ${\alpha_2}$, necessitating the
unordered induction metric.
\end{proof}

We also require a simple lemma about substitution into subsorting
derivations:
\begin{lem}[Substitution into Subsorting]
\label{lem:subst-subtype}
If\/ $\jsubtype {Q_1} {Q_2}$
and $\jhsubst q {N_0} {x_0} {\alpha_0} {Q_1} {Q_1'}$
and $\jhsubst q {N_0} {x_0} {\alpha_0} {Q_2} {Q_2'}$,
then $\jsubtype {Q_1'} {Q_2'}$.
\end{lem}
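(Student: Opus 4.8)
The plan is to induct on the derivation of $\jsubtype {Q_1} {Q_2}$, leaning on the fact that substitution into an atomic sort $\sapp s {\Nspine}$ leaves the head constant $s$ untouched and acts hereditarily and compositionally on the spine arguments $N_1, \ldots, N_k$. A preliminary observation drives the whole argument: any two atomic sorts related by subsorting have the \emph{same} spine. This is immediate by inspecting the three subsorting rules --- the base rule relates $\sapp {s_1} \Nspine$ and $\sapp {s_2} \Nspine$ over a common spine, reflexivity relates a sort to itself, and transitivity composes two relations over a shared spine by a trivial induction. Consequently, whenever $\jsubtype {Q_1} {Q_2}$ holds we may write $Q_1 = \sapp {s_1} \Nspine$ and $Q_2 = \sapp {s_2} \Nspine$ with identical $\Nspine$, so that substitution into either reduces to substituting into the shared spine, and by the compositional definition of substitution into atomic sorts we have $\jhsubst q {N_0} {x_0} {\alpha_0} {\sapp s \Nspine} {\sapp s {N_1' \ldots N_k'}}$ exactly when each $\jhsubst n {N_0} {x_0} {\alpha_0} {N_i} {N_i'}$ holds.

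First I would dispatch the two axiom cases. In the reflexivity case $Q_1 = Q_2$, so $Q_1'$ and $Q_2'$ are outputs of the same substitution instance; by functionality of substitution (Theorem~\ref{thm:functionality}) they coincide, and $\jsubtype {Q_1'} {Q_2'}$ follows by reflexivity. In the base case, $s_1 \subtype s_2 \in \Sigma$ and $Q_1, Q_2$ share the spine $\Nspine$; since the two given substitutions act on the very same arguments, functionality yields a common transformed spine $N_1', \ldots, N_k'$, so $Q_1' = \sapp {s_1} {N_1' \ldots N_k'}$ and $Q_2' = \sapp {s_2} {N_1' \ldots N_k'}$, and $\jsubtype {Q_1'} {Q_2'}$ holds by the base rule using the same signature declaration.

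The transitivity case is where the real care is needed and is the main obstacle. Here the sub-derivations give $\jsubtype {Q_1} {Q'}$ and $\jsubtype {Q'} {Q_2}$, and I want to apply the induction hypothesis to each --- but doing so requires substitution instances for the intermediate sort $Q'$, whose definedness is not among the assumptions. This is precisely where spine-preservation pays off: $\jsubtype {Q_1} {Q'}$ forces $Q'$ to share its spine with $Q_1$, and since substitution into an atomic sort is defined exactly when substitution into each spine argument is defined, the given derivation $\jhsubst q {N_0} {x_0} {\alpha_0} {Q_1} {Q_1'}$ furnishes the spine substitutions needed to produce some $Q''$ with $\jhsubst q {N_0} {x_0} {\alpha_0} {Q'} {Q''}$. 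With that instance in hand, the induction hypothesis applied to the two sub-derivations yields $\jsubtype {Q_1'} {Q''}$ and $\jsubtype {Q''} {Q_2'}$, and transitivity closes the case. No properties of hereditary substitution beyond its compositional definition and functionality (Theorem~\ref{thm:functionality}) are required throughout.
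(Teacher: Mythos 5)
Your proof is correct and follows essentially the same route as the paper, which disposes of the lemma in one line as a ``straightforward induction using Functionality of Substitution, since the subsorting rules depend only on term equalities.'' Your spine-preservation observation is exactly the detail that makes the transitivity case of that induction go through (it supplies definedness of the substitution on the intermediate sort $Q'$), so you have simply written out in full what the paper's sketch leaves implicit.
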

\begin{proof}
Straightforward induction using Theorem~\ref{thm:functionality}
(Functionality of Substitution), since the subsorting rules depend only on
term equalities, and not on well-formedness.
\end{proof}

Next, we must state the substitution theorem in a form general enough to
admit an inductive proof.  Following previous work on canonical forms-based
LF \cite{Watkins02tr,hl07mechanizing}, we strengthen its statement to one
that does not presuppose the well-formedness of the context or the
classifying types, but instead merely presupposes that hereditary
substitution is defined on them.  We call this strengthened theorem
``proto-substitution'' and prove it in several parts.  In order to capture
the convention that we only sort-check well-typed terms, proto-substitution
includes hypotheses about well-typedness of terms; these hypotheses use an
erasure $\eraserefs \Gamma$ that transforms an LFR context into an LF
context.
\begin{align*}
    \eraserefs \cdot &= \cdot &
    \eraserefs {(\Gamma, \rdecl x S A)} &= \eraserefs \Gamma, \decl x A
\end{align*}
The structure of the proof under this convention requires that we
interleave the proof of the core LF proto-substitution theorem.
Generally, reasoning related to core LF presuppositions is analogous to
refinement-related reasoning and can be dealt with mostly orthogonally, but
the presuppositions are necessary in certain cases.

\begin{thm}[Proto-Substitution, terms] ~
    \label{thm:proto-terms}
    \begin{enumerate}[\em(1)]
    \item If
        \begin{enumerate}[$\bullet$]
            \item $\jchecktermsort \GammaL {N_0} {S_0}$
                  \boxassum{and $\jchecktermtype {\eraserefs{\GammaL}}
                                  {N_0} {A_0}$}~,
                  and
            \item $\jchecktermsort {\GammaL, \rdecl {x_0} {S_0} {A_0}, \GammaR}
                    N S$
                  \boxassum{and $\jchecktermtype
                                    {\eraserefs\GammaL, \decl {x_0} {A_0},
                                     \eraserefs\GammaR} N A$}~,
                  and
            \item $\jhsubst \gamma {N_0} {x_0} {A_0} \GammaR {\bp \GammaR}$~, and
            \item $\jhsubst s {N_0} {x_0} {A_0} S {\bp S}$~
                  \boxassum{and $\jhsubst a {N_0} {x_0} {A_0} A {\bp A}$}~,
        \end{enumerate}
        then
        \begin{enumerate}[$\bullet$]
            \item $\jhsubst n {N_0} {x_0} {A_0} N {\bp N}$~, and
            \item $\jchecktermsort {\GammaL, \bp \GammaR} {\bp N} {\bp S}$
                \boxassum{and $\jchecktermtype {\eraserefs\GammaL,
                                                \eraserefs{(\bp\GammaR)}}
                                            {\bp N} {\bp A}$}~.
        \end{enumerate}
    \label{cl:proto-n}
    \item If
        \begin{enumerate}[$\bullet$]
            \item $\jchecktermsort \GammaL {N_0} {S_0}$
                \boxassum{and $\jchecktermtype {\eraserefs\GammaL} {N_0} {A_0}$}~,
                and
            \item $\jsynthtermsort {\GammaL, \rdecl {x_0} {S_0} {A_0}, \GammaR}
                    R S$~
                \boxassum{and $\jsynthtermtype {\eraserefs\GammaL,
                                                \decl {x_0} {A_0},
                                                \eraserefs\GammaR} R A$}~, and
            \item $\jhsubst \gamma {N_0} {x_0} {A_0} \GammaR {\bp \GammaR}$~,
        \end{enumerate}
        then
        \begin{enumerate}[$\bullet$]
            \item $\jhsubst s {N_0} {x_0} {A_0} S {\bp S}$
                \boxassum{and $\jhsubst a {N_0} {x_0} {A_0} A {\bp A}$~}, and
            \item either
                \begin{enumerate}[$-$]
                    \item $\jhsubst {rr} {N_0} {x_0} {A_0} R {\bp R}$ and
                    \item $\jsynthtermsort {\GammaL, \bp \GammaR} {\bp R}
                        {\bp S}$~
                        \boxassum{and $\jsynthtermtype
                                            {\eraserefs\GammaL,
                                             \eraserefs{(\bp\GammaR)}}
                                            {\bp R} {\bp A}$},
                \end{enumerate}
                  or
                \begin{enumerate}[$-$]
                    \item $\jhsubst {rn} {N_0} {x_0} {A_0} R {(\bp N,
                    \erasedeps{\bp A})}$ and 
                    \item $\jchecktermsort {\GammaL, \bp \GammaR} {\bp N}
                        {\bp S}$
                        \boxassum{and $\jchecktermtype
                                            {\eraserefs\GammaL,
                                             \eraserefs{(\bp\GammaR)}}
                                            {\bp N} {\bp A}$}~.
                \end{enumerate}
        \end{enumerate}
    \label{cl:proto-r}
    \end{enumerate}
\end{thm}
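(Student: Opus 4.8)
The plan is to prove both clauses simultaneously --- together with the analogous proto-substitution statements for sorts, types, classes, and contexts covered by the ``and similarly for other syntactic categories'' clause --- by a single lexicographic induction. The induction metric is the pair consisting of the erased type index $\erasedeps{A_0}$ first, and the derivation of the typing (or formation) judgment being substituted into second. I carry the boxed core-LF judgments along as an interleaved induction, since the refinement-level reasoning mostly mirrors the LF-level reasoning but occasionally needs an LF presupposition (e.g.\ to know the shape of a synthesized type). Every compositional case calls the induction hypothesis on a structurally smaller derivation at the \emph{same} type index; only one case will drop to a strictly smaller index.

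First I would dispatch the routine structural cases. In clause~\ref{cl:proto-n}, the rules $\Pi$-I, $\intersect$-I, and $\sstop$-I are compositional: hereditary substitution commutes with the relevant constructor, so I extend the context (for $\Pi$-I) or split into subderivations (for $\intersect$-I) and appeal to the induction hypothesis. In clause~\ref{cl:proto-r}, the variable rule for a variable other than $x_0$, the constant rule, and the intersection eliminations $\intersect$-E$_1$ and $\intersect$-E$_2$ are handled compositionally in the ``rr'' branch; when the head is $x_0$ (the ``rn'' branch) an intersection elimination is discharged by \emph{inverting} the $\intersect$-I that must have produced the checking derivation returned by the induction hypothesis. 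The variable case $R = x_0$ itself lands in ``rn'': \rulename{subst-rn-var} returns $N_0$, and since $x_0$ cannot occur in its own declared sort, the hypothesis $\jchecktermsort \GammaL {N_0} {S_0}$ supplies exactly what is needed.

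The two genuinely interesting cases are \rulename{switch} and application. For \rulename{switch} ($R \synths Q'$ with $Q' \subtype Q$, concluding $R \checks Q$), I apply clause~\ref{cl:proto-r} to the synthesis subderivation and use Lemma~\ref{lem:subst-subtype} to transport $Q' \subtype Q$ across the substitution; in the ``rn'' branch I additionally use transitivity of subsorting together with the fact (forced by rule \rulename{subst-n-atom-norm}) that the returned term is atomic, so it can be re-checked against the substituted $Q$ via \rulename{switch}. For the application $R = \mapp{R_1}{N_2}$ (rule $\Pi$-E), I apply the induction hypothesis to $R_1$ and to $N_2$. When substitution into the head $R_1$ returns via the ``rr'' branch, the Composition of Substitutions lemma (Lemma~\ref{lem:composition}) is exactly what reconciles $\hsubst{}{N_2'}{x}{A'}{T'}$ with the hereditary substitution of the original result type $\hsubst{}{N_2}{x}{A}{T}$, so that the two ways of computing the synthesized sort agree.

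The main obstacle is the ``rn'' sub-case of the application, where substituting into $R_1$ yields a $\lambda$-abstraction and rule \rulename{subst-rn-$\beta$} fires, contracting a freshly created redex. Here I must invoke the induction hypothesis to hereditarily substitute $N_2'$ for the $\lambda$-bound variable $x$ into the body $N_1$; this is a substitution at type index $\alpha_2$, which is legitimate only because Lemma~\ref{lem:subst-rn-alpha-subterm} guarantees that $\alpha_2$ is a strict subterm of $\erasedeps{A_0}$, so the first component of the metric strictly decreases even though the new derivation bears no structural relation to the original. Verifying that $N_2'$ genuinely checks against the domain sort $S'$ and that the inner substitution delivers a well-sorted result of the expected sort is the crux of the argument, and it is precisely the step that ties the type-indexed termination of hereditary substitution to the soundness of the substitution principle.
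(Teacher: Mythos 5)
Your overall plan coincides with the paper's proof: the same lexicographic metric ($\erasedeps{A_0}$ first, then the derivation hypothesizing $\rdecl{x_0}{S_0}{A_0}$), the same interleaving of the boxed core-LF judgments, the same use of Lemma~\ref{lem:subst-subtype} in the \rulename{switch} case, and the same treatment of the application case, where the \rulename{subst-rn-$\beta$} sub-case invokes the induction hypothesis at a strictly smaller type via Lemma~\ref{lem:subst-rn-alpha-subterm}, and Lemma~\ref{lem:composition} reconciles the substitution performed by the application rule (function sorts being dependent) with the main substitution. (Folding sorts, classes, and contexts into one mutual induction, rather than proving them in separate theorems layered on top of the terms theorem as the paper does, is an organizational difference only.)

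There is, however, one concrete gap: your handling of the \rulename{$\sstop$-I} case. You group it with \rulename{$\Pi$-I} and \rulename{$\intersect$-I} as ``compositional'' cases that ``split into subderivations \ldots and appeal to the induction hypothesis,'' but \rulename{$\sstop$-I} is an axiom: a derivation of $\jchecktermsort {\GammaL, \rdecl {x_0} {S_0} {A_0}, \GammaR} N \sstop$ has no premises and carries no information whatsoever about the term $N$. So there is no refinement-level subderivation on which any induction hypothesis could act, and the required output $\jhsubst n {N_0} {x_0} {A_0} N {N'}$ --- the very \emph{existence} of the substituted term --- cannot be produced the way you describe. The only source for it is the interleaved core LF proto-substitution theorem applied to the boxed assumption $\jchecktermtype {\eraserefs\GammaL, \decl {x_0} {A_0}, \eraserefs\GammaR} N A$; once $N'$ and its LF typing are obtained there, \rulename{$\sstop$-I} re-applies to give the sorting. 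This is the principal reason the boxed hypotheses appear in the theorem at all, and the paper explicitly singles the case out (``In the case where we check a term at sort $\sstop$, we require the core LF assumptions in order to invoke the core LF proto-substitution theorem''). Your stated rationale for carrying the LF judgments --- ``to know the shape of a synthesized type'' --- misses this, their essential use; with that one repair, your argument goes through as the paper's does.
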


\noindent \textbf{Note:} We tacitly assume the implicit signature $\Sigma$ is
well-formed.  We do \emph{not} tacitly assume that any of the contexts,
sorts, or types are well-formed.  We \emph{do} tacitly assume that contexts
respect the usual variable conventions in that bound variables are always
fresh, both with respect to other variables bound in the same context and
with respect to other free variables in terms outside the scope of the
binding.

\begin{proof}[Proof (sketch)]
By lexicographic induction on $\erasedeps {A_0}$ and the derivation $\D$
hypothesizing $\rdecl {x_0} {S_0} {A_0}$.

The most involved case is that for application $\mapp {R_1} {N_2}$.  When
$\head(R_1) = {x_0}$ hereditary substitution carries out a
$\beta$-reduction, and the proof invokes the induction hypothesis at a
smaller type but not a subderivation.  This case also requires
Lemma~\ref{lem:composition} (Composition): since function sorts are
dependent, the typing rule for application carries out a substitution, and
we need to compose this substitution with the $\hsubst s {N_0} {x_0}
{\alpha_0}$ substitution.

In the case where we check a term at sort $\sstop$, we require the core LF 
assumptions in order to invoke the core LF proto-substitution theorem.
%
\end{proof}

Next, we can prove analogous proto-substitution theorems for sorts/types
and for classes/kinds.

\begin{thm}[Proto-Substitution, sorts and types] ~
\label{thm:proto-types}
    \begin{enumerate}[\em(1)]
    \item If
        \begin{enumerate}[$\bullet$]
            \item $\jchecktermtype \GammaL {N_0} {S_0}$
                  \boxassum{and $\jchecktermtype {\eraserefs\GammaL}
                                {N_0} {A_0}$}~,
            \item $\jchecksortclass {\GammaL, \rdecl {x_0} {S_0} {A_0}, \GammaR}
                                S A$
                  \boxassum{and $\jchecktypekind {\eraserefs\GammaL,
                                    \decl {x_0} {A_0}, \eraserefs\GammaR} A$}~,
            and
            \item $\jhsubst \gamma {N_0} {x_0} {A_0} \GammaR {\bp\GammaR}$~,
        \end{enumerate}
        then
        \begin{enumerate}[$\bullet$]
            \item $\jhsubst s {N_0} {x_0} {A_0} S {\bp S}$
                  \boxassum{and $\jhsubst a {N_0} {x_0} {A_0} A {\bp A}$}~,
                  and
            \item $\jchecksortclass {\GammaL, \bp\GammaR} {\bp S} {\bp A}$~,
                  \boxassum{and $\jchecktypekind {\eraserefs\GammaL,
                                    \eraserefs{(\bp\GammaR)}} {\bp A}$}~.
        \end{enumerate}

    \item If
        \begin{enumerate}[$\bullet$]
            \item $\jchecktermtype \GammaL {N_0} {S_0}$
                  \boxassum{and $\jchecktermtype {\eraserefs\GammaL}
                                {N_0} {A_0}$}~,
            \item $\jsynthsortclass {\GammaL, \rdecl {x_0} {S_0} {A_0}, \GammaR}
                                Q P L K$
                  \boxassum{and $\jsynthtypekind \Gamma P K$}~, and
            \item $\jhsubst \gamma {N_0} {x_0} {A_0} \GammaR {\bp\GammaR}$~,
        \end{enumerate}
        then
        \begin{enumerate}[$\bullet$]
            \item $\jhsubst q {N_0} {x_0} {A_0} Q {\bp Q}$
                  \boxassum{and $\jhsubst p {N_0} {x_0} {A_0} P {\bp P}$}~,
                  and
            \item $\jhsubst l {N_0} {x_0} {A_0} L {\bp L}$
                  \boxassum{and $\jhsubst k {N_0} {x_0} {A_0} K {\bp K}$}~,
                  and
            \item $\jsynthsortclass {\GammaL, \bp\GammaR} {\bp Q} {\bp P}
                        {\bp L} {\bp K}$
                  \boxassum{and $\jsynthtypekind {\eraserefs\GammaL,
                                    \eraserefs{(\bp\GammaR)}}
                                    {\bp P} {\bp K}$}~.
        \end{enumerate}
    \end{enumerate}
\end{thm}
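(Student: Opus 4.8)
The plan is to mirror the structure of Theorem~\ref{thm:proto-terms} (Proto-Substitution, terms), proving both clauses simultaneously by lexicographic induction on $\erasedeps{A_0}$ and on the formation derivation $\D$ hypothesizing $\rdecl{x_0}{S_0}{A_0}$, while carrying the boxed core-LF kinding derivations along in parallel so that the interleaved core-LF proto-substitution is available whenever a core presupposition is needed. The structural observation that makes this theorem milder than its term-level sibling is that there are no sort-level variables: every atomic sort $Q$ is headed by a declared sort constant $s$, so substituting $N_0$ for the term variable $x_0$ never contracts a redex at the sort or class level. Consequently there is no ``rn'' analogue for sorts, and substitution into atomic sorts is purely compositional on the spine.

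First I would dispatch the compositional cases of clause~(1). For $\sstop \refines A$ the output is $\bp S = \sstop$ and the conclusion is immediate; for $\sinter{S_1}{S_2} \refines A$ and for the dependent function sort $\sspi x {S_1} {A_1} T \refines \api x {A_1} B$ I would apply the induction hypothesis to each immediate subderivation---in the function case after extending $\GammaR$ with $\rdecl x {S_1} {A_1}$ and pushing the substitution through that declaration---and then reassemble the substituted pieces, using Theorem~\ref{thm:functionality} (Functionality of Substitution) to check that the separately computed hereditary substitutions agree on shared subterms such as $A_1$. The base case of clause~(2), where $Q = s$ is a constant, is trivial: its declared class is closed, so the substitution acts as the identity and the \rulename{const}-style synthesis is simply re-derived. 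The genuine work is the application case $Q = \sapp{Q_1}{N_2}$ of clause~(2): here I would apply the induction hypothesis to $Q_1$ to substitute into the head, obtaining a function class $\ccpi x S A L$ for the substituted sub-sort, and invoke Theorem~\ref{thm:proto-terms} on the checking premise $N_2 \checks S$ to obtain $\bp{N_2}$ together with $\bp{N_2} \checks \bp S$.

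The main obstacle is exactly this application case, and it is the same one encountered in Theorem~\ref{thm:proto-terms}: the synthesized class is produced by a hereditary substitution $\jhsubst l {N_2} x A L {L_2}$ into the body of the function class, and I must show that substituting $N_0$ for $x_0$ into $L_2$ coincides with first substituting into $L$ and then applying the substituted argument $\bp{N_2}$. This is precisely Lemma~\ref{lem:composition} (Composition of Substitutions) at the class category, and it needs the side condition $x \notin \FV(N_0)$ supplied by the variable convention. The remaining effort is pure bookkeeping: the refinement half and the boxed core-LF half of each judgment must be kept in lockstep, appealing to the core-LF proto-substitution for $\jhsubst k {N_0}{x_0}{A_0} K {\bp K}$ and $\jhsubst p {N_0}{x_0}{A_0} P {\bp P}$ so that the substituted refinement judgment $\bp Q \refines \bp P$ and its class $\bp L$ remain coherent. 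Because no redex is ever created at the sort level, the induction is in fact well-founded on the formation derivation alone; the $\erasedeps{A_0}$ component of the metric, retained only to match the schema of Theorem~\ref{thm:proto-terms}, is never exercised by a sort-level $\beta$-step.
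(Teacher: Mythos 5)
Your proposal is correct and follows essentially the same route as the paper: a structural induction on the sort/class formation derivation, invoking the already-established Proto-Substitution theorem for terms at term premises (as in the $\sapp{Q}{N}$ case) and Lemma~\ref{lem:composition} to commute the outer substitution with the hereditary substitution performed by the atomic-sort application rule. Your observation that the $\erasedeps{A_0}$ component of the metric is never exercised—because sorts are headed by constants and no sort-level redex can arise—is exactly why the paper's proof inducts on the derivation alone.
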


\begin{proof}
By induction on the derivation hypothesizing $\rdecl {x_0} {S_0} {A_0}$,
using Theorem~\ref{thm:proto-terms} (Proto-Substitution, terms).  The
reasoning is essentially the same as the reasoning for
Theorem~\ref{thm:proto-terms}.
\end{proof}

\begin{thm}[Proto-Substitution, classes and kinds]
\label{thm:proto-kinds}
    ~ \\
    If
        \begin{enumerate}[$\bullet$]
            \item $\jchecktermtype \GammaL {N_0} {S_0}$
                  \boxassum{and $\jchecktermtype {\eraserefs\GammaL} {N_0} {A_0}$}~,
            \item $\jclass {\GammaL, \rdecl {x_0} {S_0} {A_0}, \GammaR} L K$
                  \boxassum{and $\jkind {\eraserefs\GammaL,
                                    \decl {x_0} {A_0}, \eraserefs\GammaR} K$}~,
            and
            \item $\jhsubst \gamma {N_0} {x_0} {A_0} \GammaR {\bp\GammaR}$~,
        \end{enumerate}
        then
        \begin{enumerate}[$\bullet$]
            \item $\jhsubst l {N_0} {x_0} {A_0} L {\bp L}$
                  \boxassum{and $\jhsubst k {N_0} {x_0} {A_0} K {\bp K}$}~,
                  and
            \item $\jclass {\GammaL, \bp\GammaR} {\bp L} {\bp K}$~,
                  \boxassum{and $\jkind {{\eraserefs\GammaL},
                                    \eraserefs{(\bp\GammaR)}} {\bp K}$}~.
        \end{enumerate}
\end{thm}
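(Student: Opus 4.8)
The plan is to prove the statement by structural induction on the derivation of the class-formation judgment $\jclass {\GammaL, \rdecl {x_0} {S_0} {A_0}, \GammaR} L K$ (equivalently, on the structure of the class $L$), following exactly the pattern of Theorem~\ref{thm:proto-types}. As there, the refinement conclusion and the boxed core-LF conclusion about $K$ are established together in each case, since the class-formation and kind-formation rules mirror one another. The earlier proto-substitution theorems for terms (Theorem~\ref{thm:proto-terms}) and for sorts and types (Theorem~\ref{thm:proto-types}) will supply the substitution instances for the sort, type, and term components embedded inside classes, and Theorem~\ref{thm:functionality} (Functionality of Substitution) will reconcile substitution outputs where a single classifier is analyzed twice.

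First I would dispatch the two base cases. When $L = \csort$ (so $K = \ktype$) or $L = \ctop$, neither $L$ nor $K$ contains $x_0$, so by the Trivial Substitution theorem $\jhsubst l {N_0} {x_0} {A_0} L L$ and $\jhsubst k {N_0} {x_0} {A_0} K K$, and the conclusions $\jclass {\GammaL, \bp\GammaR} L K$ and $\jkind {\eraserefs\GammaL, \eraserefs{(\bp\GammaR)}} K$ follow by reapplying the relevant axiomatic formation rules over the substituted context. For an intersection $L = \cinter {L_1} {L_2}$, both components refine the same $K$; applying the induction hypothesis to each subderivation yields $\bp{L_1}$, $\bp{L_2}$ and two substitution instances of $K$ that agree by Theorem~\ref{thm:functionality}, and reassembling through the intersection-formation rule gives the result, with substitution into $\cinter {L_1} {L_2}$ defined compositionally.

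The only case of genuine interest is the dependent function class $L = \ccpi x S A {L_0}$ with $K = \kpi x A {K_0}$. Its formation derivation has a domain premise $\jchecksortclass {\GammaL, \rdecl {x_0} {S_0} {A_0}, \GammaR} S A$ together with its core-LF counterpart, and a body premise deriving $L_0 \refines K_0$ in the context extended by $\rdecl x S A$. For the domain premise I would invoke Theorem~\ref{thm:proto-types} to obtain $\bp S$, $\bp A$ with $\jchecksortclass {\GammaL, \bp\GammaR} {\bp S} {\bp A}$; this lets me extend the context substitution to $\jhsubst \gamma {N_0} {x_0} {A_0} {(\GammaR, \rdecl x S A)} {(\bp\GammaR, \rdecl x {\bp S} {\bp A})}$. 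The body premise is a subderivation, so the induction hypothesis applies (with $\GammaR$ instantiated to $\GammaR, \rdecl x S A$), yielding $\bp{L_0}$, $\bp{K_0}$ and the formation judgment over $\GammaL, \bp\GammaR, \rdecl x {\bp S} {\bp A}$. Reapplying the $\Pi$-formation rule then assembles $\jclass {\GammaL, \bp\GammaR} {\ccpi x {\bp S} {\bp A} {\bp{L_0}}} {\kpi x {\bp A} {\bp{K_0}}}$, and the substitutions into $L$ and $K$ are again defined compositionally from those into their constituents.

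I expect the main obstacle to be purely bookkeeping in this last case: keeping the refinement and core-LF substitutions synchronized as the context grows by $\rdecl x S A$, and using the variable convention that $x$ is fresh for $x_0$ and $N_0$ so that the extended context substitution is well-formed. Crucially, because classes have no elimination forms --- there is no application at the level of $L$ --- substitution never triggers a $\beta$-reduction, so, unlike the terms case (Theorem~\ref{thm:proto-terms}), this proof makes no appeal to Lemma~\ref{lem:composition} (Composition of Substitutions), and a plain structural induction on the derivation suffices. This makes the classes/kinds theorem the most straightforward of the three proto-substitution results.
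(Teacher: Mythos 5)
Your overall strategy---induction on the derivation of $\jclass {\GammaL, \rdecl {x_0} {S_0} {A_0}, \GammaR} L K$, invoking Theorem~\ref{thm:proto-types} for the domain premise of the $\Pi$ case and reassembling via the formation rules---is exactly the paper's proof, and your handling of the $\csort$, intersection, and $\Pi$ cases is sound (including the use of Theorem~\ref{thm:functionality} to reconcile the two substitution instances of $K$ in the intersection case, and the observation that no appeal to Lemma~\ref{lem:composition} is needed at this level). However, your treatment of the $\ctop$ case contains a genuine error. The formation rule for $\ctop$ is the axiom $\jclass \Gamma \ctop K$ for an \emph{arbitrary} kind $K$: the nullary intersection refines every kind. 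Since kinds embed types, which embed terms, $K$ may very well mention $x_0$, so your claim that ``neither $L$ nor $K$ contains $x_0$'' is false in this case. Trivial substitution therefore does not produce the required outputs; worse, if $K$ mentions $x_0$, the conclusion with $K$ left unchanged is simply wrong, because such a $K$ is not even well-formed in the substituted context $\eraserefs\GammaL, \eraserefs{(\bp\GammaR)}$.

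This case is precisely where the boxed core-LF hypotheses earn their keep: from $\jkind {\eraserefs\GammaL, \decl {x_0} {A_0}, \eraserefs\GammaR} K$ one must invoke the core LF proto-substitution theorem for kinds (the boxed statements interleaved with the refinement-level theorems throughout this development) to obtain both $\jhsubst k {N_0} {x_0} {A_0} K {\bp K}$ and $\jkind {\eraserefs\GammaL, \eraserefs{(\bp\GammaR)}} {\bp K}$, and only then reapply the $\ctop$ axiom to conclude $\jclass {\GammaL, \bp\GammaR} \ctop {\bp K}$. This is the exact analogue of the $\sstop$ case of Theorem~\ref{thm:proto-terms}, which the paper explicitly flags as the place where ``we require the core LF assumptions in order to invoke the core LF proto-substitution theorem.'' With the $\ctop$ case repaired in this way, the rest of your argument goes through as written.
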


\begin{proof}
By induction on the derivation hypothesizing $\rdecl {x_0} {S_0} {A_0}$,
using Theorem~\ref{thm:proto-types} (Proto-Substitution, sorts and types).
\end{proof}

Then, we can finish proto-substitution by proving a proto-substitution
theorem for contexts.

\pagebreak[2]
\begin{thm}[Proto-Substitution, contexts]
\label{thm:proto-ctxs}
~\\
If
    \begin{enumerate}[$\bullet$]
        \item $\jchecktermtype \GammaL {N_0} {S_0}$
              \boxassum{and $\jchecktermtype {\eraserefs\GammaL} {N_0} {A_0}$}~,
        and
        \item $\jctx {\GammaL, \rdecl {x_0} {S_0} {A_0}}$
              \boxassum{and $\jctx {\eraserefs\GammaL, \decl {x_0} {A_0},
                                    \eraserefs\GammaR}$}~,
    \end{enumerate}
then
    \begin{enumerate}[$\bullet$]
        \item $\jhsubst \gamma {N_0} {x_0} {A_0} \GammaR {\bp\GammaR}$~,
        and
        \item $\jctx {\GammaL, \bp\GammaR}$
              \boxassum{and $\jctx {\eraserefs\GammaL,
                                    \eraserefs{(\bp\GammaR)}}$}~.
    \end{enumerate}
\end{thm}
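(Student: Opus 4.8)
The plan is to proceed by induction on the structure of $\GammaR$ (equivalently, on the derivation witnessing the well-formedness of the full context $\GammaL, \rdecl{x_0}{S_0}{A_0}, \GammaR$), stripping refinement declarations off the right-hand end of $\GammaR$ one at a time. Since every declaration in $\GammaR$ has the shape $\rdecl x S A$, hereditary substitution into a context is defined purely compositionally, and the well-formedness of the output reduces to the well-formedness of each substituted declaration, which is exactly what Theorem~\ref{thm:proto-types} supplies. The standing hypotheses $\jchecktermtype\GammaL{N_0}{S_0}$ and its boxed erasure $\jchecktermtype{\eraserefs\GammaL}{N_0}{A_0}$ remain fixed throughout and are passed unchanged into each appeal to proto-substitution.

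For the base case $\GammaR = \cdot$, substitution into the empty context is trivially defined with the output $\cdot$, so $\jhsubst\gamma{N_0}{x_0}{A_0}\cdot\cdot$ holds; inverting the context-formation rule on the assumed $\jctx{\GammaL, \rdecl{x_0}{S_0}{A_0}}$ recovers $\jctx\GammaL$, which is the desired $\jctx{\GammaL, \bp\GammaR}$, and the erased version $\jctx{\eraserefs\GammaL}$ follows symmetrically from the boxed assumption. For the inductive step, write $\GammaR = \Gamma_2, \rdecl x S A$. Inverting context formation yields both the shorter judgment $\jctx{\GammaL, \rdecl{x_0}{S_0}{A_0}, \Gamma_2}$ and the sort-formation judgment $\jchecksortclass{\GammaL, \rdecl{x_0}{S_0}{A_0}, \Gamma_2} S A$ (together with the corresponding LF type-formation judgment from the boxed hypothesis). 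The induction hypothesis applied to $\Gamma_2$ gives $\jhsubst\gamma{N_0}{x_0}{A_0}{\Gamma_2}{\bp{\Gamma_2}}$ and $\jctx{\GammaL, \bp{\Gamma_2}}$ (with its erasure). I then feed the sort-formation judgment into Theorem~\ref{thm:proto-types} (Proto-Substitution, sorts and types): its three hypotheses are met by the standing checking judgment for $N_0$, the sort formation for $S \refines A$, and the substitution on $\Gamma_2$ just produced. This delivers $\jhsubst s{N_0}{x_0}{A_0}S{\bp S}$, $\jhsubst a{N_0}{x_0}{A_0}A{\bp A}$, and $\jchecksortclass{\GammaL, \bp{\Gamma_2}}{\bp S}{\bp A}$ (with the erased type formation). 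Setting $\bp\GammaR = \bp{\Gamma_2}, \rdecl x{\bp S}{\bp A}$, the compositional substitution rule gives $\jhsubst\gamma{N_0}{x_0}{A_0}\GammaR{\bp\GammaR}$, and one application of the context-formation rule to $\jctx{\GammaL, \bp{\Gamma_2}}$ and $\jchecksortclass{\GammaL, \bp{\Gamma_2}}{\bp S}{\bp A}$ produces $\jctx{\GammaL, \bp\GammaR}$, again with the erased counterpart obtained in parallel.

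This is the lightest of the proto-substitution theorems: all of the genuine difficulty has already been absorbed into Theorems~\ref{thm:proto-terms} and~\ref{thm:proto-types}, and the present argument merely threads their outputs through the context-formation rule along a strict subcontext. I therefore expect no real obstacle, only bookkeeping: the one point demanding care is keeping each LFR judgment and its erased LF counterpart in lockstep across the induction, which is automatic here because Theorem~\ref{thm:proto-types} emits both halves simultaneously. (Should a consistency question arise about whether the outputs $\bp S$, $\bp A$ agree with instances presupposed elsewhere in the mutually dependent substitution development, it is settled by Theorem~\ref{thm:functionality}, Functionality of Substitution.)
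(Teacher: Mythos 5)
Your proposal is correct and matches the paper's own proof, which is exactly the ``straightforward induction on $\GammaR$'' you carry out: peel declarations off $\GammaR$, invoke Theorem~\ref{thm:proto-types} on each sort-formation premise obtained by inversion, and rebuild the context-formation derivation (keeping the erased LF judgments in lockstep). The paper leaves all of this implicit in a one-line proof; your write-up simply supplies the bookkeeping.
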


\begin{proof}
Straightforward induction on $\GammaR$.
\end{proof}

Finally, we have enough obtain a proof of the desired substitution theorem.

\begin{thm}[Substitution]
\label{thm:substitution}
   Suppose $\jchecktermsort \GammaL {N_0} {S_0}$~.  Then:
   \begin{enumerate}[\em(1)]
   \item
        If
          \begin{enumerate}[$\bullet$]
          \item
          $\jctx {\GammaL, \rdecl {x_0} {S_0} {A_0}, \GammaR}$~, and
          \item
          $\jchecksortclass {\GammaL, \rdecl {x_0} {S_0} {A_0}, \GammaR} S A$~,
          and
          \item
          $\jchecktermsort {\GammaL, \rdecl {x_0} {S_0} {A_0}, \GammaR} N S$~,
          \end{enumerate}
          then
          \begin{enumerate}[$\bullet$]
           \item 
                 $\jhsubst \gamma {N_0} {x_0} {A_0} \GammaR {\GammaR'}$ and
                 $\jctx {\GammaL, \GammaR'}$~, and
           \item 
                 $\jhsubst s {N_0} {x_0} {A_0} S {S'}$ and
                 $\jhsubst a {N_0} {x_0} {A_0} A {A'}$ and
                 $\jchecksortclass {\GammaL, \GammaR'} {S'} {A'}$~, and
           \item 
                 $\jhsubst n {N_0} {x_0} {A_0} N {N'}$ and
                 $\jchecktermsort {\GammaL, \GammaR'} {N'} {S'}$~,
          \end{enumerate}
    \item
        If
            \begin{enumerate}[$\bullet$]
            \item $\jctx {\GammaL, \rdecl {x_0} {S_0} {A_0}, \GammaR}$~ and
            \item $\jsynthtermtype {\GammaL, \rdecl {x_0} {S_0} {A_0}, \GammaR}
                    R S~$,
            \end{enumerate}
            then
            \begin{enumerate}[$\bullet$]
            \item $\jhsubst \gamma {N_0} {x_0} {A_0} \GammaR {\GammaR'}$ and
                  $\jctx {\GammaL, \GammaR'}$~, and
                  $\jhsubst s {N_0} {x_0} {A_0} S {S'}$~, and either
                \begin{enumerate}[$-$]
                    \item $\jhsubst {rr} {N_0} {x_0} {A_0} R {R'}$ and
                          $\jsynthtermtype {\GammaL, \GammaR'} {R'} {S'}$~,
                          or
                    \item $\jhsubst {rn} {N_0} {x_0} {A_0} R {(N', \alpha')}$
                          and
                          $\jchecktermtype {\GammaL, \GammaR'} {N'} {S'}$~,
                \end{enumerate}
            \end{enumerate}
   \end{enumerate}
    and similarly for other syntactic categories.
\end{thm}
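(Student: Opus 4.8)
The plan is to obtain the Substitution theorem as a straightforward corollary of the four Proto-Substitution theorems (Theorems~\ref{thm:proto-ctxs}, \ref{thm:proto-types}, \ref{thm:proto-kinds}, and~\ref{thm:proto-terms}), into which all of the genuinely difficult reasoning has already been concentrated. There are exactly two gaps to bridge. First, the proto versions carry explicit boxed core-LF typing assumptions that do not appear among the hypotheses here. Second, and more importantly, the proto versions take the \emph{definedness} of hereditary substitution on the classifiers---the context $\GammaR$, the sort $S$, and the type $A$---as \emph{hypotheses}, whereas the present statement presupposes well-formedness instead and must \emph{conclude} that these substitutions are defined.

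First I would discharge the boxed core-LF assumptions. These follow from presupposition properties that embody the refinement restriction: a derivation of $\jchecktermsort \GammaL {N_0} {S_0}$ presupposes $\jchecktermtype {\eraserefs\GammaL} {N_0} {A_0}$ for the type $A_0$ with $S_0 \refines A_0$; a derivation of $\jchecksortclass {\GammaL, \rdecl {x_0} {S_0} {A_0}, \GammaR} S A$ presupposes the corresponding core judgment $\jchecktypekind {\eraserefs\GammaL, \decl {x_0} {A_0}, \eraserefs\GammaR} A$; and well-formedness of an LFR context presupposes well-formedness of its erasure $\eraserefs{(\cdot)}$. Each is a routine induction, resting on the invariant that sorts are only ever assigned to terms already well-typed in pure LF.

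Next I would establish definedness of the classifier substitutions, proceeding in the order contexts, then sorts and types. Applying Proto-Substitution for contexts (Theorem~\ref{thm:proto-ctxs}) to the hypothesis $\jctx {\GammaL, \rdecl {x_0} {S_0} {A_0}, \GammaR}$ yields both that $\jhsubst \gamma {N_0} {x_0} {A_0} \GammaR {\GammaR'}$ is defined and that $\jctx {\GammaL, \GammaR'}$. Feeding this context substitution, together with the core presuppositions discharged above, into Proto-Substitution for sorts and types (Theorem~\ref{thm:proto-types}) applied to $\jchecksortclass {\GammaL, \rdecl {x_0} {S_0} {A_0}, \GammaR} S A$ then yields definedness of $\jhsubst s {N_0} {x_0} {A_0} S {S'}$ and $\jhsubst a {N_0} {x_0} {A_0} A {A'}$, along with the refined conclusion $\jchecksortclass {\GammaL, \GammaR'} {S'} {A'}$.

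With the classifier substitutions now in hand, I would finish by invoking Proto-Substitution for terms (Theorem~\ref{thm:proto-terms}). For clause~(1) of the present theorem---the checking case---clause~\ref{cl:proto-n} of proto-terms applies directly, producing definedness of $\jhsubst n {N_0} {x_0} {A_0} N {N'}$ together with $\jchecktermsort {\GammaL, \GammaR'} {N'} {S'}$. For clause~(2)---the synthesis case---only the context substitution is required as input, since here $\jhsubst s {N_0} {x_0} {A_0} S {S'}$ is itself produced as an \emph{output}; clause~\ref{cl:proto-r} then delivers the ``rr''/``rn'' dichotomy together with the appropriate synthesis or checking conclusion over $\GammaL, \GammaR'$. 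The remaining cases ``for other syntactic categories'' follow by the identical recipe, calling Proto-Substitution for classes and kinds (Theorem~\ref{thm:proto-kinds}) in place of proto-terms where appropriate. The principal difficulty here is not any isolated deep argument---the one genuinely delicate case, the dependent application that leans on Lemma~\ref{lem:composition} (Composition), lives entirely inside proto-terms---but rather the bookkeeping: the proto theorems must be chained in exactly the order contexts, sorts and types, then terms, so that each invocation's definedness hypotheses are already available, and the boxed core-LF presuppositions must be kept consistently in scope so that the interleaved LF and refinement reasoning lines up at every step.
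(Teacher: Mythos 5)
Your proof takes the same route as the paper's: the paper disposes of this theorem in a single line, as a straightforward corollary of the four Proto-Substitution theorems, and your chaining---contexts first, then sorts and types, then terms, with the synthesis clause needing only the context substitution as input---is exactly the right way to instantiate that corollary.

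One sub-claim is wrong, however, even though it does not damage the overall structure. You assert that the boxed core-LF hypotheses of the proto theorems can be discharged by ``routine induction'' on the LFR derivations. They cannot: the rule \rulename{$\sstop$-I} has no premises, so a derivation of $\jchecktermsort \Gamma N \sstop$ carries no information whatsoever about the LF typing of $N$, and no induction over LFR derivations can recover $\jchecktermtype {\eraserefs\Gamma} N A$ from it. This is precisely why the paper keeps these core-LF facts as explicit boxed hypotheses of the proto theorems, and why its proof sketch for Theorem~\ref{thm:proto-terms} remarks that the $\sstop$ case ``requires the core LF assumptions.'' The correct move is the paper's: the hypotheses of the Substitution theorem are read under the refinement-restriction convention, so the core-LF counterparts are \emph{presupposed}---tacitly supplied alongside the stated LFR judgments---rather than derived from them.
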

\begin{proof}
Straightforward corollary of Proto-Substitution Theorems~\ref{thm:proto-terms},
\ref{thm:proto-types}, \ref{thm:proto-kinds}, and \ref{thm:proto-ctxs}.
\end{proof}

\noindent Having proven substitution, we henceforth tacitly assume that all subjects
of a judgment are sufficiently well-formed for the judgment to make
sense.  In particular, we assume that all contexts are well-formed, and
whenever we assume $\jchecktermtype \Gamma N S$, we assume that for some
well-formed type $A$, we have $\jchecksortclass \Gamma S A$ and
$\jchecktermtype \Gamma N A$.  These assumptions embody our refinement
restriction: we only sort-check a term if it is already well-typed and even
then only at sorts that refine its type.

Similarly, whenever we assume $\jchecksortclass \Gamma S A$, we tacitly
assume that $\jchecktypekind \Gamma A$, and whenever we assume
$\jclass \Gamma L K$, we tacitly assume that $\jkind \Gamma K$.


\subsubsection{Identity}

Just as we needed a composition lemma to prove the substitution theorem, in
order to prove the identity theorem we need a lemma about how
$\eta$-expansion commutes with substitution.\footnote{The
categorically-minded reader might think of this as the right and left unit
laws for $\circ$ while thinking of the composition lemma above as the
associativity of $\circ$, where $\circ$ in the category represents
substitution, as usual.}

In stating this lemma, we require a judgment that predicts the simple type
output of ``{rn}'' substitution.  This judgment just computes the simple
type as in ``{rn}'' substitution, but without computing anything having to
do with substitution.  Since it resembles a sort of ``approximate typing
judgment'', we write it $\treduce {x_0} {\alpha_0} R \alpha$.  As with
``{rn}'' substitution, it is only defined when the head of $R$ is ${x_0}$.%
\begin{mathpar}
  \inferaxiom{\treduce {x_0} {\alpha_0} {x_0} {\alpha_0}}
  \and
  \inferrule{\treduce {x_0} {\alpha_0} R {\aarrow \alpha \beta}}
            {\treduce {x_0} {\alpha_0} {\mapp R N} \beta}
\end{mathpar}

\begin{lem}
\label{lem:subst-treduce}
  If\/ $\jhsubst {rn} {N_0} {x_0} {\alpha_0} R {(N', \alpha')}$
  and $\treduce {x_0} {\alpha_0} R \alpha$, then $\alpha' = \alpha$.
\end{lem}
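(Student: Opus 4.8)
The plan is to prove this by a straightforward induction on the structure of the atomic term $R$, exploiting the fact that both the ``rn'' substitution judgment $\jhsubst {rn} {N_0} {x_0} {\alpha_0} R {(N', \alpha')}$ and the approximate typing judgment $\treduce {x_0} {\alpha_0} R \alpha$ are syntax-directed on the spine of $R$ and are only defined when $\head(R) = x_0$. By Lemma~\ref{lem:rr-rn-excl}, the hypothesis $\jhsubst {rn} {N_0} {x_0} {\alpha_0} R {(N', \alpha')}$ already guarantees $\head(R) = x_0$, so both judgments proceed in lock-step through the same two cases, and in the inductive case the head of the subterm $R_1$ is again $x_0$, keeping the approximate typing judgment well-defined there.

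In the base case $R = x_0$, rule \rulename{subst-rn-var} forces $\alpha' = \alpha_0$, while the corresponding axiom of the approximate typing judgment forces $\alpha = \alpha_0$; hence $\alpha' = \alpha$ immediately. In the inductive case $R = \mapp {R_1} {N_2}$, I would invert both derivations. Inverting \rulename{subst-rn-$\beta$} yields a subderivation $\jhsubst {rn} {N_0} {x_0} {\alpha_0} {R_1} {(\mlam x {N_1}, \aarrow {\alpha_2} {\alpha_1})}$ with $\alpha' = \alpha_1$, and inverting the application rule of the approximate typing judgment yields a subderivation $\treduce {x_0} {\alpha_0} {R_1} {\aarrow \gamma \alpha}$ for some $\gamma$, the codomain being the overall output $\alpha$. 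Applying the induction hypothesis to $R_1$ (both judgments hold for it) gives equality of the two output types, namely $\aarrow {\alpha_2} {\alpha_1} = \aarrow \gamma \alpha$, whence $\alpha_1 = \alpha$ and therefore $\alpha' = \alpha$.

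I do not anticipate any real obstacle here: the essential observation is that the ``rn'' substitution computes its output type index by exactly the same arrow-codomain bookkeeping that the approximate typing judgment performs, the only difference being that the former simultaneously carries out the actual hereditary substitutions (producing the $\mlam x {N_1}$ and its substituted body) while the latter discards the term information and tracks the simple type alone. The one point that requires a little care is aligning the arrow structures when applying the induction hypothesis: I rely on the fact that \rulename{subst-rn-$\beta$} demands the substitution output for $R_1$ to be an arrow type $\aarrow {\alpha_2} {\alpha_1}$, which is precisely what lets me match it against the $\aarrow \gamma \alpha$ produced by inversion on the approximate typing side and read off the codomains.
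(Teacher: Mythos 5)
Your proof is correct and follows essentially the same route as the paper, which proves this lemma by the same straightforward induction over the (syntax-directed) structure of $R$, matching the \rulename{subst-rn-var} case against the variable axiom of $\treduce {x_0} {\alpha_0} R \alpha$ and the \rulename{subst-rn-$\beta$} case against the application rule, with the arrow-codomain bookkeeping you describe. No gaps; the appeal to Lemma~\ref{lem:rr-rn-excl} is harmless but not even needed, since inversion on each syntactic form of $R$ already pins down the final rule of both derivations.
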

\begin{proof}
Straightforward induction.
\end{proof}

\begin{lem}[Commutativity of Substitution and $\eta$-expansion]
Substitution commutes with $\eta$-expansion.  In particular:
  \label{lem:commutativity}
  ~
  \begin{enumerate}[\em(1)]
    \item
    \begin{enumerate}[\em(a)]
      \item If\/ $\jhsubst n {\expand \alpha x} x \alpha N {N'}$,
            then $N = N'$~,
            \label{cl:left-n}
      \item If\/ $\jhsubst{rr} {\expand \alpha x} x \alpha R {R'}$,
            then $R = R'$~,
            \label{cl:left-rr}
      \item If\/ $\jhsubst{rn} {\expand \alpha x} x \alpha R {(N, \beta)}$,
            then $\expand {\beta} R = N$~,
            \label{cl:left-rn}
    \end{enumerate}
    \item If\/ $\jhsubst n {N_0} {x_0} {\alpha_0} {\expand \alpha R} {N'}$, then
    \begin{enumerate}[\em(a)]
      \item if\/ $\head(R) \neq {x_0}$, then
            $\jhsubst {rr} {N_0} {x_0} {\alpha_0} R {R'}$ and
            $\expand \alpha {R'} = N'$~,
            \label{cl:right-rr}
      \item if\/ $\head(R) = {x_0}$ and
            $\treduce {x_0} {\alpha_0} R \alpha$, then
            $\jhsubst {rn} {N_0} {x_0} {\alpha_0} R {(N', \alpha)}$~,
            \label{cl:right-rn}
    \end{enumerate}
  \end{enumerate}
and similarly for other syntactic categories.
\end{lem}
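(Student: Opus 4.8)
The plan is to prove parts~(1) and~(2) simultaneously by a lexicographic induction whose dominant component is the simple type $\alpha$ at which the $\eta$-expansion is taken --- this is both the substitution index and the expansion type in part~(1), and the expansion type (but \emph{not} the substitution index $\alpha_0$) in part~(2) --- and whose subordinate component is the structure of the given substitution derivation. The two parts will invoke one another only across a strict decrease in $\alpha$: part~(1) at type $\alpha$ appeals to part~(2) only at proper subterms of $\alpha$, and part~(2) at type $\alpha$ appeals to part~(1) only at proper subterms of $\alpha$, while all appeals at the same $\alpha$ are to strict subderivations. Thus the metric closes.

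For part~(2) I would induct on the shape of $\expand \alpha R$. When $\alpha$ is a base type, $\expand \alpha R = R$, so the derivation ends in \rulename{subst-n-atom} or \rulename{subst-n-atom-norm}; Lemma~\ref{lem:rr-rn-excl} selects which one according to $\head(R)$, and in the \rulename{subst-n-atom-norm} case Lemma~\ref{lem:subst-treduce} together with the base-type side condition pins the recorded output type down to $\alpha$, giving exactly~(b). When $\alpha = \aarrow {\alpha_1} {\alpha_2}$, the term is $\mlam y {\expand {\alpha_2} {(\mapp R {\expand {\alpha_1} y})}}$, the derivation strips a $\lambda$, and the induction hypothesis for part~(2) at the smaller type $\alpha_2$ applies to the body, whose head agrees with that of $R$. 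The delicate subcase is $\head(R) = x_0$: after the hypothesis yields an \rulename{subst-rn-$\beta$} derivation I must show its $\lambda$-body is the expected one, and this is precisely where I invoke part~(1)(a) at the strictly smaller type $\alpha_1$, after using part~(2)(a) to see that $\expand {\alpha_1} y$ survives the substitution unchanged and Lemma~\ref{lem:subst-treduce} to force the recorded domain type to be $\alpha_1$.

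For part~(1) I would induct on the given derivation at fixed $\alpha$. Clauses~(a) and~(b) are routine structural recursions: substitution into a $\lambda$ or an application recurses on subterms, the \rulename{subst-n-atom} case defers to~(b), and the \rulename{subst-n-atom-norm} case defers to~(c) and then uses that $\eta$-expansion at a base type is the identity. The heart is clause~(c) in the \rulename{subst-rn-$\beta$} case for $R = \mapp {R_1} {N_2}$. Here the inductive hypothesis for~(c) on the subderivation for $R_1$ identifies the returned abstraction with $\expand {\aarrow {\alpha_2} {\alpha_1}} {R_1}$, the inductive hypothesis for~(a) leaves $N_2$ unchanged, and the remaining inner substitution of $N_2$ into the body $\expand {\alpha_1} {(\mapp {R_1} {\expand {\alpha_2} y})}$ is an instance of part~(2) at type $\alpha_1$, which Lemma~\ref{lem:subst-rn-alpha-subterm} guarantees is a strict subterm of $\alpha$. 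Because the head of that body is $x \ne y$, part~(2)(a) applies, reducing matters to substituting $N_2$ into $R_1$ (unchanged, by the Trivial Substitution theorem since $y \notin \FV(R_1)$) and into $\expand {\alpha_2} y$ (which collapses to $N_2$ by part~(2)(b) and Theorem~\ref{thm:functionality}). Reassembling, the body's image is $\expand {\alpha_1} R$, which is exactly the claim.

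The main obstacle is organizing this termination metric so that the mutual dependence among the clauses of both parts is well-founded --- in particular, that each cross-part appeal genuinely lands at a smaller $\alpha$. The technical device that makes this work is Lemma~\ref{lem:subst-rn-alpha-subterm}: every \rulename{subst-rn-$\beta$} step exposes a function type $\aarrow {\alpha_2} {\alpha_1}$ that is a subterm of the ambient $\alpha$, so the component types $\alpha_1$ and $\alpha_2$ on which the cross-calls are made are strict subterms. Lemma~\ref{lem:subst-treduce} and Theorem~\ref{thm:functionality} then discharge the lower-level obligations, forcing the various substitution outputs and their recorded simple types to be the unique expected ones, so that the reassembled derivations match the $\eta$-expanded terms syntactically.
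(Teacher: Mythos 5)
Your proposal is correct and takes essentially the same route as the paper, which proves the lemma by lexicographic induction on $\alpha$ and the given substitution derivation, with clauses (1a)--(1c) analyzing the substitution derivation and clauses (2a)--(2b) analyzing the simple type $\alpha$ at which $R$ is $\eta$-expanded. The details you supply---Lemma~\ref{lem:subst-rn-alpha-subterm} to justify the cross-part appeals at strictly smaller types, and Lemma~\ref{lem:subst-treduce}, Functionality, and Trivial Substitution for the remaining bookkeeping---are precisely the ingredients the paper's sketch leaves implicit.
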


\begin{proof}[Proof (sketch)]
By lexicographic induction on $\alpha$ and the given substitution
derivation.  The proofs of clauses~\ref{cl:left-n}, \ref{cl:left-rr}, and
\ref{cl:left-rn} analyze the substitution derivation, while the proofs of
clauses~\ref{cl:right-rr} and \ref{cl:right-rn} analyze the simple type
$\alpha$ at which $R$ is $\eta$-expanded.
\end{proof}

\noindent \textbf{Note:} By considering the variable being substituted
for to be a bound variable subject to $\alpha$-conversion\footnote{In
  other words, by reading $\jhsubst n {N_0} {x_0} {\alpha_0} N N'$ as
  something like
  $\mathop{\mathrm{subst}^\mathrm{n}_{\alpha_0}}({N_0},\, x_0 \bdot N)
  = N'$, where ${x_0}$ is bound in $N$.}, we can see that our
commutativity theorem is equivalent to an apparently more general one
where the $\eta$-expanded variable is not the same as the
substituted-for variable.  For example, in the case of clause
(\ref{cl:left-n}), we would have that if $\jhsubst n {\expand \alpha
  x} y \alpha N {N'}$, then $\subst x y N = N'$.  We will freely make
use of this fact in what follows when convenient.

\begin{thm}[Expansion]
\label{thm:expansion}
If\/ $\jchecksortclass \Gamma S A$ and $\jsynthtermtype \Gamma R S$,
then $\jchecktermtype \Gamma {\expand A R} S$.
\end{thm}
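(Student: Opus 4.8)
The plan is to proceed by induction on the structure of the sort $S$, using the refinement hypothesis $\jchecksortclass \Gamma S A$ to control the shape of $A$. Since $\eta$-expansion is defined by recursion on the type index $A$, and $A$ tracks $S$ (up to the intersection and $\top$ cases, where $A$ stays fixed while $S$ shrinks), an induction on $S$ furnishes a well-founded metric in every case. In each case I would first invert the refinement derivation to pin down the shape of $A$, then split on the typing rules that could have produced $\jsynthtermtype \Gamma R S$.

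For the base cases, if $S = Q$ is atomic then inversion on $\jchecksortclass \Gamma Q A$ forces $A = P$ atomic, so $\expand P R = R$ and the goal $\jchecktermtype \Gamma R Q$ follows from $\jsynthtermtype \Gamma R Q$ by the \rulename{switch} rule together with reflexivity of subsorting ($Q \subtype Q$). If $S = \sstop$ the \rulename{$\sstop$-I} axiom discharges the goal immediately. If $S = \sinter {S_1} {S_2}$, then both $S_1$ and $S_2$ refine the same $A$, so the intersection eliminations give $\jsynthtermtype \Gamma R {S_1}$ and $\jsynthtermtype \Gamma R {S_2}$; two appeals to the induction hypothesis produce $\jchecktermtype \Gamma {\expand A R} {S_i}$ for $i = 1,2$, and \rulename{$\intersect$-I} recombines them (note $\expand A R$ is the same term in both subgoals, since the expansion depends only on $A$).

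The interesting case is the function sort $S = \sspi x {S_1} {A_1} {S_2}$. Inversion on the refinement derivation gives $A = \api x {A_1} B$ with $\jchecksortclass \Gamma {S_1} {A_1}$ and $\jchecksortclass \Gamma {S_2} B$, and by definition $\expand A R = \mlam x {\expand B {\mapp R {\expand {A_1} x}}}$. By \rulename{$\Pi$-I} it suffices to derive $\jchecktermtype {\Gamma, \rdecl x {S_1} {A_1}} {\expand B {\mapp R {\expand {A_1} x}}} {S_2}$. Here I would first apply the induction hypothesis at the smaller sort $S_1$ to the variable $x$, which synthesizes $S_1$ by the \rulename{var} rule, obtaining $\jchecktermtype {\Gamma, \rdecl x {S_1} {A_1}} {\expand {A_1} x} {S_1}$. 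Feeding this into \rulename{$\Pi$-E} against the weakened hypothesis $\jsynthtermtype {\Gamma, \rdecl x {S_1} {A_1}} R {\sspi x {S_1} {A_1} {S_2}}$ yields $\jsynthtermtype {\Gamma, \rdecl x {S_1} {A_1}} {\mapp R {\expand {A_1} x}} {\hsubst s {\expand {A_1} x} x {A_1} {S_2}}$.

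The main obstacle is that the synthesized sort is the substitution instance $\hsubst s {\expand {A_1} x} x {A_1} {S_2}$ rather than $S_2$ itself; this is exactly where the dependency of sorts on terms bites. It is resolved by Lemma~\ref{lem:commutativity} (Commutativity of Substitution and $\eta$-expansion), clause~\ref{cl:left-n} in its sort-level form, which states precisely that hereditarily substituting an $\eta$-expanded variable back for itself is the identity; hence $\hsubst s {\expand {A_1} x} x {A_1} {S_2} = S_2$. With the synthesized sort thus identified as $S_2$, a second appeal to the induction hypothesis at the smaller sort $S_2$ (using $\jchecksortclass {\Gamma, \rdecl x {S_1} {A_1}} {S_2} B$ by weakening) gives $\jchecktermtype {\Gamma, \rdecl x {S_1} {A_1}} {\expand B {\mapp R {\expand {A_1} x}}} {S_2}$, and \rulename{$\Pi$-I} completes the case. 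I expect the commutativity appeal to be the one genuinely delicate point; the remaining work is routine inversion and rule application.
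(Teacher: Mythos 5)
Your overall strategy---induction on $S$, with base cases by \rulename{switch} plus reflexive subsorting, \rulename{$\sstop$-I}, and \rulename{$\intersect$-I}, and the $\Pi$ case assembled from \rulename{var}, \rulename{$\Pi$-E}, Commutativity, and \rulename{$\Pi$-I}---is the same as the paper's, and the base cases are fine. But there is a genuine gap in the $\Pi$ case: you apply \rulename{$\Pi$-E} as if the sort $\hsubst s {\expand {A_1} x} x {A_1} {S_2}$ automatically denoted something, and only afterwards invoke Lemma~\ref{lem:commutativity} to identify it with $S_2$. In this framework hereditary substitution is a \emph{partial} operation defined relationally, and the third premise of \rulename{$\Pi$-E} is the derivability of the judgment $\jhsubst s {\expand {A_1} x} x {A_1} {S_2} {S_2'}$ for some $S_2'$; until that judgment is established, there is no object for Commutativity to speak about. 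Lemma~\ref{lem:commutativity} cannot supply this: every one of its clauses is conditional, of the form ``\emph{if} the substitution judgment holds, \emph{then} the output equals the input,'' so it identifies the result of the substitution but never establishes its existence.

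The paper closes exactly this hole with Theorem~\ref{thm:substitution} (Substitution): from the fact you already have by the induction hypothesis, $\jchecktermsort {\Gamma, \rdecl x {S_1} {A_1}} {\expand {A_1} x} {S_1}$, together with the inverted refinement premise $\jchecksortclass {\Gamma, \rdecl x {S_1} {A_1}} {S_2} B$ (suitably weakened and $\alpha$-converted so the substituted-for variable is fresh), the Substitution theorem yields that $\jhsubst s {\expand {A_1} x} x {A_1} {S_2} {S_2'}$ holds for some $S_2'$; only then does Commutativity give $S_2' = S_2$, after which \rulename{$\Pi$-E}, the induction hypothesis at $S_2$, and \rulename{$\Pi$-I} finish the case exactly as you describe. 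This is why the paper's proof sketch names \emph{both} ingredients---Substitution for definedness, Commutativity for the identification---whereas you cite only the second. The repair is a single added step, but it is not cosmetic: definedness of hereditary substitution is precisely the kind of fact this development never grants for free, and without it your application of \rulename{$\Pi$-E} is not licensed. (One could instead try to prove an unconditional lemma that substituting $\expand {A_1} x$ for $x$ is always defined and acts as the identity, but that is a strictly stronger statement than Lemma~\ref{lem:commutativity} and would itself require a nontrivial mutual induction; it is not what you cited.)
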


\begin{proof}[Proof (sketch)]
By induction on $S$.  The $\sspi x {S_1} {A_1} {S_2}$ case relies on
Theorem~\ref{thm:substitution} (Substitution) to show that $\hsubst s
{\expand {A_1} x} x {A_1} {S_2}$ is defined and on
Lemma~\ref{lem:commutativity} (Commutativity) to show that it is equal to
${S_2}$.
\end{proof}

\begin{thm}[Identity]
\label{thm:identity}
If\/ $\jchecksortclass \Gamma S A$,
then $\jchecktermsort {\Gamma, \rdecl x S A} {\expand A x} S$.
\end{thm}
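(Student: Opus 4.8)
The plan is to derive Identity immediately from Theorem~\ref{thm:expansion} (Expansion) by specializing the atomic term there to be the variable $x$ itself. Recall that Expansion states that whenever $\jchecksortclass \Gamma S A$ and $\jsynthtermtype \Gamma R S$, the $\eta$-expansion $\expand A R$ checks against $S$; the Identity theorem is exactly the instance $R = x$, carried out in the context extended by the fresh hypothesis $\rdecl x S A$. In other words, reflexivity of entailment is the special case of Expansion in which the synthesizing object is a variable whose declared sort is the one we wish to check against.

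Concretely, I would proceed in three short steps. First, from the hypothesis $\jchecksortclass \Gamma S A$ I would obtain $\jchecksortclass {\Gamma, \rdecl x S A} S A$ by weakening, which is routine since adjoining a fresh variable declaration cannot invalidate a refinement derivation. Second, since $\rdecl x S A$ occurs in the extended context, the \rulename{var} rule yields $\jsynthtermtype {\Gamma, \rdecl x S A} x S$ directly. Third, I would apply Expansion in the context $\Gamma, \rdecl x S A$ to the derivations produced by the first two steps, concluding $\jchecktermtype {\Gamma, \rdecl x S A} {\expand A x} S$, which is precisely the statement to be proved.

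There is no genuine obstacle here: all of the difficulty has already been absorbed into the proof of Expansion, which in turn rests on Theorem~\ref{thm:substitution} (Substitution) and Lemma~\ref{lem:commutativity} (Commutativity of Substitution and $\eta$-expansion). The only point worth confirming is the weakening step, but since the refinement rules inspect only the signature together with the syntactic form of $S$ and $A$ (and, in the dependent case, the well-typedness of index terms, which is preserved under context extension), weakening holds trivially, and the theorem follows as a direct corollary.
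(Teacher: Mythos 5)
Your proposal is correct and matches the paper's proof, which derives Identity as a corollary of Theorem~\ref{thm:expansion} (Expansion); you have merely made explicit the routine weakening and \rulename{var} steps that the paper leaves implicit.
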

\begin{proof}
Corollary of Theorem~\ref{thm:expansion} (Expansion).
\end{proof}

\section{Subsorting at Higher Sorts}
\label{sect:higher-subsorting}

\noindent Our bidirectional typing discipline limits subsorting checks
to a single rule, the \rulename{switch} rule when we switch modes from
checking to synthesis.  Since we insist on typing only canonical
forms, this rule is limited to checking at atomic sorts $Q$, and
consequently, subsorting need only be defined on atomic sorts.  These
observations naturally lead one to ask, what is the status of
higher-sort subsorting in LFR\@?  How do our intuitions about things
like structural rules, variance, and distributivity---in particular,
the rules shown in Figure~\ref{fig:usual-rules}---fit into the LFR
picture?

\begin{figure}
\begin{mathpar}
\rulesheader{\jchecksubtype \Gamma {S_1} {S_2} A}
\and
\namedaxiom {\jchecksubtype \Gamma {S} {S} A}
            {refl}
\and
\namedrule{\jchecksubtype \Gamma {S_1} {S_2} A
        \\ \jchecksubtype \Gamma {S_2} {S_3} A}
          {\jchecksubtype \Gamma {S_1} {S_3} A}
          {trans}
\and
\namedrule{\jchecksubtype \Gamma {S_2} {S_1} A
        \\ \jchecksubtype {\Gamma, \rdecl x {S_2} A} {T_1} {T_2} {A'}}
          {\jchecksubtype \Gamma {\spi x {S_1} {T_1}} {\spi x {S_2} {T_2}}
                                 {\api x A {A'}}}
          {S-$\Pi$}
\and
\namedaxiom {\jchecksubtype \Gamma S \top A}
          {$\sstop$-R}
\and
\namedrule{\jchecksubtype \Gamma T {S_1} A
        \\ \jchecksubtype \Gamma T {S_2} A}
          {\jchecksubtype \Gamma T {\sinter {S_1} {S_2}} A}
          {$\intersect$-R}
\\
\and
\namedrule{\jchecksubtype \Gamma {S_1} T A}
          {\jchecksubtype \Gamma {\sinter {S_1} {S_2}} T A}
          {$\intersect$-L$_1$}
\and
\namedrule{\jchecksubtype \Gamma {S_2} T A}
          {\jchecksubtype \Gamma {\sinter {S_1} {S_2}} T A}
          {$\intersect$-L$_2$}
\\
\and
\namedaxiom {\jchecksubtype \Gamma {\sstop} {\spi x S \sstop} {\api x A {A'}}}
          {$\sstop/\Pi$-dist}
\and
\namedaxiom {\jchecksubtype \Gamma {\sinter {(\spi x S {T_1})}
                                          {(\spi x S {T_2})}}
                                 {\spi x S {(\sinter {T_1} {T_2})}}
                                 {\api x A {A'}}}
          {$\intersect/\Pi$-dist}
\end{mathpar}
\caption{Derived rules for subsorting at higher sorts.}
\label{fig:usual-rules}
\end{figure}


It turns out that despite not \textit{explicitly} including subsorting at
higher sorts, LFR \textit{implicitly} includes an intrinsic notion of
higher-sort subsorting through the $\eta$-expansion associated with
canonical forms.
The simplest way of formulating this intrinsic notion is as a variant of
the identity principle: $S$ is taken to be a subsort of $T$ if $\jchecktermtype
{\Gamma, \rdecl x S A} {\expand A x} T$.  This notion is equivalent to a
number of other alternate formulations, including a subsumption-based
formulation and a substitution-based formulation.



\begin{thm}[Alternate Formulations of Subsorting]
\label{thm:alternate}
Suppose that for some $\Gamma_0$, $\jchecksortclass {\Gamma_0} {S_1} A$ and $\jchecksortclass {\Gamma_0} {S_2} A$,
and define:
    \begin{enumerate}[\em(1)]
    \item $S_1 \subtype_{\textit{\arabic{enumi}}} S_2 \defeq$
        for all $\Gamma$ and $R$: \ 
        if\/ $\jsynthtermsort \Gamma R {S_1}$,
        then $\jchecktermsort \Gamma {\expand A R} {S_2}$.
        \label{ds:expan}
    \item $S_1 \subtype_{\textit{\arabic{enumi}}} S_2 \defeq$
        for all $\Gamma$: \ 
        $\jchecktermsort {\Gamma, \rdecl x {S_1} A} {\expand A x} {S_2}$.
        \label{ds:ident}
    \item $S_1 \subtype_{\textit{\arabic{enumi}}} S_2 \defeq$
        for all $\Gamma$ and $N$: \ 
        if\/ $\jchecktermsort \Gamma N {S_1}$,
        then $\jchecktermsort \Gamma N {S_2}$.
        \label{ds:subsum}
    \item $S_1 \subtype_{\textit{\arabic{enumi}}} S_2 \defeq$
        for all $\GammaL$, $\GammaR$, $N$, and $S$: \ 
        if\/ $\jchecktermsort {\GammaL, \rdecl x {S_2} A, \GammaR} N S$\\
        \phantom{$S_1 \subtype_{\textit{\arabic{enumi}}} S_2 \defeq$
        for all $\GammaL$, $\GammaR$, $N$, and $S$:\ if\/ }%
        \llap{then}\/ $\,\jchecktermsort {\GammaL, \rdecl x {S_1} A, \GammaR} N S$
        \label{ds:promote}
    \item $S_1 \subtype_{\textit{\arabic{enumi}}} S_2 \defeq$
        for all $\GammaL$, $\GammaR$, $N$, $S$, and $N_1$: \ 
        if\/ $\jchecktermsort {\GammaL, \rdecl x {S_2} A, \GammaR} N S$
         and $\jchecktermsort {\GammaL} {N_1} {S_1}$,
        then $\jchecktermsort {\GammaL, \hsubst \gamma {N_1} x A \GammaR}
                                     {\hsubst n {N_1} x A N}
                                     {\hsubst s {N_1} x A S}$.
        \label{ds:subst}
    \end{enumerate}
Then, $S_1 \subtype_{\textit{\ref{ds:expan}}} S_2
  \iff S_1 \subtype_{\textit{\ref{ds:ident}}} S_2
  \iff \cdots
  \iff S_1 \subtype_{\textit{\ref{ds:subst}}} S_2$.
\end{thm}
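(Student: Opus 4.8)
The plan is to establish all five definitions equivalent by proving a single cycle of implications $(1) \Rightarrow (2) \Rightarrow (3) \Rightarrow (4) \Rightarrow (5) \Rightarrow (1)$. Every edge will be a soft consequence of the three substantive results already in hand---Theorem~\ref{thm:identity} (Identity), Theorem~\ref{thm:expansion} (Expansion), and Theorem~\ref{thm:substitution} (Substitution)---together with two bookkeeping facts about hereditary substitution on $\eta$-expansions: the unit-law identity $\hsubst n {N_0}{x}{A}{\expand A x} = N_0$ (a consequence of Lemma~\ref{lem:commutativity}, Commutativity), and the Trivial Substitution theorem, which lets us rewrite $\hsubst s {N_1}{x}{A} {S_2} = S_2$ whenever $x \notin \FV(S_2)$. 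Since $S_1, S_2 \refines A$ are fixed and the bound $x$ is always chosen fresh, these simplifications apply throughout.

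For $(1) \Rightarrow (2)$ I instantiate the expansion formulation at context $\Gamma, \rdecl x {S_1} A$ with $R = x$: the \rulename{var} rule gives $\jsynthtermsort {\Gamma, \rdecl x {S_1} A} x {S_1}$, so definition (1) yields $\jchecktermsort {\Gamma, \rdecl x {S_1} A} {\expand A x} {S_2}$, which is exactly definition (2). For $(2) \Rightarrow (3)$, given $\jchecktermsort \Gamma N {S_1}$ I substitute $N$ for $x$ into the identity judgment $\jchecktermsort {\Gamma, \rdecl x {S_1} A} {\expand A x} {S_2}$ using Theorem~\ref{thm:substitution}; the unit law rewrites $\hsubst n N x A {\expand A x}$ to $N$ and Trivial Substitution rewrites $\hsubst s N x A {S_2}$ to $S_2$, producing $\jchecktermsort \Gamma N {S_2}$.

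The two narrowing-flavored edges are where the real care is needed. For $(3) \Rightarrow (4)$ I work in the target context $\GammaL, \rdecl x {S_1} A, \GammaR$: there, by Identity (suitably weakened), $\expand A x$ checks against $S_1$, and subsumption (definition (3)) promotes this to $\expand A x \checks S_2$. Renaming the declared variable of the hypothesis $\jchecktermsort {\GammaL, \rdecl x {S_2} A, \GammaR} N S$ to a fresh $z$ and then substituting $\expand A x$ (of sort $S_2$, matching $z$'s declaration) for $z$ via Theorem~\ref{thm:substitution} recovers the very same judgment, now under $\rdecl x {S_1} A$, because substituting an $\eta$-expanded variable amounts to a renaming (the generalized reading of Commutativity recorded in the Note after Lemma~\ref{lem:commutativity}). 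For $(4) \Rightarrow (5)$, promotion first rewrites the hypothesis $\jchecktermsort {\GammaL, \rdecl x {S_2} A, \GammaR} N S$ to one under $\rdecl x {S_1} A$; the ordinary Substitution theorem---now with $\jchecktermsort \GammaL {N_1} {S_1}$ matching the declared sort $S_1$---then delivers exactly the substitution-based conclusion of definition (5).

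Finally, $(5) \Rightarrow (1)$ closes the cycle: from $\jsynthtermsort \Gamma R {S_1}$, Expansion gives $\jchecktermsort \Gamma {\expand A R} {S_1}$, and Identity gives $\jchecktermsort {\Gamma, \rdecl x {S_2} A} {\expand A x} {S_2}$; feeding these into definition (5) (with $N_1 = \expand A R$ and empty $\GammaR$) and simplifying $\hsubst n {\expand A R} x A {\expand A x}$ to $\expand A R$ by the unit law and $\hsubst s {\expand A R} x A {S_2}$ to $S_2$ by Trivial Substitution yields $\jchecktermsort \Gamma {\expand A R} {S_2}$, i.e.\ definition (1). I expect the main obstacle to be the $\eta$-expansion bookkeeping in the $(3) \Rightarrow (4)$ step: one must track carefully that substituting $\expand A x$ for a freshly renamed variable behaves as a renaming, invoke the generalized form of Commutativity, and check the freshness side-conditions that license the Trivial Substitution simplifications. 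Once those two facts about hereditary substitution on $\eta$-expansions are isolated, the remaining manipulations are routine.
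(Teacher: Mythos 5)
Your proof is correct and follows essentially the same route as the paper's: the identical cycle $(1)\Rightarrow(2)\Rightarrow(3)\Rightarrow(4)\Rightarrow(5)\Rightarrow(1)$, discharged with the same ingredients (Identity, Expansion, Substitution, and the commutativity of hereditary substitution with $\eta$-expansion). The only differences are cosmetic: you make the Trivial Substitution simplifications explicit where the paper leaves them tacit, and in $(3)\Rightarrow(4)$ you rename the hypothesis variable to a fresh $z$ rather than weakening with a fresh $y{::}S_1$ and $\alpha$-converting at the end, which amounts to the same weakening-plus-substitution argument.
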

\begin{proof} Using the identity and substitution principles along with
Lemma~\ref{lem:commutativity}, the commutativity of substitution with
$\eta$-expansion.
\begin{enumerate}[(1) $\Rightarrow$ (2):]
\item[(\ref{ds:expan}) $\Rightarrow$ (\ref{ds:ident}):]
By rule, $\jsynthtermsort {\Gamma, \rdecl x {S_1} A} x {S_1}$.
By \ref{ds:expan},
        $\jchecktermsort {\Gamma, \rdecl x {S_1} A} {\expand A x} {S_2}$.

\item[(\ref{ds:ident}) $\Rightarrow$ (\ref{ds:subsum}):] 
Suppose $\jchecktermsort \Gamma N {S_1}$.
By \ref{ds:ident},
        $\jchecktermsort {\Gamma, \rdecl x {S_1} A} {\expand A x} {S_2}$.
By Theorem~\ref{thm:substitution} (Substitution),
        $\jchecktermsort \Gamma {\hsubst n N x A {\expand A x}} {S_2}$.
By Lemma~\ref{lem:commutativity} (Commutativity),
        $\jchecktermsort \Gamma N {S_2}$.

\item[(\ref{ds:subsum}) $\Rightarrow$ (\ref{ds:promote}):]
Suppose $\jchecktermsort {\GammaL, \rdecl x {S_2} A, \GammaR} N S$.
By weakening,
        $\jchecktermsort {\GammaL, \rdecl y {S_1} A,
                                   \rdecl x {S_2} A, \linebreak \GammaR}
                         N S$.
By Theorem~\ref{thm:identity} (Identity),
        $\jchecktermsort  {\GammaL, \rdecl y {S_1} A} {\expand A y} {S_1}$.
By \ref{ds:subsum},
        $\jchecktermsort {\GammaL, \rdecl y {S_1} A} {\expand A y} {S_2}$.
By Theorem~\ref{thm:substitution} (Substitution),
        $\jchecktermsort {\GammaL, \rdecl y {S_1} A, \linebreak
                          \hsubst \gamma {\expand A y} x A \GammaR}
                         {\hsubst n      {\expand A y} x A N}
                         {\hsubst s      {\expand A y} x A S}$.
By Lemma~\ref{lem:commutativity} (Commutativity) and $\alpha$-conversion,
        $\jchecktermsort {\GammaL, \rdecl x {S_1} A, \GammaR} N S$.

\item[\ref{ds:promote}) $\Rightarrow$ (\ref{ds:subst}):]
Suppose $\jchecktermsort {\GammaL, \rdecl x {S_2} A, \GammaR} N S$
    and $\jchecktermsort {\GammaL} {N_1} {S_1}$.
  By \ref{ds:promote},
        $\jchecktermsort {\GammaL, \rdecl x {S_1} A, \linebreak \GammaR} N S$.
By Theorem~\ref{thm:substitution} (Substitution),
        $\jchecktermsort {\GammaL, \hsubst \gamma {N_1} x A \GammaR}
                        {\hsubst n {N_1} x A N}
                        {\hsubst s {N_1} x A S}$.

\item[(\ref{ds:subst}) $\Rightarrow$ (\ref{ds:expan}):]
Suppose $\jsynthtermsort \Gamma R {S_1}$.
By Theorem~\ref{thm:expansion} (Expansion),
        $\jchecktermsort \Gamma {\expand A R} {S_1}$.
By Theorem~\ref{thm:identity} (Identity),
        $\jchecktermsort {\Gamma, \rdecl x {S_2} A} {\expand A x} {S_2}$.
  By \ref{ds:subst},
        $\jchecktermsort \Gamma {\hsubst n {\expand A R} x A {\expand A x}}
                                {S_2}$.
By Lemma~\ref{lem:commutativity} (Commutativity),
        $\jchecktermsort \Gamma {\expand A R} {S_2}$.
  \endproof
\end{enumerate}
\end{proof}

\noindent If we take ``subsorting as $\eta$-expansion'' to be our
\textit{model} of subsorting, we can show the ``usual'' presentation
in Figure~\ref{fig:usual-rules} to be both sound and complete with
respect to this model.  In other words, subsorting as $\eta$-expansion
\textit{really is} subsorting (soundness), and it is \textit{no more
  than} subsorting (completeness).  Alternatively, we can say that
completeness demonstrates that there are no subsorting rules missing
from the usual declarative presentation: Figure~\ref{fig:usual-rules}
accounts for everything covered intrinsically by $\eta$-expansion.  By
the end of this section, we will have shown both theorems: if $S
\subtype T$, then $\jchecktermtype {\Gamma, \rdecl x S A} {\expand A
  x} T$, and vice versa.



Soundness is a straightforward inductive argument.

\begin{thm}[Soundness of Declarative Subsorting]
\label{thm:sound-declarative}
If\/ $S \subtype T$, then $\jchecktermtype {\Gamma, \rdecl x S A} {\expand A
x} T$.
\end{thm}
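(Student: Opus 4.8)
The plan is to induct on the derivation of $\jsubtype S T$, exploiting that the goal $\jchecktermtype {\Gamma, \rdecl x S A} {\expand A x} T$ is, once $\Gamma$ is universally quantified, exactly formulation~(\ref{ds:ident}) of Theorem~\ref{thm:alternate} (Alternate Formulations). Consequently each induction hypothesis is equivalent to \emph{all} of the formulations (\ref{ds:expan})--(\ref{ds:subst}), and I am free to read it in whichever guise is most convenient for the rule at hand; this is what keeps the individual cases short. Throughout I silently use the refinement restriction to discharge the well-formedness presuppositions $\jchecksortclass \Gamma S A$ demanded by the Identity and Expansion theorems.

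Several cases are immediate. \rulename{refl} is exactly Theorem~\ref{thm:identity} (Identity). For \rulename{$\sstop$-R}, the conclusion holds by the \rulename{$\sstop$-I} axiom, which checks any term against $\sstop$. For \rulename{$\intersect$-R}, the two hypotheses give $\jchecktermtype {\Gamma, \rdecl x T A} {\expand A x} {S_1}$ and $\jchecktermtype {\Gamma, \rdecl x T A} {\expand A x} {S_2}$, and \rulename{$\intersect$-I} combines them. For \rulename{trans} I pass to the subsumption form~(\ref{ds:subsum}), where the hypotheses read ``everything checking at $S_1$ checks at $S_2$'' and similarly for $S_2,S_3$, so their composition is immediate. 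For \rulename{$\intersect$-L$_1$} (and symmetrically \rulename{$\intersect$-L$_2$}) I take the hypothesis in its expansion form~(\ref{ds:expan}): in the context $\Gamma, \rdecl x {\sinter {S_1} {S_2}} A$ the variable $x$ synthesizes $\sinter {S_1} {S_2}$ by \rulename{var} and hence $S_1$ by \rulename{$\intersect$-E$_1$}, so instantiating the hypothesis at $R = x$ gives $\jchecktermtype {\Gamma, \rdecl x {\sinter {S_1} {S_2}} A} {\expand A x} T$ directly.

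The main obstacle is \rulename{S-$\Pi$}, the only case where contravariance, dependency, and $\eta$-expansion interact. With goal sort $\spi x {S_2} {T_2}$ refining $\api x A {A'}$, the expansion $\expand {\api x A {A'}} z$ unfolds to $\mlam x {\expand {A'} {(\mapp z {\expand A x})}}$, so by \rulename{$\Pi$-I} it suffices to check the body against $T_2$ in the context extended by $\rdecl x {S_2} A$. There $z$ synthesizes $\spi x {S_1} {T_1}$ by \rulename{var}; to apply it I must supply an argument checking at $S_1$, and the contravariant hypothesis $\jsubtype {S_2} {S_1}$, in expansion form, does exactly this: since $x \synths S_2$, it yields $\expand A x \checks S_1$. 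Rule \rulename{$\Pi$-E} then gives $\mapp z {\expand A x} \synths \hsubst s {\expand A x} x A {T_1}$, at which point Lemma~\ref{lem:commutativity} (Commutativity) shows that substituting the $\eta$-expansion of $x$ for $x$ leaves $T_1$ unchanged, so in fact $\mapp z {\expand A x} \synths T_1$. Finally the covariant hypothesis $\jsubtype {T_1} {T_2}$ (living in context $\Gamma, \rdecl x {S_2} A$), again read in expansion form~(\ref{ds:expan}) and applied to the synthesizing term $\mapp z {\expand A x}$, yields $\expand {A'} {(\mapp z {\expand A x})} \checks T_2$, closing the case.

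The two distributivity rules are mild variants of this pattern. For \rulename{$\sstop/\Pi$-dist} the body is required only at $\sstop$, so \rulename{$\sstop$-I} finishes after \rulename{$\Pi$-I} with no use of the argument at all. For \rulename{$\intersect/\Pi$-dist}, after \rulename{$\Pi$-I} and \rulename{$\intersect$-I} the two residual goals are handled precisely like the codomain step of \rulename{S-$\Pi$}: $z$ synthesizes each conjunct $\spi x S {T_i}$ via \rulename{$\intersect$-E$_1$}/\rulename{$\intersect$-E$_2$}, the application synthesizes $T_i$ once Commutativity collapses the hereditary substitution, and Theorem~\ref{thm:expansion} (Expansion) supplies the checking derivation. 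I expect all the genuine bookkeeping to concentrate in \rulename{S-$\Pi$}, namely tracking the contravariant and covariant hypotheses in their respective contexts and verifying via Commutativity that the dependent substitution introduced by \rulename{$\Pi$-E} disappears.
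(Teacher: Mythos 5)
Your proposal is correct and follows exactly the route the paper takes: induction on the derivation of $S \subtype T$, using the equivalences of Theorem~\ref{thm:alternate} to read each induction hypothesis in whichever formulation (expansion, identity, or subsumption) fits the rule at hand, with Lemma~\ref{lem:commutativity} collapsing the hereditary substitution in the \rulename{S-$\Pi$} and distributivity cases. The paper states this proof only as a one-line sketch; your case analysis is a faithful and correctly worked-out expansion of it.
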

\begin{proof}
By induction on the derivation of $S \subtype T$.  The alternate
formulations given by Theorem~\ref{thm:alternate} are useful in many cases.
\end{proof}


\noindent The proof of completeness is considerably more intricate.
We demonstrate completeness via a detour through an algorithmic
subsorting system very similar to the algorithmic typing system from
Section~\ref{sect:decidability}, with judgments $\jasubtype \Delta S$
and $\japply \Delta {\rdecl x {\Delta_1} {A_1}} {\Delta_2}$.
To show completeness, we show that intrinsic subsorting implies algorithmic
subsorting and that algorithmic subsorting implies declarative subsorting;
the composition of these theorems is our desired completeness result.
\begin{quote}
    If\/ $\jchecktermtype {\Gamma, \rdecl x S A} {\expand A x} T$,
    then $\jasubtype {\ssplit S} T$.
    (\textbf{Theorem~\ref{thm:intr-algo}} below.)
\end{quote}
\begin{quote}
    If\/ $\jasubtype {\ssplit S} T$,
    then $\jsubtype S T$.
    (\textbf{Theorem~\ref{thm:algo-decl}} below.)
\end{quote}
The following schematic representation of soundness and completeness may
help the reader to understand the key theorems.
\begin{diagram}
\fbox{\begin{minipage}{1.35in}
      \centering
      ``declarative'' \\
      $\jsubtype S T$
      \end{minipage}}
    &         & \rTo^{\text{soundness}}
                              &       & \fbox{\begin{minipage}{1.7in}
                                      \centering
                                      ``intrinsic'' \\
                                      $\jchecktermtype {\Gamma, \rdecl x S A}
                                                       {\expand A x} T$
                                      \end{minipage}}
                                                       \\
    & \luTo   & \text{completeness} & \ldTo &   \\
    &         &  \fbox{\begin{minipage}{1.4in}
                       \centering
                       ``algorithmic'' \\
                       $\jasubtype {\ssplit S} T$
                       \end{minipage}}  &       &
\end{diagram}

\noindent As mentioned above, the algorithmic subsorting system system
is characterized by two judgments: $\jasubtype \Delta S$ and $\japply
\Delta {\rdecl x {\Delta_1} {A_1}} {\Delta_2}$ ; rules defining them
are shown in Figure~\ref{fig:algorithmic-rules}.  As in
Section~\ref{sect:decidability}, $\Delta$ represents an
intersection-free list of sorts.  The interpretation of the judgment
$\jasubtype \Delta S$, made precise below, is roughly that the
intersection of all the sorts in $\Delta$ is a subsort of the sort
$S$.

\begin{figure}
\begin{mathpar}
\rulesheader{\jasubtype \Delta S}
\and
\inferaxiom {\jasubtype \Delta \sstop}
\and
\inferrule{\jasubtype \Delta {S_1}
        \\ \jasubtype \Delta {S_2}}
          {\jasubtype \Delta {\sinter {S_1} {S_2}}}
\and
\inferrule{Q' \in \Delta
        \\ \jsubtype {Q'} Q}
          {\jasubtype \Delta Q}
\and
\inferrule{\japply \Delta {\rdecl x {\ssplit {S_1}} {A_1}} {\Delta_2}
        \\ \jasubtype {\Delta_2} {S_2}}
          {\jasubtype \Delta {\sspi x {S_1} {A_1} {S_2}}}
\end{mathpar}
\begin{mathpar}
\rulesheader{\japply \Delta {\rdecl x {\Delta_1} {A_1}} {\Delta_2}}
\and
\inferaxiom {\japply \cdot {\rdecl x {\Delta_1} {A_1}} \cdot}
\and
\inferrule{\japply \Delta {\rdecl x {\Delta_1} {A_1}} {\Delta_2}
        \\ \jasubtype {\Delta_1} {S_1}
        \\ \jhsubst n {\expand {A_1} x} y {A_1} {S_2} {S_2'}}
          {\japply {(\Delta, \sspi y {S_1} {A_1} {S_2})}
                   {\rdecl x {\Delta_1} {A_1}}
                   {\Delta_2, \ssplit {S_2'}}}
\and
\inferrule{\japply \Delta {\rdecl x {\Delta_1} {A_1}} {\Delta_2}
        \\ \jnasubtype {\Delta_1} {S_1}}
          {\japply {(\Delta, \sspi y {S_1} {A_1} {S_2})}
                   {\rdecl x {\Delta_1} {A_1}}
                   {\Delta_2}}
\and
\inferrule{\japply \Delta {\rdecl x {\Delta_1} {A_1}} {\Delta_2}
        \\ \not\exists {S_2'} \bdot
           \jhsubst s {\expand {A_1} x} y {A_1} {S_2} {S_2'}}
          {\japply {(\Delta, \sspi y {S_1} {A_1} {S_2})}
                   {\rdecl x {\Delta_1} {A_1}}
                   {\Delta_2}}
\and
\inferrule{\japply \Delta {\rdecl x {\Delta_1} {A_1}} {\Delta_2}}
          {\japply {(\Delta, Q)} {\rdecl x {\Delta_1} {A_1}} {\Delta_2}}
\end{mathpar}
\caption{Algorithmic subsorting.}
\label{fig:algorithmic-rules}
\end{figure}

The rule for checking whether $\Delta$ is a subsort of a function type
makes use of the application judgment $\japply \Delta {\rdecl x {\Delta_1}
{A_1}} {\Delta_2}$ to extract all of the applicable function codomains from
the list $\Delta$.  As in Section~\ref{sect:decidability}, care is taken to
ensure that this latter judgment is defined even in seemingly
``impossible'' scenarios that well-formedness preconditions would rule out,
like $\Delta$ containing atomic sorts or hereditary substitution being
undefined.

Our first task is to demonstrate that the algorithm has the interpretation
alluded to above.  To that end, we define an operator $\binter{-}$ that
transforms a list $\Delta$ into a sort $S$ by ``folding'' $\intersect$ over
$\Delta$ with unit $\sstop$.
\[\binter{\cdot}=\binter{\Delta, S}=\sinter {\binter\Delta} S\]
Now our goal is to demonstrate that if the algorithm says $\jasubtype
\Delta S$, then declaratively $\jsubtype {\binter\Delta} S$.  First, we
prove some useful properties of the $\binter -$ operator.

\begin{lem}
\label{lem:binter-inter}
$\binter{\Delta_1} \intersect \binter{\Delta_2} \subtype
 \binter{\Delta_1, \Delta_2}$
\end{lem}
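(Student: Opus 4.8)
The plan is to proceed by induction on the list $\Delta_2$, peeling off its \emph{last} element, since $\binter{-}$ is defined so that $\binter{\Delta, S} = \sinter{\binter\Delta}{S}$ and list concatenation appends $\Delta_2$ after $\Delta_1$; unfolding the last element of $\Delta_2$ therefore exposes the recursion on both $\binter{\Delta_2}$ and $\binter{\Delta_1,\Delta_2}$ simultaneously. The only rules I will need are the structural intersection rules of Figure~\ref{fig:usual-rules}: reflexivity, \rulename{trans}, \rulename{$\intersect$-R}, \rulename{$\intersect$-L$_1$}, and \rulename{$\intersect$-L$_2$}.

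In the base case $\Delta_2 = \cdot$, concatenation gives $\binter{\Delta_1,\cdot} = \binter{\Delta_1}$ while $\binter{\cdot} = \sstop$, so the goal reduces to $\sinter{\binter{\Delta_1}}{\sstop} \subtype \binter{\Delta_1}$, which follows immediately from \rulename{$\intersect$-L$_1$} applied to reflexivity of $\binter{\Delta_1}$.

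In the inductive step $\Delta_2 = \Delta_2', T$, unfolding the definition on both sides turns the goal into $\sinter{\binter{\Delta_1}}{(\sinter{\binter{\Delta_2'}}{T})} \subtype \sinter{\binter{\Delta_1, \Delta_2'}}{T}$. I will apply \rulename{$\intersect$-R}, leaving two subgoals. The right conjunct $\cdots \subtype T$ is discharged by two successive applications of \rulename{$\intersect$-L$_2$} followed by reflexivity. For the left conjunct $\cdots \subtype \binter{\Delta_1, \Delta_2'}$, I first establish the monotonicity fact $\sinter{\binter{\Delta_1}}{(\sinter{\binter{\Delta_2'}}{T})} \subtype \sinter{\binter{\Delta_1}}{\binter{\Delta_2'}}$ — itself an instance of \rulename{$\intersect$-R} whose two premises are handled by left-projections (one application of \rulename{$\intersect$-L$_1$}, and \rulename{$\intersect$-L$_2$} then \rulename{$\intersect$-L$_1$}, respectively, each closed by reflexivity) — and then compose it with the induction hypothesis $\sinter{\binter{\Delta_1}}{\binter{\Delta_2'}} \subtype \binter{\Delta_1, \Delta_2'}$ via \rulename{trans}.

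I do not expect any real obstacle here: the entire content is the reassociation of a left-nested intersection, and the only thing to track carefully is which projection rule to apply at each step so that the intersection structure generated by $\binter{-}$ lines up on the two sides. All sorts in play refine a common type, so the ambient context $\Gamma$ and refined type $A$ silently carried by the subsorting judgment are left unchanged by every rule used and can be ignored throughout.
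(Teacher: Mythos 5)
Your proposal is correct and takes essentially the same route as the paper, which proves this lemma by straightforward induction on $\Delta_2$; your argument merely spells out the applications of \rulename{refl}, \rulename{trans}, \rulename{$\intersect$-R}, \rulename{$\intersect$-L$_1$}, and \rulename{$\intersect$-L$_2$} that the paper leaves implicit. Peeling off the last element of $\Delta_2$ is exactly the structural induction dictated by the definitions of $\binter{-}$ and concatenation, so there is no divergence from the paper's proof.
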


\begin{proof}
Straightforward induction on $\Delta_2$.
\end{proof}

\begin{lem}
\label{lem:binter-split}
$S \subtype \binter{\ssplit{S}}$.
\end{lem}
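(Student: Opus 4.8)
The plan is to prove $S \subtype \binter{\ssplit{S}}$ by structural induction on $S$, drawing on the declarative subsorting rules of Figure~\ref{fig:usual-rules} and on Lemma~\ref{lem:binter-inter}. Recall that $\ssplit{S}$ produces an intersection-free list and that $\binter{-}$ folds $\intersect$ over such a list with unit $\sstop$, so that $\binter\cdot = \sstop$ and $\binter{\Delta, S_0} = \sinter{\binter\Delta}{S_0}$. There are four cases, according to the shape of $S$, and the key observation is that $\ssplit{-}$ never descends beneath a $\Pi$ or into an atomic sort, so the contravariant rule \rulename{S-$\Pi$} and the distributivity rules are never needed.

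In the three leaf cases the split yields either the empty list or a singleton. If $S = \sstop$, then $\ssplit\sstop = \cdot$ and $\binter{\ssplit\sstop} = \sstop$, so the goal $\sstop \subtype \sstop$ holds immediately by \rulename{refl}. If $S = Q$ or $S$ is a function sort, then $\ssplit S$ is the singleton list $\cdot, S$, so $\binter{\ssplit S} = \sinter\sstop S$; here I would derive $S \subtype \sinter\sstop S$ by \rulename{$\intersect$-R} from the two subgoals $S \subtype \sstop$ (rule \rulename{$\sstop$-R}) and $S \subtype S$ (rule \rulename{refl}).

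The only interesting case is $S = \sinter {S_1}{S_2}$, where $\ssplit S = \ssplit{S_1}, \ssplit{S_2}$. By the induction hypothesis, $S_1 \subtype \binter{\ssplit{S_1}}$ and $S_2 \subtype \binter{\ssplit{S_2}}$. Composing these with \rulename{$\intersect$-L$_1$} and \rulename{$\intersect$-L$_2$} respectively gives $\sinter{S_1}{S_2} \subtype \binter{\ssplit{S_1}}$ and $\sinter{S_1}{S_2} \subtype \binter{\ssplit{S_2}}$, whence \rulename{$\intersect$-R} yields $\sinter{S_1}{S_2} \subtype \sinter{\binter{\ssplit{S_1}}}{\binter{\ssplit{S_2}}}$. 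Finally, Lemma~\ref{lem:binter-inter} supplies $\sinter{\binter{\ssplit{S_1}}}{\binter{\ssplit{S_2}}} \subtype \binter{\ssplit{S_1}, \ssplit{S_2}}$, and since $\ssplit{S_1}, \ssplit{S_2} = \ssplit S$ by definition, a single application of \rulename{trans} closes the case.

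I expect no genuine obstacle here: the argument is a routine structural induction, and all the subtler machinery of Figure~\ref{fig:usual-rules} stays out of play. The one spot that demands a little care is precisely this intersection case, where the list-level equation $\ssplit{\sinter{S_1}{S_2}} = \ssplit{S_1}, \ssplit{S_2}$ must be bridged to a single $\intersect$ on the right-hand side; this is exactly the content of Lemma~\ref{lem:binter-inter}, so the crux is recognizing that transitivity against that lemma, after reassembling the two inductive facts with the intersection rules, delivers the result.
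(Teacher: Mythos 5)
Your proof is correct and takes essentially the same approach as the paper, whose entire proof is ``Straightforward induction on $S$''; your case analysis (leaf cases by \rulename{refl}, \rulename{$\sstop$-R}, and \rulename{$\intersect$-R}, and the intersection case closed by \rulename{$\intersect$-L$_i$}, \rulename{$\intersect$-R}, Lemma~\ref{lem:binter-inter}, and \rulename{trans}) is exactly the straightforward induction the paper intends.
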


\begin{proof}
Straightforward induction on $S$.
\end{proof}

\begin{lem}
\label{lem:Q}
If\/ $Q' \in \Delta$ and $\jsubtype {Q'} Q$, then $\jsubtype {\binter\Delta} Q$.
\end{lem}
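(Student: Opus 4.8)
The plan is to proceed by induction on the list $\Delta$, using the two intersection-left rules \rulename{$\intersect$-L$_1$} and \rulename{$\intersect$-L$_2$} of Figure~\ref{fig:usual-rules} to project the desired component out of the folded intersection. Recall that $\binter{\Delta}$ is defined by $\binter{\cdot} = \sstop$ and $\binter{\Delta, S} = \sinter {\binter{\Delta}} S$, so $\binter{\Delta}$ is literally the right-nested intersection of all the sorts appearing in $\Delta$; the left rules are precisely what allow us to discard the components we do not need and retain the one related to $Q$.

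When $\Delta = \cdot$ there is no $Q'$ with $Q' \in \Delta$, so the claim holds vacuously. In the step case $\Delta = \Delta_0, S$ we have $\binter{\Delta} = \sinter {\binter{\Delta_0}} S$, and the membership hypothesis $Q' \in \Delta_0, S$ splits into two subcases. If $S$ is the witness, i.e.\ $S = Q'$, then applying \rulename{$\intersect$-L$_2$} to the hypothesis $\jsubtype {Q'} Q$ yields $\jsubtype {\sinter {\binter{\Delta_0}} {Q'}} Q$, which is exactly $\jsubtype {\binter{\Delta}} Q$. If instead $Q' \in \Delta_0$, the induction hypothesis gives $\jsubtype {\binter{\Delta_0}} Q$, and applying \rulename{$\intersect$-L$_1$} to that derivation produces $\jsubtype {\sinter {\binter{\Delta_0}} S} Q$, again as required.

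I expect no real obstacle here: notably, transitivity of subsorting is never invoked, since in each subcase a single intersection-left rule converts a derivation about one component of $\binter{\Delta}$ directly into one about the whole intersection. The only point requiring a moment's care is pairing the correct projection rule with each subcase --- \rulename{$\intersect$-L$_2$} when the witness $Q'$ is the final element $S$, and \rulename{$\intersect$-L$_1$} when it lies in the prefix $\Delta_0$ handled by the induction hypothesis.
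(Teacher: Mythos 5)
Your proof is correct and matches the paper's approach: the paper's proof is simply ``straightforward induction on $\Delta$,'' and your argument fills in exactly the expected details, using \rulename{$\intersect$-L$_2$} when $Q'$ is the last element and \rulename{$\intersect$-L$_1$} to propagate the induction hypothesis through the prefix. Your reading of $\binter{-}$ as folding $\intersect$ with unit $\sstop$ is also the intended one, so no issues remain.
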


\begin{proof}
Straightforward induction on $\Delta$.
\end{proof}

\begin{thm}[Generalized Algorithmic $\Rightarrow$ Declarative] ~
\label{thm:sound-alg-subtyping}
  \begin{enumerate}[\em(1)]
    \item If\/ $\D :: \jasubtype \Delta T$, then $\jsubtype {\binter \Delta} T$.
          \label{cl:jsub}
    \item If\/ $\D :: \japply \Delta {\rdecl x {\Delta_1} {A_1}} \Delta_2$, then
               $\jsubtype {\binter\Delta} \sspi x {\binter{\Delta_1}} {A_1}
                                                  {\binter{\Delta_2}}$.
          \label{cl:japp}
  \end{enumerate}
\end{thm}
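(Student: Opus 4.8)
The plan is to prove both clauses simultaneously by a single mutual induction on the derivation $\D$, since the rule concluding $\jasubtype \Delta {\sspi x {S_1} {A_1} {S_2}}$ appeals to an $\japply$ derivation while the $\japply$ rules appeal back to $\jasubtype$; each premise is a strict subderivation, so ordinary structural induction suffices. The easy cases of clause~\ref{cl:jsub} fall out immediately: when $\D$ ends in the $\sstop$ axiom we invoke \rulename{$\sstop$-R}; when it ends in the intersection rule we apply the induction hypothesis to both premises and conclude with \rulename{$\intersect$-R}; and when it ends in the atomic rule, with $Q' \in \Delta$ and $\jsubtype {Q'} Q$, the goal $\jsubtype {\binter\Delta} Q$ is exactly Lemma~\ref{lem:Q}.

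The first nontrivial case is the function rule of clause~\ref{cl:jsub}, with premises $\japply \Delta {\rdecl x {\ssplit{S_1}} {A_1}} {\Delta_2}$ and $\jasubtype {\Delta_2} {S_2}$. The induction hypothesis for clause~\ref{cl:japp} on the first premise gives $\jsubtype {\binter\Delta} {\sspi x {\binter{\ssplit{S_1}}} {A_1} {\binter{\Delta_2}}}$, and clause~\ref{cl:jsub} on the second gives $\jsubtype {\binter{\Delta_2}} {S_2}$. I would then derive $\jsubtype {\sspi x {\binter{\ssplit{S_1}}} {A_1} {\binter{\Delta_2}}} {\sspi x {S_1} {A_1} {S_2}}$ by \rulename{S-$\Pi$}, whose contravariant domain obligation $\jsubtype {S_1} {\binter{\ssplit{S_1}}}$ is precisely Lemma~\ref{lem:binter-split} and whose covariant codomain obligation is the second induction hypothesis; one \rulename{trans} step closes the case. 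For clause~\ref{cl:japp}, the empty derivation yields the obligation $\jsubtype \sstop {\sspi x {\binter{\Delta_1}} {A_1} \sstop}$, which is exactly \rulename{$\sstop/\Pi$-dist}; and each of the three ``skip'' rules (the $\jnasubtype$ case, the undefined-substitution case, and the atomic case) extends $\Delta$ by a single sort $U$, so since $\binter{(\Delta, U)} = \sinter {\binter\Delta} U$, the induction hypothesis combined with \rulename{$\intersect$-L$_1$} suffices.

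The main obstacle is the remaining rule of clause~\ref{cl:japp}, which extends $\Delta$ by a function sort $\sspi y {S_1} {A_1} {S_2}$ that successfully applies, producing the codomain list $\Delta_2, \ssplit{S_2'}$ where $\jhsubst s {\expand {A_1} x} y {A_1} {S_2} {S_2'}$. First I would note, via the remark following Lemma~\ref{lem:commutativity}, that this substitution is simply the renaming of $y$ to $x$, so after $\alpha$-conversion the new sort is $\sspi x {S_1} {A_1} {S_2'}$ and $\binter{(\Delta, \sspi y {S_1} {A_1} {S_2})} = \sinter {\binter\Delta} {\sspi x {S_1} {A_1} {S_2'}}$. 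Using the induction hypotheses $\jsubtype {\binter\Delta} {\sspi x {\binter{\Delta_1}} {A_1} {\binter{\Delta_2}}}$ (clause~\ref{cl:japp}) and $\jsubtype {\binter{\Delta_1}} {S_1}$ (clause~\ref{cl:jsub}), I would establish two function subsortings from $\sinter {\binter\Delta} {\sspi x {S_1} {A_1} {S_2'}}$: one into $\sspi x {\binter{\Delta_1}} {A_1} {\binter{\Delta_2}}$ via \rulename{$\intersect$-L$_1$}, and one into $\sspi x {\binter{\Delta_1}} {A_1} {\binter{\ssplit{S_2'}}}$ via \rulename{$\intersect$-L$_2$} together with an \rulename{S-$\Pi$} step whose domain obligation is again clause~\ref{cl:jsub}'s hypothesis and whose codomain obligation $\jsubtype {S_2'} {\binter{\ssplit{S_2'}}}$ is Lemma~\ref{lem:binter-split}. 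Combining these by \rulename{$\intersect$-R} and then \rulename{$\intersect/\Pi$-dist} yields a subsorting into $\sspi x {\binter{\Delta_1}} {A_1} {(\sinter {\binter{\Delta_2}} {\binter{\ssplit{S_2'}}})}$, and a final \rulename{S-$\Pi$} step (reflexive in the domain) using Lemma~\ref{lem:binter-inter} pushes the codomain up to $\binter{(\Delta_2, \ssplit{S_2'})}$, as required. The delicacy here is purely bookkeeping: correctly threading \rulename{$\intersect/\Pi$-dist} together with the two $\binter{-}$ lemmas so that the intersection of codomains merges into the interpretation of the concatenated list, all while keeping the bound variable and its renamed substitution $S_2'$ straight.
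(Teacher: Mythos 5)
Your proof is correct and takes essentially the same route as the paper's: induction on the derivation $\D$, with Lemma~\ref{lem:Q} handling the atomic case, Lemma~\ref{lem:binter-split} discharging the contravariant domain obligations, and Lemma~\ref{lem:binter-inter} merging codomains into the interpretation of the concatenated list. The only difference is cosmetic: your chain of \rulename{$\intersect$-L$_1$}/\rulename{$\intersect$-L$_2$}, \rulename{$\intersect$-R}, \rulename{$\intersect/\Pi$-dist}, and \rulename{S-$\Pi$} steps in the successful-application case is precisely an inline derivation of the packaged rule \rulename{$\intersect/\Pi$-dist$'$} from Figure~\ref{fig:deriv-rules}, which the paper invokes directly.
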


\begin{proof}[Proof (sketch)]
By induction on $\D$, using Lemmas~\ref{lem:binter-inter},
\ref{lem:binter-split}, and \ref{lem:Q}.  The derivable rules from
Figure~\ref{fig:deriv-rules} come in handy in the proof of clause
\ref{cl:japp}.
\end{proof}

\begin{figure}
\begin{mathpar}
\namedrule{\jsubtype {S_1} {T_1} \\ \jsubtype {S_2} {T_2}}
          {\jsubtype {\sinter {S_1} {S_2}} {\sinter {T_1} {T_2}}}
          {S-$\intersect$}
\and
\namedaxiom{\jsubtype {\sinter {S_1} {(\sinter {S_2} {S_3})}}
                      {\sinter {(\sinter {S_1} {S_2})} {S_3}}}
           {$\intersect$-assoc}
\and
\namedrule{\jsubtype S {\spi x {T_1} {T_2}}
        \\ \jsubtype {T_1} {S_1}}
          {\jsubtype {\sinter S {\spi x {S_1} {S_2}}}
                     {\spi x {T_1} {(\sinter {T_2} {S_2})}}}
          {$\intersect/\Pi$-dist$'$}
\end{mathpar}
\caption{Useful rules derivable from those in Figure~\ref{fig:usual-rules}.}
\label{fig:deriv-rules}
\end{figure}

\noindent Theorem~\ref{thm:sound-alg-subtyping} is sufficient to prove that
algorithmic subsorting implies declarative subsorting.

\begin{thm}[Algorithmic $\Rightarrow$ Declarative]
\label{thm:algo-decl}
If\/ $\jasubtype {\ssplit S} T$, then $\jsubtype S T$.
\end{thm}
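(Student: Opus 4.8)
The plan is to recognize this as an immediate corollary of Theorem~\ref{thm:sound-alg-subtyping} (Generalized Algorithmic $\Rightarrow$ Declarative), composed with Lemma~\ref{lem:binter-split} and the transitivity of declarative subsorting (the \textbf{trans} rule of Figure~\ref{fig:usual-rules}). Essentially all of the real work has already been discharged in proving the generalized statement; what remains is to insert the $\binter{\ssplit S}$ intermediary and chain two subsorting facts together.

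Concretely, I would argue as follows. Assume $\jasubtype {\ssplit S} T$. Instantiating Theorem~\ref{thm:sound-alg-subtyping}, clause~(\ref{cl:jsub}), with $\Delta := \ssplit S$, we obtain $\jsubtype {\binter{\ssplit S}} T$. Independently, Lemma~\ref{lem:binter-split} gives $\jsubtype S {\binter{\ssplit S}}$ --- that is, a sort is always a subsort of the intersection of its own split. Chaining these two facts through the declarative \textbf{trans} rule yields $\jsubtype S T$, as desired. The whole argument is thus just the composition $S \subtype \binter{\ssplit S} \subtype T$.

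The only bookkeeping point to keep an eye on is that the \textbf{trans} rule is stated for sorts refining a common type, so one should confirm that $\binter{\ssplit S}$ refines the same type as $S$ and $T$. This is automatic: both $\ssplit{-}$ and $\binter{-}$ merely rearrange and recombine the components of $S$ without altering the underlying type, so by the compositional definition of the refinement relation the side condition holds, and under the standing refinement-restriction convention adopted after Theorem~\ref{thm:substitution} it need not be belabored. There is no genuine obstacle in this final step: the difficulty of the Algorithmic $\Rightarrow$ Declarative direction lives entirely in Theorem~\ref{thm:sound-alg-subtyping} and the supporting lemmas about $\binter{-}$, and the statement here is purely a matter of plumbing around the split/fold pair.
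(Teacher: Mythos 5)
Your proposal is correct and is essentially identical to the paper's own proof: both instantiate Theorem~\ref{thm:sound-alg-subtyping} to get $\jsubtype {\binter{\ssplit S}} T$, invoke Lemma~\ref{lem:binter-split} for $\jsubtype S {\binter{\ssplit S}}$, and close with the \rulename{trans} rule. The extra remark about the common refined type is fine but not something the paper finds necessary to spell out.
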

\proof Suppose $\jasubtype {\ssplit S} T$.  Then, 

    \begin{indented}
    \begin{proofcase}
    $\jsubtype {\binter {\ssplit S}} T$
        \` By Theorem~\ref{thm:sound-alg-subtyping}.\quad\phantom{\qEd} \\
    $\jsubtype S {\binter {\ssplit S}}$
        \` By Lemma~\ref{lem:binter-split}.\quad\phantom{\qEd} \\
    $\jsubtype S T$
        \` By rule \rulename{trans}.\quad\qEd
    \\* \` 
    \end{proofcase}
    \end{indented}


\noindent Now it remains only to show that intrinsic subsorting
implies algorithmic.  To do so, we require some lemmas.  First, we
extend our notion of a sort $S$ refining a type $A$ to an entire list
of sorts $\Delta$ refining a type $A$ in the obvious way.
\begin{mathpar}
\inferaxiom {\jchecksortclass \Gamma \cdot A}
\and
\inferrule {\jchecksortclass \Gamma \Delta A
        \\  \jchecksortclass \Gamma S A}
           {\jchecksortclass \Gamma {(\Delta, S)} A}
\end{mathpar}
This new notion has the following important properties.

\begin{lem}
\label{lem:concat-refine}
If\/ $\jchecksortclass \Gamma {\Delta_1} A$ and $\jchecksortclass \Gamma
{\Delta_2} A$, then $\jchecksortclass \Gamma {\Delta_1, \Delta_2} A$.
\end{lem}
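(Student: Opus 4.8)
The plan is to proceed by induction on the structure of the second list $\Delta_2$. This is the natural choice because the two list-refinement rules build up a list by appending a single sort on the \emph{right} (the inductive rule concludes $\jchecksortclass \Gamma {(\Delta, S)} A$ from $\jchecksortclass \Gamma \Delta A$ and $\jchecksortclass \Gamma S A$), and the concatenation $\Delta_1, \Delta_2$ grows precisely by extending $\Delta_2$. An induction on $\Delta_1$ would be awkward, since the rules give no handle on dismantling a list from the left, so fixing $\Delta_2$ as the induction variable is the one real decision in the proof.

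First I would treat the base case $\Delta_2 = \cdot$. Here $\Delta_1, \cdot = \Delta_1$, so the goal $\jchecksortclass \Gamma {\Delta_1, \cdot} A$ is literally the first hypothesis $\jchecksortclass \Gamma {\Delta_1} A$, and nothing further is needed. For the inductive step $\Delta_2 = \Delta_2', S$, I would apply inversion to the derivation of $\jchecksortclass \Gamma {\Delta_2', S} A$: it can only end with the inductive list-refinement rule, which yields both $\jchecksortclass \Gamma {\Delta_2'} A$ and $\jchecksortclass \Gamma S A$. The induction hypothesis, applied to $\Delta_2'$ together with the assumption $\jchecksortclass \Gamma {\Delta_1} A$, then gives $\jchecksortclass \Gamma {\Delta_1, \Delta_2'} A$. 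Finally, applying the inductive rule once more to this derivation and to $\jchecksortclass \Gamma S A$ produces $\jchecksortclass \Gamma {(\Delta_1, \Delta_2'), S} A$, which by associativity of list concatenation is exactly the desired $\jchecksortclass \Gamma {\Delta_1, \Delta_2} A$.

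I do not expect any genuine obstacle: the argument is a one-line structural induction whose only subtlety is the bookkeeping identity $(\Delta_1, \Delta_2'), S = \Delta_1, (\Delta_2', S)$, which is immediate from the convention (stated earlier) that comma denotes associative list concatenation. The inversion step is justified because the list-refinement judgment is defined by exactly the two displayed rules, so a nonempty list can only have been formed by the inductive rule.

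\begin{proof}
Straightforward induction on $\Delta_2$.
\end{proof}
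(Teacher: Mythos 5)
Your proof is correct and follows exactly the paper's approach: the paper also proves this by straightforward induction on $\Delta_2$, with the base case trivial and the step case combining the induction hypothesis with one application of the list-refinement rule. Your handling of the inversion and associativity bookkeeping is the fully spelled-out version of the same argument.
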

\begin{proof}
Straightforward induction on $\Delta_2$.
\end{proof}
\begin{lem}
\label{lem:split-refine}
If\/ $\jchecksortclass \Gamma S A$, then $\jchecksortclass \Gamma {\ssplit S} A$.
\end{lem}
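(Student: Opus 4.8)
The plan is to proceed by induction on the structure of the sort $S$, following the four clauses in the definition of $\ssplit{-}$ and appealing to the two rules defining when a list refines a type, together with Lemma~\ref{lem:concat-refine}. Since the refinement judgment $\jchecksortclass \Gamma S A$ is syntax-directed on $S$, inverting the given derivation in each case recovers exactly the premises I need for the induction hypothesis.

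First I would dispatch the two cases where $S$ is ``irreducible'' under splitting. When $S = Q$ is an atomic sort we have $\ssplit Q = Q$, i.e.\ the singleton list, and likewise when $S = \sspi x {S_1} {A_1} T$ is a function sort we have $\ssplit S = S$. In each case the desired judgment $\jchecksortclass \Gamma {\ssplit S} A$ follows directly: the empty list refines $A$ by the first list-refinement rule, and the cons rule then extends it with the single element $S$, using the hypothesis $\jchecksortclass \Gamma S A$ itself as the second premise. No inversion or induction hypothesis is needed. The case $S = \sstop$ is even simpler, since $\ssplit \sstop = \cdot$ and $\jchecksortclass \Gamma \cdot A$ holds unconditionally by the empty-list rule.

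The one clause that requires real work is the intersection $S = \sinter {S_1} {S_2}$, where $\ssplit {\sinter {S_1} {S_2}} = \ssplit {S_1}, \ssplit {S_2}$. Here I would invert the given derivation of $\jchecksortclass \Gamma {\sinter {S_1} {S_2}} A$; the only refinement rule concluding an intersection forces its premises to be $\jchecksortclass \Gamma {S_1} A$ and $\jchecksortclass \Gamma {S_2} A$. Applying the induction hypothesis to the strict subterms $S_1$ and $S_2$ yields $\jchecksortclass \Gamma {\ssplit {S_1}} A$ and $\jchecksortclass \Gamma {\ssplit {S_2}} A$, and concatenating these via Lemma~\ref{lem:concat-refine} gives exactly $\jchecksortclass \Gamma {\ssplit {S_1}, \ssplit {S_2}} A$, as required.

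I do not expect any genuine obstacle: this is a clean structural induction whose only nontrivial step is the intersection case, and even there the difficulty is entirely absorbed by Lemma~\ref{lem:concat-refine}. The mild point of interest is simply that the list-refinement judgment was set up precisely so that splitting preserves refinement, with the empty list handling $\sstop$ and concatenation handling $\intersect$; the remaining cases are mechanical.
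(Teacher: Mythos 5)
Your proof is correct and is exactly the ``straightforward induction on $S$'' that the paper gives, with the four cases of $\ssplit{-}$ spelled out and the intersection case discharged via Lemma~\ref{lem:concat-refine}, which is the natural (and intended) supporting lemma since it is stated immediately beforehand.
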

\begin{proof}
Straightforward induction on $S$.
\end{proof}

\begin{lem}
\label{lem:japp}
If\/ $\D :: \jchecksortclass \Gamma \Delta {\api x {A_1} {A_2}}$ and
     $\E :: \under \Gamma \japply \Delta N {\Delta_2}$ and
     $\jhsubst a N x {A_1} {A_2} {A_2'}$,
then $\jchecksortclass \Gamma {\Delta_2} {A_2'}$.
\end{lem}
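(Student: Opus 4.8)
The plan is to induct on the derivation $\E$ of $\under \Gamma {\japply \Delta N {\Delta_2}}$, inverting the refinement derivation $\D$ of $\jchecksortclass \Gamma \Delta {\api x {A_1} {A_2}}$ at each step to learn how the head of $\Delta$ refines the function type. Since $\Delta$ is intersection-free, each of its elements is either an atomic sort $Q$ or a function sort $\sspi x S A T$, and, reading off the list-refinement rules, a list refines $\api x {A_1} {A_2}$ precisely when each of its elements does.

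In the base case $\under \Gamma {\japply \cdot N \cdot}$ the output is empty, so $\jchecksortclass \Gamma \cdot {A_2'}$ holds by the empty-list refinement axiom. The case in which the head of $\Delta$ is an atomic sort $Q$ is vacuous: inverting $\D$ there would demand $\jchecksortclass \Gamma Q {\api x {A_1} {A_2}}$, yet an atomic sort can never refine a function type, so the hypotheses are contradictory. In the two ``skip'' cases --- where $N$ fails to check against the domain sort, or where the codomain substitution is undefined --- the output $\Delta_2 = \Delta'$ is unchanged, so I invert $\D$ to obtain the refinement $\jchecksortclass \Gamma {\Delta'} {\api x {A_1} {A_2}}$ of the tail and conclude directly by the induction hypothesis.

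The substance of the argument is the productive case, governed by the rule that appends $\ssplit {T'}$ to the output when $\jacheck \Gamma N S$ and $\jhsubst s N x A T {T'}$ both hold. First I invert $\D$ on the head $\sspi x S A T$: the function-refinement rule forces the index $A$ to coincide with the domain $A_1$ and yields $\jchecksortclass {\Gamma, \rdecl x S {A_1}} T {A_2}$. The induction hypothesis on the tail gives $\jchecksortclass \Gamma {\Delta'} {A_2'}$, so by Lemma~\ref{lem:concat-refine} it remains only to show $\jchecksortclass \Gamma {\ssplit {T'}} {A_2'}$. For this I promote the algorithmic check to $\jchecktermsort \Gamma N S$ via Theorem~\ref{thm:sound-typing} (Soundness of Algorithmic Typing) and apply Theorem~\ref{thm:substitution} (Substitution) to $\jchecksortclass {\Gamma, \rdecl x S {A_1}} T {A_2}$; this tells me that the hereditary substitutions of $N$ for $x$ into $T$ and into $A_2$ (both at index $A_1$) are defined and that the resulting sort refines the resulting type. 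Theorem~\ref{thm:functionality} (Functionality of Substitution) then identifies these outputs with the $T'$ supplied by $\E$ and the $A_2'$ supplied by the hypothesis, giving $\jchecksortclass \Gamma {T'} {A_2'}$, and Lemma~\ref{lem:split-refine} upgrades this to $\jchecksortclass \Gamma {\ssplit {T'}} {A_2'}$.

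The step I expect to be the crux is aligning the various substitution indices: I must use the inversion of $\D$ to see that the index $A$ appearing in the application rule really is the domain $A_1$, for only then does the substitution theorem run with one consistent index, and only then can Functionality of Substitution match its outputs against the externally supplied $T'$ and $A_2'$. Everything else is routine structural bookkeeping over the shape of $\Delta$.
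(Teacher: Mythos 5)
Your proof is correct and follows essentially the same route as the paper's: induction on $\E$, using Theorem~\ref{thm:sound-typing} to promote the algorithmic check so that Theorem~\ref{thm:substitution} applies, and closing with Lemmas~\ref{lem:concat-refine} and~\ref{lem:split-refine}. The extra details you supply---the vacuousness of the atomic-head case by inversion of $\D$, and the use of Theorem~\ref{thm:functionality} to align the substitution outputs with the given $T'$ and $A_2'$---are exactly the bookkeeping the paper's sketch leaves implicit.
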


\begin{proof}[Proof (sketch)]
By induction on $\E$, using Theorem~\ref{thm:sound-typing} (Soundness of
Algorithmic Typing) to appeal to Theorem~\ref{thm:substitution}
(Substitution), along with Lemmas~\ref{lem:concat-refine} and
\ref{lem:split-refine}.
\end{proof}

We will also require an analogue of subsumption for our algorithmic typing
system, which relies on two lemmas about lists of sorts.

\begin{lem}
\label{lem:D-all}
If\/ $\jchecksortclass \Gamma \Delta A$, then for all $S \in \Delta$,
$\jchecksortclass \Gamma S A$.
\end{lem}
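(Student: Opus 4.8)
The plan is to proceed by a straightforward induction on the derivation of $\jchecksortclass \Gamma \Delta A$, equivalently on the structure of the list $\Delta$, exactly mirroring the pattern used for the neighboring list lemmas such as Lemma~\ref{lem:concat-refine} and Lemma~\ref{lem:split-refine}. The list-refinement judgment is defined by precisely two rules, so the induction has only two cases.

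In the base case the derivation ends with the axiom $\jchecksortclass \Gamma \cdot A$, so $\Delta = \cdot$. Since there are no elements $S \in \cdot$, the conclusion holds vacuously. In the inductive case the derivation ends with the concatenation rule, so $\Delta = (\Delta_0, S_0)$ and we are handed subderivations of $\jchecksortclass \Gamma {\Delta_0} A$ and $\jchecksortclass \Gamma {S_0} A$. Given an arbitrary $S \in (\Delta_0, S_0)$, I would case-split on whether $S = S_0$ or $S \in \Delta_0$: the former goal is discharged immediately by the second premise, while the latter follows from the induction hypothesis applied to the subderivation $\jchecksortclass \Gamma {\Delta_0} A$. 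In either case we obtain $\jchecksortclass \Gamma S A$, which closes the induction.

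There is no genuine obstacle here; the only points requiring minor care are the overloading of the refinement judgment $\jchecksortclass{\Gamma}{\cdot}{\cdot}$ between its single-sort reading and its list reading, and ensuring the membership case analysis on $(\Delta_0, S_0)$ is exhaustive. Conceptually this lemma is simply the elimination principle dual to the introduction rules defining list refinement, and I expect it to be deployed alongside Lemma~\ref{lem:concat-refine} and Lemma~\ref{lem:split-refine} to shuffle between the list-level and element-level $\refines$ judgments in the remainder of the completeness argument for algorithmic subsorting.
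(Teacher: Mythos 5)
Your proposal is correct and matches the paper's own argument, which is simply ``straightforward induction on $\Delta$'': the base case is vacuous and the cons case splits membership between the head (immediate from the second premise of the list-formation rule) and the tail (by the induction hypothesis). Nothing further is needed.
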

\begin{proof}
Straightforward induction on $\Delta$.
\end{proof}

\begin{lem}
\label{lem:checkall-binter}
If for all\/ $S \in \Delta$, $\jchecktermtype \Gamma N S$, then
$\jchecktermtype \Gamma N {\binter\Delta}$.
\end{lem}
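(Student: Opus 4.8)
The plan is to proceed by induction on the list $\Delta$, following exactly the recursive structure of the $\binter{-}$ operator. The only two ingredients needed are the rules \rulename{$\sstop$-I} and \rulename{$\intersect$-I} of the checking judgment; no substitution or subsorting reasoning is required.

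In the base case $\Delta = \cdot$, the definition gives $\binter{\cdot} = \sstop$, and the conclusion $\jchecktermtype \Gamma N \sstop$ holds immediately by \rulename{$\sstop$-I}, independently of the (vacuously true) hypothesis. In the inductive case $\Delta = \Delta', S_0$, the definition gives $\binter{\Delta', S_0} = \sinter{\binter{\Delta'}}{S_0}$, so by \rulename{$\intersect$-I} it suffices to establish $\jchecktermtype \Gamma N {\binter{\Delta'}}$ and $\jchecktermtype \Gamma N {S_0}$ separately. The second holds directly, since $S_0 \in \Delta$ and the hypothesis asserts $\jchecktermtype \Gamma N S$ for every $S \in \Delta$. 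For the first, I would invoke the induction hypothesis on $\Delta'$; this is licensed because every $S \in \Delta'$ is also a member of $\Delta$, so the hypothesis ``for all $S \in \Delta$'' restricts to ``for all $S \in \Delta'$''. Combining the two derivations via \rulename{$\intersect$-I} yields $\jchecktermtype \Gamma N {\binter{\Delta', S_0}}$.

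There is no real obstacle here: the single point worth noting is that the universally quantified hypothesis is preserved when passing to the sublist $\Delta'$, which is immediate. The lemma is thus a routine structural induction that mirrors the definition of $\binter{-}$ clause by clause.
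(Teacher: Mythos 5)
Your proof is correct and is exactly the argument the paper intends: the paper's proof is simply ``straightforward induction on $\Delta$,'' and your filled-in version---base case by \rulename{$\sstop$-I}, step case by \rulename{$\intersect$-I} with the universally quantified hypothesis restricted to the sublist---is precisely that induction, mirroring the fold structure of $\binter{-}$.
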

\begin{proof}
Straightforward induction on $\Delta$.
\end{proof}

\begin{thm}[Algorithmic Subsumption]
\label{thm:alg-subs}
If\/ $\jasynth \Gamma R \Delta$ and $\jchecksortclass \Gamma \Delta A$ and
$\jasubtype \Delta S$, then $\jacheck \Gamma {\expand A R} S$.
\end{thm}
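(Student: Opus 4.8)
The plan is to induct on the derivation of $\jasubtype \Delta S$ (equivalently on the structure of $S$), with the whole induction made well-founded by the erasure $\erasedeps A$ of the classifying type, taken lexicographically with the subsorting derivation. Throughout I rely on the tacit refinement restriction, which gives $S \refines A$, so that the shape of $A$ tracks the shape of $S$. Two of the four cases are immediate. When $S = \sstop$, the goal $\jacheck \Gamma {\expand A R} \sstop$ holds by the $\sstop$ rule of algorithmic checking. When $S = Q$, inverting the subsorting derivation yields $Q' \in \Delta$ with $\jsubtype {Q'} Q$; since $Q \refines A$ forces $A$ to be atomic, $\eta$-expansion is the identity ($\expand A R = R$), and the atomic checking rule applies directly to the given $\jasynth \Gamma R \Delta$. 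When $S = \sinter {S_1} {S_2}$, I invert to $\jasubtype \Delta {S_1}$ and $\jasubtype \Delta {S_2}$, apply the induction hypothesis to each at the same $\erasedeps A$ (both refine $A$), and recombine with the $\intersect$ rule for $\jacheck$.

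The crux is the function case $S = \sspi x {S_1} {A_1} {S_2}$, where $A = \api x {A_1} {A_2}$ and $\expand A R = \mlam x {\expand {A_2} {(\mapp R {\expand {A_1} x})}}$. By the $\lambda$-rule for $\jacheck$ it suffices to show $\jacheck {\Gamma'} {\expand {A_2} {(\mapp R {\expand {A_1} x})}} {S_2}$ in $\Gamma' = \Gamma, \rdecl x {S_1} {A_1}$. Inverting the subsorting derivation gives $\japply \Delta {\rdecl x {\ssplit {S_1}} {A_1}} {\Delta_2}$ together with $\jasubtype {\Delta_2} {S_2}$. On the typing side, $R$ still synthesizes $\Delta$ in $\Gamma'$ by weakening, and by Theorem~\ref{thm:checking-decidable} the judgment $\under {\Gamma'} {\japply \Delta {\expand {A_1} x} {\Delta''}}$ is defined for some $\Delta''$, so $\jasynth {\Gamma'} {\mapp R {\expand {A_1} x}} {\Delta''}$. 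Since $\hsubst a {\expand {A_1} x} x {A_1} {A_2} = A_2$ by Lemma~\ref{lem:commutativity} (substitution of an $\eta$-expanded variable for itself is the identity), the type of this application is exactly $A_2$, matching the $\eta$-expansion, and Lemma~\ref{lem:japp} (with weakening of $\jchecksortclass \Gamma \Delta A$) yields $\jchecksortclass {\Gamma'} {\Delta''} {A_2}$. I then invoke the induction hypothesis at $\erasedeps {A_2}$, a proper subterm of $\erasedeps A$, to conclude the goal — provided I can establish $\jasubtype {\Delta''} {S_2}$.

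The heart of the matter is thus a correspondence between the two application judgments: from $\japply \Delta {\rdecl x {\ssplit {S_1}} {A_1}} {\Delta_2}$ and $\under {\Gamma'} {\japply \Delta {\expand {A_1} x} {\Delta''}}$ I will prove $\Delta_2 \subseteq \Delta''$, by induction on $\Delta$. The two judgments march through $\Delta$ in lockstep, discarding atomic entries and, for each function entry $\sspi y {T_1} {A_1} {T_2}$, contributing $\ssplit {T_2'}$ via the \emph{same} hereditary substitution $\hsubst s {\expand {A_1} x} y {A_1} {T_2}$ precisely when a domain test succeeds: subsorting tests $\jasubtype {\ssplit {S_1}} {T_1}$, while typing tests $\jacheck {\Gamma'} {\expand {A_1} x} {T_1}$. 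These are linked by the induction hypothesis \emph{at the strictly smaller erased type} $\erasedeps {A_1}$: using $\jasynth {\Gamma'} x {\ssplit {S_1}}$ (the var rule) and $\jchecksortclass {\Gamma'} {\ssplit {S_1}} {A_1}$ (from $S_1 \refines A_1$ via Lemma~\ref{lem:split-refine}), a success of the subsorting test forces a success of the typing test, while the substitution side conditions coincide because the substituted term $\expand {A_1} x$ and its index $A_1$ are identical in both judgments. Hence every codomain contributed to $\Delta_2$ also appears in $\Delta''$, giving $\Delta_2 \subseteq \Delta''$. A routine monotonicity (weakening) lemma for algorithmic subsorting—if $\jasubtype {\Delta_2} {S_2}$ and $\Delta_2 \subseteq \Delta''$ then $\jasubtype {\Delta''} {S_2}$, with a companion statement for $\japply$—then upgrades $\jasubtype {\Delta_2} {S_2}$ to $\jasubtype {\Delta''} {S_2}$, closing the function case.

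I expect this apply-correspondence step to be the main obstacle, for three reasons: the case analysis of the subsorting application judgment must be matched line-for-line against that of the typing application judgment (including the undefinedness and failure premises), the argument must invoke the very theorem being proved as an induction hypothesis at the domain type $A_1$ to turn subsorting tests into checking tests, and the well-founded metric must be chosen so that both this recursion at $A_1$ and the recursion at $A_2$ for the body sit at strictly smaller erased types — which is exactly why the induction is organized around $\erasedeps A$. It is worth noting that only the inclusion $\Delta_2 \subseteq \Delta''$ is required, so the converse direction of subsumption is never needed, keeping the induction one-directional.
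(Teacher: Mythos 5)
Your proof is correct, but it takes a genuinely different route from the paper's. The paper disposes of this theorem in a few lines by detouring through the \emph{declarative} systems: soundness of algorithmic typing (Theorem~\ref{thm:sound-typing}) plus Lemma~\ref{lem:D-all} and Theorem~\ref{thm:expansion} give $\jchecktermtype \Gamma {\expand A R} {S'}$ for every $S' \in \Delta$, hence $\jchecktermtype \Gamma {\expand A R} {\binter\Delta}$ by Lemma~\ref{lem:checkall-binter}; Theorem~\ref{thm:sound-alg-subtyping} converts $\jasubtype \Delta S$ into $\jsubtype {\binter\Delta} S$; soundness of declarative subsorting (Theorem~\ref{thm:sound-declarative}) together with the subsumption formulation in Theorem~\ref{thm:alternate} then yields $\jchecktermtype \Gamma {\expand A R} S$, and Theorem~\ref{thm:complete-typing} brings the result back into the algorithmic system. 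Your proof never leaves the algorithmic system: it is a direct induction on $\erasedeps A$ (lexicographically with the subsorting derivation) whose crux is a lockstep comparison of the two application judgments, invoking the theorem itself at the strictly smaller domain type $\erasedeps {A_1}$ to turn the subsorting test $\jasubtype {\ssplit {S_1}} {T_1}$ into the checking test on $\expand {A_1} x$ --- essentially the technique the paper reserves for Theorem~\ref{thm:complete-alg-subtyping}, run in the opposite direction. The paper's route buys brevity and reuse (every ingredient is already proved); your route buys independence from the declarative apparatus (Theorems~\ref{thm:alternate}, \ref{thm:sound-declarative}, \ref{thm:sound-alg-subtyping} and completeness of algorithmic typing are never used), at the cost of two new auxiliary lemmas. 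One of these deserves more care than the word \emph{routine} suggests: in the $\Pi$ case of your monotonicity lemma (from $\jasubtype {\Delta_2} {S_2}$ and $\Delta_2 \subseteq \Delta''$ conclude $\jasubtype {\Delta''} {S_2}$), building the application derivation over the larger list $\Delta''$ forces you to resolve, for each extra entry, the negative-premise alternatives of Figure~\ref{fig:algorithmic-rules}, which requires totality/decidability of the algorithmic \emph{subsorting} judgments --- a fact the paper states only for the typing analogue (Theorem~\ref{thm:checking-decidable}), though it is provable by exactly the $\erasedeps A$ metric you have already set up. With that lemma made explicit, your argument goes through.
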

\proof
Straightforward deduction, using soundness and completeness of algorithmic
typing.

    \begin{indented}
    \begin{proofcase}
    $\forall S' \in \Delta \bdot \jsynthtermtype \Gamma R {S'}$
        \` By Theorem~\ref{thm:sound-typing} (Soundness of Alg.\ Typing).\quad\phantom{\qEd} \\
    $\forall S' \in \Delta \bdot \jchecksortclass \Gamma {S'} A$
        \` By Lemma~\ref{lem:D-all}.\quad\phantom{\qEd} \\
    $\forall S' \in \Delta \bdot \jchecktermtype \Gamma {\expand A R} {S'}$
        \` By Theorem~\ref{thm:expansion} (Expansion).\quad\phantom{\qEd} \\
    $\jchecktermtype \Gamma {\expand A R} {\binter\Delta}$
        \` By Lemma~\ref{lem:checkall-binter}.\quad\phantom{\qEd} \\
    \\
    $\jasubtype \Delta S$
        \` By assumption.\quad\phantom{\qEd}\\
    $\jsubtype {\binter\Delta} S$
        \` By Theorem~\ref{thm:sound-alg-subtyping}
           (Generalized Alg. $\Rightarrow$ Decl.).\quad\phantom{\qEd}  \\
    \\
    $\jchecktermtype \Gamma {\expand A R} S$
        \` By Theorem~\ref{thm:sound-declarative}
           (Soundness of Decl.\ Subsorting) and \quad\phantom{\qEd}\\
        \` Theorem~\ref{thm:alternate}
           (Alternate Formulations of Subsorting).\quad\phantom{\qEd} \\
    $\jacheck \Gamma {\expand A R} S$
        \` By Theorem~\ref{thm:complete-typing}
           (Completeness of Alg.\ Typing).\quad\qEd
    \\* \` 
    \end{proofcase}
    \end{indented}

\noindent Now we can prove the following main theorem, which
generalizes our desired ``Intrinsic $\Rightarrow$ Algorithmic''
theorem:
\begin{thm}[Generalized Intrinsic $\Rightarrow$ Algorithmic] ~
\label{thm:complete-alg-subtyping}
\begin{enumerate}[\em(1)]
\item If\/ $\jasynth \Gamma R \Delta$
       and $\E :: \jacheck \Gamma {\expand A R} S$
       and $\jchecksortclass \Gamma \Delta A$
       and $\jchecksortclass \Gamma S A$,
      then $\jasubtype \Delta S$.
  \label{cl:comp}
\item If\/ $\jasynth \Gamma x {\Delta_1}$
       and $\E :: \under \Gamma \japply \Delta {\expand {A_1} x} {\Delta_2}$
       and $\jchecksortclass \Gamma {\Delta_1} {A_1}$
       and $\jchecksortclass \Gamma \Delta {\api x {A_1} {A_2}}$,
      then $\japply \Delta {\rdecl x {\Delta_1} {A_1}} {\Delta_2}$.
  \label{cl:compaux}
\end{enumerate}
\end{thm}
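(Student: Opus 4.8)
The plan is to prove the two clauses \ref{cl:comp} and \ref{cl:compaux} simultaneously by a single lexicographic induction, since they are genuinely mutually recursive: the function-sort case of clause~\ref{cl:comp} produces its required subsorting-application premise by appealing to clause~\ref{cl:compaux}, while clause~\ref{cl:compaux} in turn appeals to clause~\ref{cl:comp} to turn an algorithmic \emph{check} of an argument into an algorithmic \emph{subsorting} of the list that argument synthesizes. The termination metric I would use is lexicographic on (i) the erased classifying type, namely $\erasedeps A$ in clause~\ref{cl:comp} and $\erasedeps{\api x {A_1} {A_2}}$ in clause~\ref{cl:compaux}; then (ii) a phase ordering that ranks clause~\ref{cl:compaux} strictly below clause~\ref{cl:comp} at the same type; and finally (iii) the structure of $S$ for clause~\ref{cl:comp} or of the derivation $\E$ for clause~\ref{cl:compaux}. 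The one same-type cross-call, from clause~\ref{cl:comp} at an arrow type to clause~\ref{cl:compaux} at that same type, is absorbed by the phase, whereas clause~\ref{cl:compaux} only ever calls clause~\ref{cl:comp} at the strictly smaller domain type $A_1$, so the metric decreases on every recursive call.

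For clause~\ref{cl:comp} I would case-analyze on $S$, whose shape determines $\E :: \jacheck \Gamma {\expand A R} S$. The $\sstop$ case is the corresponding axiom of Figure~\ref{fig:algorithmic-rules}; for $S = \sinter {S_1} {S_2}$, inverting $\E$ and the refinement $\jchecksortclass \Gamma S A$ feeds the structural induction hypothesis to get $\jasubtype \Delta {S_1}$ and $\jasubtype \Delta {S_2}$, and the algorithmic $\intersect$ rule concludes. For atomic $S = Q$ the type $A$ is atomic, so $\expand A R = R$ and $\E$ must end in the algorithmic rule for checking an atomic term, with premises $\jasynth \Gamma R {\Delta'}$, $Q' \in \Delta'$, and ${Q'} \subtype Q$; because the synthesis rules are syntax-directed and rest on the deterministic, total application judgment, synthesis is a partial function and $\Delta' = \Delta$, so the atomic subsorting rule gives $\jasubtype \Delta Q$. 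The substantive case is $S = \sspi x {S_1} {A_1} {S_2}$ with $A = \api x {A_1} {A_2}$: inverting the $\lambda$-check gives a check of the body $\expand {A_2} {\mapp R {\expand {A_1} x}}$ against $S_2$ in the context extended by $\rdecl x {S_1} {A_1}$, where Lemma~\ref{lem:commutativity} (Commutativity) justifies that $\mapp R {\expand {A_1} x}$ lands at type $A_2$. Weakening $\jasynth \Gamma R \Delta$ and running the total application judgment synthesizes some $\Delta_2$ for this body, with $\Delta_2 \refines A_2$ and $\ssplit {S_1} \refines {A_1}$ supplied by Lemmas~\ref{lem:japp} and \ref{lem:split-refine}. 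The induction hypothesis for clause~\ref{cl:comp} at the smaller type $A_2$ then gives $\jasubtype {\Delta_2} {S_2}$, and the induction hypothesis for clause~\ref{cl:compaux} (same type, lower phase), instantiated with $\Delta_1 = \ssplit {S_1}$ and the fact that in the extended context $x$ synthesizes $\ssplit {S_1}$ by the variable rule, gives $\japply \Delta {\rdecl x {\ssplit {S_1}} {A_1}} {\Delta_2}$; the algorithmic arrow rule assembles these into $\jasubtype \Delta {\sspi x {S_1} {A_1} {S_2}}$.

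For clause~\ref{cl:compaux} I would induct on $\E :: \under \Gamma \japply \Delta {\expand {A_1} x} {\Delta_2}$, i.e.\ on $\Delta$, matching each rule of the \emph{typing} application judgment against its counterpart among the subsorting application rules of Figure~\ref{fig:algorithmic-rules}. The empty case and the atomic-entry skip case transfer directly. When the positive rule fired, so that the argument checked, $\jacheck \Gamma {\expand {A_1} x} {S_1}$, I invoke clause~\ref{cl:comp} at the smaller domain type $A_1$ (with $R = x$, and $S_1 \refines A_1$ obtained by inverting the refinement of the function sort) to get $\jasubtype {\Delta_1} {S_1}$; since the hereditary substitution into the codomain is literally identical in the two judgments, the positive subsorting rule extends $\Delta_2$ in exactly the same way. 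The delicate cases are the two totality-completing negative rules: when substitution into the codomain is undefined the subsorting rule carries the identical side condition and fires unchanged, but when the argument \emph{failed} to check, $\jnacheck \Gamma {\expand {A_1} x} {S_1}$, I must supply $\jnasubtype {\Delta_1} {S_1}$, and here I would use the contrapositive of Theorem~\ref{thm:alg-subs} (Algorithmic Subsumption): were $\jasubtype {\Delta_1} {S_1}$, then from $\jasynth \Gamma x {\Delta_1}$ and $\jchecksortclass \Gamma {\Delta_1} {A_1}$ it would follow that $\jacheck \Gamma {\expand {A_1} x} {S_1}$, a contradiction.

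I expect this last negative case to be the main obstacle, together with the bookkeeping that supports it. The correspondence between algorithmic checking of an argument and algorithmic subsorting of its synthesized list must run in \emph{both} directions --- clause~\ref{cl:comp} provides ``check $\Rightarrow$ subsort'' and Theorem~\ref{thm:alg-subs} provides ``subsort $\Rightarrow$ check'' --- and only with both in hand do the totality-completing negative rules of the two application judgments stay in lock-step. The second delicate point is keeping the mutual induction well-founded, which is exactly why the phase must place clause~\ref{cl:compaux} below clause~\ref{cl:comp} at equal type so that the arrow case never recurs at a larger measure. Throughout, the routine but essential work is discharging the refinement side conditions needed to license each appeal to an induction hypothesis, for which I would lean on Lemmas~\ref{lem:japp}, \ref{lem:split-refine}, \ref{lem:D-all}, and \ref{lem:concat-refine}.
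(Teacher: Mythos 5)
Your proposal is correct and follows essentially the same route as the paper's proof: induction on the classifying type, then the clause, then $S$ (for clause 1, by case analysis on $S$ with inversion of $\E$) or $\E$ (for clause 2, by case analysis on $\E$), using Lemmas~\ref{lem:concat-refine} and \ref{lem:split-refine} to discharge refinement preconditions and, crucially, the contrapositive of Theorem~\ref{thm:alg-subs} to turn $\jnacheck \Gamma {\expand {A_1} x} {S_1}$ into $\jnasubtype {\Delta_1} {S_1}$ in the negative case. Your explicit ``phase'' component merely makes precise the same lexicographic ordering the paper states as ``induction on $A$, $S$, and $\E$.''
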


\begin{proof}[Proof (sketch)]
By induction on $A$, $S$, and $\E$.

Clause \ref{cl:comp} is most easily proved by case analyzing the sort $S$
and applying inversion to the derivation $\E$.  The case when $S = \sspi x
{S_1} {A_1} {S_2}$ appeals to the induction hypothesis at an unrelated
derivation but at a smaller type, and Lemmas~\ref{lem:concat-refine} and
\ref{lem:split-refine} are used to satisfy the preconditions of the
induction hypotheses.

Clause \ref{cl:compaux} is most easily proved by case analyzing the
derivation $\E$.  In one case, we require the contrapositive of
Theorem~\ref{thm:alg-subs} (Algorithmic Subsumption) to convert a
derivation of $\jnacheck \Gamma {\expand {A_1} x} {S_1}$ into a
derivation of $\jnasubtype {\Delta_1} {S_1}$.

\end{proof}

Theorem~\ref{thm:complete-alg-subtyping} along with
Theorem~\ref{thm:complete-typing}, the Completeness of
Algorithmic Typing, gives us our desired result:
\begin{thm}[Intrinsic $\Rightarrow$ Algorithmic]
\label{thm:intr-algo}
If\/ $\jchecktermtype {\Gamma, \rdecl x S A} {\expand A x} T$,
then $\jasubtype {\ssplit S} T$.
\end{thm}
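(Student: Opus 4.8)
The plan is to derive this as an immediate corollary of Theorem~\ref{thm:complete-alg-subtyping} (Generalized Intrinsic $\Rightarrow$ Algorithmic), specialized to the case where the atomic term being synthesized is the bound variable $x$ itself. Concretely, I would work in the extended context $\Gamma, \rdecl x S A$ and instantiate clause~\ref{cl:comp} of that theorem with the atomic term taken to be $x$ and the synthesized list taken to be $\ssplit S$. The heavy lifting of relating $\eta$-expansion-based checking to the algorithmic application judgment has already been carried out in Theorem~\ref{thm:complete-alg-subtyping}; what remains here is only to supply its four hypotheses.

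First I would observe that the algorithmic synthesis rule for variables applies directly: since $\rdecl x S A \in (\Gamma, \rdecl x S A)$, we get $\jasynth {\Gamma, \rdecl x S A} x {\ssplit S}$ by a single rule application, discharging the first hypothesis. Next, the given intrinsic derivation $\jchecktermtype {\Gamma, \rdecl x S A} {\expand A x} T$ is a derivation of the bidirectional checking judgment; applying clause~(2) of Theorem~\ref{thm:complete-typing} (Completeness of Algorithmic Typing) converts it into an algorithmic checking derivation $\E :: \jacheck {\Gamma, \rdecl x S A} {\expand A x} T$, supplying the second hypothesis. The remaining two hypotheses are the refinement side-conditions $\jchecksortclass {\Gamma, \rdecl x S A} {\ssplit S} A$ and $\jchecksortclass {\Gamma, \rdecl x S A} T A$. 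For the former, well-formedness of the binding $\rdecl x S A$ gives $\jchecksortclass {\Gamma, \rdecl x S A} S A$, and then Lemma~\ref{lem:split-refine} yields $\jchecksortclass {\Gamma, \rdecl x S A} {\ssplit S} A$. For the latter, the tacit refinement restriction applied to the intrinsic hypothesis tells us that $T$ refines the LF type against which $\expand A x$ checks; since the $\eta$-expansion of a variable of type $A$ checks against $A$, this gives $\jchecksortclass {\Gamma, \rdecl x S A} T A$. With all four hypotheses established, Theorem~\ref{thm:complete-alg-subtyping}(\ref{cl:comp}) delivers exactly $\jasubtype {\ssplit S} T$.

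I expect no deep obstacle in this proof, since it is essentially a packaging of the generalized theorem together with the variable rule and completeness of algorithmic typing. The only point requiring genuine care is the verification of the refinement precondition $\jchecksortclass {\Gamma, \rdecl x S A} T A$: one must read off the correct LF type of $\expand A x$ and explicitly invoke the convention that we only sort-check a term against sorts refining its type. Everything else is routine bookkeeping under the standing well-formedness and refinement assumptions.
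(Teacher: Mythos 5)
Your proposal is correct and follows essentially the same route as the paper's own proof: the variable synthesis rule gives $\jasynth {\Gamma, \rdecl x S A} x {\ssplit S}$, Theorem~\ref{thm:complete-typing} converts the intrinsic derivation to $\jacheck {\Gamma, \rdecl x S A} {\expand A x} T$, and Theorem~\ref{thm:complete-alg-subtyping} closes the argument. The only difference is that you explicitly discharge the refinement side conditions (via Lemma~\ref{lem:split-refine} and the refinement restriction) that the paper leaves to its standing tacit well-formedness assumptions.
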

\proof
Suppose $\jchecktermtype {\Gamma, \rdecl x S A} {\expand A x} T$.  Then,

    \begin{indented}
    \begin{proofcase}
    $\jasynth {\Gamma, \rdecl x S A} x {\ssplit S}$
        \` By rule.\quad\phantom{\qEd} \\
    $\jacheck {\Gamma, \rdecl x S A} {\expand A x} T$
        \` By Theorem~\ref{thm:complete-typing} (Completeness of Alg.
        Typing).\quad\phantom{\qEd}  \\
    $\jasubtype {\ssplit S} T$
        \` By Theorem~\ref{thm:complete-alg-subtyping}.\quad\qEd
    \\* \` 
    \end{proofcase}
    \end{indented}

\noindent Finally, we have completeness as a simple corollary:
\begin{thm}[Completeness of Declarative Subsorting]
If\/ $\jchecktermtype {\Gamma, \rdecl x S A} {\expand A x} T$,
then $\jsubtype S T$.
\end{thm}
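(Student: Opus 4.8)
The plan is to recognize this theorem as the immediate composition of the two results just established, following exactly the triangle diagram relating the declarative, intrinsic, and algorithmic formulations of subsorting. The hypothesis $\jchecktermtype {\Gamma, \rdecl x S A} {\expand A x} T$ is precisely the ``intrinsic'' notion of subsorting, and the goal $\jsubtype S T$ is the ``declarative'' notion, so we need only chain through the ``algorithmic'' corner.

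Concretely, I would first invoke Theorem~\ref{thm:intr-algo} (Intrinsic $\Rightarrow$ Algorithmic) on the hypothesis $\jchecktermtype {\Gamma, \rdecl x S A} {\expand A x} T$ to obtain $\jasubtype {\ssplit S} T$. Then I would invoke Theorem~\ref{thm:algo-decl} (Algorithmic $\Rightarrow$ Declarative) on $\jasubtype {\ssplit S} T$ to conclude $\jsubtype S T$, which is exactly the desired result. No inductions, inversions, or auxiliary lemmas are needed at this final step.

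There is no genuine obstacle here: all of the intricate work---threading through the algorithmic system, handling the $\binter{-}$ folding operator, proving algorithmic subsumption, and managing the negative premises in the \texttt{@} judgment---was already carried out in establishing Theorems~\ref{thm:intr-algo} and~\ref{thm:algo-decl} (and the soundness and completeness of algorithmic typing they rest on). The completeness of declarative subsorting is therefore a one-line corollary obtained by composing these two implications, mirroring the earlier composition sketch in the ``intrinsic $\Rightarrow$ algorithmic $\Rightarrow$ declarative'' chain. The only care required is cosmetic: to note explicitly that the intrinsic judgment in the hypothesis and the algorithmic conclusion $\jasubtype {\ssplit S} T$ use the split of the \emph{same} sort $S$, so that the hand-off between the two theorems type-checks without any reindexing.
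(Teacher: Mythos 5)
Your proposal is correct and matches the paper's own proof exactly: the paper also obtains this result as an immediate corollary of Theorem~\ref{thm:intr-algo} (Intrinsic $\Rightarrow$ Algorithmic) followed by Theorem~\ref{thm:algo-decl} (Algorithmic $\Rightarrow$ Declarative), composed on the same judgment $\jasubtype {\ssplit S} T$. Nothing is missing.
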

\begin{proof}
Corollary of Theorems~\ref{thm:intr-algo} and \ref{thm:algo-decl}.
\end{proof}


\section{Proof Irrelevance}
\label{sect:proof-irrelevance}

\noindent When constructive type theory is used as a foundation for
verified functional programming, we notice that many parts of proofs
are \emph{computationally irrelevant}, that is, their structure does
not affect the returned value we are interested in.  The role of these
proofs is only to guarantee that the returned value satisfies the
desired specification.  For example, from a proof of $\forall x{:}A.\,
\exists y{:}B.\, C(x,y)$ we may choose to extract a function $f : A
\rightarrow B$ such that $C(x,f(x))$ holds for every $x{:}A$, but
ignore the proof that this is the case.  The proof must be present,
but its identity is irrelevant.  Proof-checking in this scenario has
to ascertain that such a proof is indeed not needed to compute the
relevant result.

A similar issue arises when a type theory such as $\lambda^\Pi$ is
used as a logical framework.  For example, assume we would like to have an
adequate representation of prime numbers, that is, to have a
bijection between prime numbers $p$ and closed terms $M :
\var{primenum}$.  It is relatively easy to define a type family
\lst"prime : nat -> type" such that there exists a closed
\lst"M : prime N" if and only if $N$ is prime.  Then
$\var{primenum} = \asigma n {\var{nat}} {\aapp {\var{prime}} n}$ is a
candidate (with members
$\mpair N M$), but it is not actually in bijective correspondence with
prime numbers unless the proof $M$ that a number is prime is always
unique.  Again, we need the existence of $M$, but would like to ignore
its identity.  This can be achieved with \emph{subset types}~%
\cite{Constable86,Salvesen88lics} $\asubset x {\var{nat}}
{\var{prime}(x)}$
whose members are just the prime numbers $p$, but if the restricting
predicate is undecidable then type-checking would be undecidable,
which is not acceptable for a logical framework.

For LF, we further note that $\Sigma$ is not available as a type
constructor, so we instead introduce a new type \lst{primenum}
with exactly one constructor, \lst"primenum/i":
\begin{lstlisting}
    primenum : type.
    primenum/i : {N:nat} prime N -:> primenum.
\end{lstlisting}
Here the second arrow \lst{-:>} represents a function
that ignores the identity of its argument.  The inhabitants
of \lst"primenum", all of the form \lst"primenum/i N [[$M$]]",
are now in bijective correspondence with prime numbers since
\lst"primenum/i N [[$M$]]" $=$ \lst"primenum/i N [[$M'$]]" for all
$M$ and $M'$.

In the extension of LF with proof
irrelevance~\cite{Pfenning01lics,ReedPfenning}, or LFI,
we have a new form of hypothesis $\idecl x A$ ($x$ has type
$A$, but the identity of $x$ should be irrelevant).  In the
non-dependent case (the only one important for the purposes
of this paper), such an assumption is introduced by a
$\lambda$-abstraction:
\begin{mathpar}
\inferrule{\jchecktermtype {\Gamma, \idecl x A} M B}
          {\jchecktermtype \Gamma {\mlam x M} {\aiarrow A B}}~.
\end{mathpar}
We can use such variables only in places where their identity
doesn't matter, e.g., in the second argument to the
constructor \lst"primenum/i" in the prime number example.  More
generally, we can only use it in arguments to constructor
functions that do not care about the identity of their argument:
\begin{mathpar}
\inferrule{\jsynthtermtype \Gamma R {\aiarrow A B}
        \\ \jchecktermtype {\promote\Gamma} N A}
          {\jsynthtermtype \Gamma {\miapp R N} B}~.
\end{mathpar}
Here, $\promote\Gamma$ is the \emph{promotion} operator which converts
any assumption $\idecl x A$ to $\decl x A$, thereby making
$x$ usable in $N$.  Note that there is no direct way to
use an assumption $\idecl x A$.

The underlying definitional equality ``$=$'' (usually just
$\alpha$-conversion on canonical forms)
is extended so that $\miapp R N = \miapp {R'} {N'}$ if $R = R'$,
no matter what $N$ and $N'$ are.

The substitution principle (shown here only in its simplest,
non-dependent form) captures the proper typing
as well as the irrelevance of assumptions $\idecl x A$:
\begin{principle}[Irrelevant Substitution]
If $\jchecktermtype {\Gamma, \idecl x A} N B$
and $\jchecktermtype {\promote\Gamma} M A$
then $\jchecktermtype \Gamma {\subst M x N} B$
and $\subst M x N = N$ (under definitional equality).
\end{principle}

One typical use of proof irrelevance in type theory is to render the
typechecking of subset types~%
\cite{Constable86,Salvesen88lics} decidable.  A subset type $\asubset x A
{B(x)}$ represents the set of terms of type $A$ which also satisfy $B$;
typechecking is undecidable because to determine if a term $M$ has this
type, you must search for a proof of $B(M)$.  One might attempt to recover
decidability by using a dependent sum $\asigma x A {B(x)}$, representing
the set of terms $M$ of type $A$ paired with proofs of $B(M)$; typechecking
is decidable, since a proof of $B(M)$ is provided, but equality of terms is
overly fine-grained: if there are two proofs of $B(M)$, the two pairs will
be considered unequal.  Using proof irrelevance, one can find a middle
ground with the type $\asigma x A {[B(x)]}$, where $[-]$ represents the
proof irrelevance modality.  Type checking is decidable for such terms,
since a proof of the property $B$ is always given, but the identity of that
proof is ignored, so all pairs with the same first component will be
considered equal.

Our situation with the subset interpretation is similar: we would like to
represent proofs of sort-checking judgments without depending on the
identities of those proofs.  By carefully using proof irrelevance to hide the
identities of sort-checking proofs, we are able to make a translation that
is sound and complete, preserving the adequacy of representations.

\section{Interpretation}
\label{sect:interpretation}


\subsection{Overview}

We interpret LFR into LFI by representing sorts as predicates and
derivations of sorting as proofs of those predicates.  In this section, we
endeavor to explain our general translation by way of examples of it in
action.
The translation is
derivation-directed and compositional: for each judgment $\Gamma \vdash
\mathcal{J}$, there is a corresponding judgment $\Gamma \vdash \mathcal{J}
\leadsto X$ whose rules mimic the rules of $\Gamma \vdash \mathcal{J}$.
The syntactic class of $X$ and its precise interpretation vary from
judgment to judgment.  For reference, the various forms are listed in
Table~\ref{tab:judgments}, but we will explain them in turn as they arise
in our examples.

\setlength{\oldarrayrulewidth}{\arrayrulewidth}
\setlength{\arrayrulewidth}{0.2pt}
\setlength{\doublerulesep}{\arrayrulewidth}

\begin{table}[tb]
\begin{center}
\begin{tabular*}{0.92\textwidth}%
                {l@{\extracolsep{\fill}}cl}
\hline\hline\hline\hline
{\bf Judgment:}                             && {\bf Result:} \bigstrut \\
\hline\hline\hline\hline
$\jclasstform \Gamma L K {\metaapp{\what L\f} -} $
    && Type of proofs of the formation family \bigstrut[t] \\
$\kindtpred K {\metaapptwo {\what K\p} - -}$
    && Kind of the predicate family \\
$\kindtsub K {\metaappfive {\what K\s} - - - - -}$         
    && Type of coercions between families of kind $K$ \bigstrut[b] \\
\hline
$\jchecksortclasst \Gamma S A {\metaapp {\what S} -}$
    && Metafunction representing predicate \bigstrut[t] \\
$\jsynthsortclasst \Gamma Q P L {\what Q}$  && Proof that $Q$ is well-formed
                                               \bigstrut[b] \\
\hline
$\jchecktermsortt \Gamma N S {\what N}$     && Proof that $N$ has sort $S$
                                               \bigstrut[t] \\
$\jsynthtermsortt \Gamma R S {\what R}$     && Proof that $R$ has sort $S$
                                               \bigstrut[b] \\
\hline
$\jchecksubtypet \Gamma {Q_1} {Q_2} {\metaapptwo F - -}$     
    && Metacoercion from proofs of $Q_1$ to proofs of $Q_2$
                                               \bigstrut[t] \\
$\jsynthsubtypet {Q_1} {Q_2} {\whatQQ}$     && Coercion from proofs of $Q_1$ to proofs of $Q_2$
                                               \bigstrut[b] \\
\hline
$\jctxt \Gamma {\what \Gamma}$              && Translated context
                                               \bigstrut[t] \\
$\jsigt \Sigma {\what \Sigma}$              && Translated signature
                                               \bigstrut[b] \\
\hline\hline\hline\hline
\end{tabular*}
\end{center}
\caption{Judgments of the translation.}
\label{tab:judgments}
\end{table}

\setlength{\arrayrulewidth}{\oldarrayrulewidth}

Recall our simplest example of refinement types: the natural numbers,
where the even and odd numbers are isolated as refinements.
\begin{lstlisting}
    nat : type.
    z : nat.
    s : nat -> nat. `\medskip`
    even << nat.
    odd << nat.
    z :: even.
    s :: even -> odd  ^  odd -> even.
\end{lstlisting}
As described in the introduction, our translation represents \lst{even} and
\lst{odd} as \textit{predicates} on natural numbers, and the refinement
declarations for \lst{z} and \lst{s} become declarations for constants for
constructing proofs of those predicates.
\begin{lstlisting}
    even : nat -> type.
    odd : nat -> type.
    $\zhat\phantom{_1}$ : even z.
    $\shat_1$ : {x:nat} even x -> odd (s x).
    $\shat_2$ : {x:nat} odd x -> even (s x).
\end{lstlisting}
Starting simple, the proof constructor declaration for $\zhat$ can be read
as an assertion that the constant \lst{z} satisfies a certain predicate,
namely that of being even.

In fact, every sort $S$ will have a representation as a predicate, not just
the base sorts like \lst{even} and \lst{odd}.  Generally, a predicate is
just a \textit{type} with a hole for a term; conventionally, we write the
predicate representation of $S$ as a meta-level function $\metaapp {\what
S} -$, and we say that a term $N$ satisfies such a predicate if the type
$\metaapp {\what S} N$ is inhabited.  Predicates will be the output of the
sort translation judgment, $\jchecksortclasst \Gamma S A {\what S}$, which
mirrors the sort formation judgment, adding a translation as an output.

For example, the predicate corresponding to the sort \lst{even -> odd} is
the meta-function (\lst'{x:nat} even x -> odd ($\metalp-\metarp$ x)'), and
we see this predicate applied to the successor constant \lst{s} in the type
of the proof constructor $\shat_1$.  Thus the proof constructor declaration
for $\shat_1$ can \textit{also} be read as an assertion: the constant
\lst's' satisfies the predicate that, when applied to an even natural
number, it yields an odd one.

Our analysis suggests a general strategy for translating a refinement type
declaration: translate its sort into a predicate, and yield a declaration
of a proof constructor asserting that the predicate holds of the original
constant.
\begin{mathpar}
\inferrule{\jsigt \Sigma {\what \Sigma}
        \\ \decl c A \in \Sigma
        \\ \jchecksortclasst [\Sigma] \cdot S A {\what S}}
          {\jsigt {\Sigma,\ \mrefdecl c S}
                  {\what \Sigma,\ \decl {\what c}
                                       {\metaapp {\what S} {\expand A c}}}}
\end{mathpar}
As a reflection of the fact that in general these predicates may be applied
to arbitrary terms, not just atomic ones, we fully $\eta$-expand the
constant before applying the predicate.

How do arrow sorts like \lst{even -> odd} translate in general?  Recall
that $\sarrow S T$ is just shorthand for the dependent function sort $\spi
x S T$ when $x$ does not occur in $T$.  The general rule for translating
dependent function sorts is:
\begin{mathpar}
\namedrule{\jchecksortclasst \Gamma {S} {A} {\what S}
        \\ \jchecksortclasst {\Gamma, \rdecl x {S} {A}} {T} {B} {\what {T}}}
          {\jchecksortclasst \Gamma {\sspi x {S} {A} {T}} {\api x {A} {B}}
                             {\metalam N
                              {\api x A
                               {\api {\what x} {\metaapp {\what S} {\expand A x}} 
                                {\metaapp {\what {T}} {\revapp N x}}}}}}
          {$\Pi$-F}
\end{mathpar}
There are two points of note in this rule.  First, writing predicates as
types with holes becomes cumbersome, so we instead write metafunctions
explicitly using meta-level abstraction, written as a bold
$\boldsymbol\lambda$; we continue to write meta-level application using
bold $\metalp$parens$\metarp$.  Second, since as we noted above, the term
argument of a predicate is in general a canonical term, and canonical terms
may not appear in application position, we appeal to an auxiliary judgment
that applies a canonical term to an atomic one, $\revapp N R$.  It is
defined by the single clause,
\begin{mathpar}
    \revapp {(\mlam x N)} R = \subst R x N ,
\end{mathpar}
where the right-hand side is an ordinary non-hereditary substitution.
Now we can read the translation output as the predicate of a term $N$
which holds if there is a function from objects $x:A$ satisfying
predicate $\what S$ to proofs that $N$ applied to $x$ satisfies predicate
$\what {T}$.

But what about the fact that \lst's' only had one declaration in the
original signature, but there are two proof constructor declarations
asserting predicates that hold of it?
For compositionality's sake, we would like to translate the single
refinement declaration for \lst{s} into a single proof constructor
declaration, but one that can effectively serve the roles of both
$\shat_1$ and $\shat_2$.  To this end, we use a product type.
\begin{lstlisting}
    $\shat$ : ({x:nat} even x -> odd (s x))
     * ({x:nat} odd x -> even (s x)).
\end{lstlisting}
Now $\pi_i\, \shat$ may be used anywhere $\shat_i$ was used before.
Generally, an intersection sort will translate to a conjunction of
predicates, represented as a type-theoretic product.  Similarly, the
nullary intersection $\sstop$ will translate to a unit type.%
\footnote{Strictly speaking, this means our translation targets an
extension of LFI with product and unit types.  Such an extension is
orthogonal to the addition of proof irrelevance, and has been studied by
many people over the years, including Sch\"{u}rmann \cite{Schurmann03} and
Sarkar \cite{Sarkar09}.  Alternatively, products may be eliminated after
translation by a simple currying transformation, but that is beyond the
scope of this article.}
\begin{mathpar}
\namedrule{\jchecksortclasst \Gamma {S_1} A {\what {S_1}}
        \\ \jchecksortclasst \Gamma {S_2} A {\what {S_2}}}
          {\jchecksortclasst \Gamma {\sinter {S_1} {S_2}} A
                                    {\metalam N {\aprod
                                        {\metaapp {\what {S_1}} N}
                                        {\metaapp {\what {S_2}} N}}}}
          {$\intersect$-F}
\and
\namedaxiom {\jchecksortclasst \Gamma \sstop A {\metalam N \aunit}} {$\top$-F}
\end{mathpar}


\noindent What kinds of proofs inhabit these predicates?  Such proofs
are the output of the term translation judgment $\jchecktermsortt
\Gamma N S {\what N}$, which mirrors the sort checking judgment,
adding a translation as an output.  Generally, a derivation that a
term $N$ has sort $S$ will translate to a proof $\what N$ that the
predicate $\what S$ holds of $N$ (where $\what S$ is as usual the
interpretation of $S$ as a predicate), or symbolically, if $S \refines
A \leadsto \what S$ and $N \checks S \leadsto \what N$, then $\what N
\checks \metaapp {\what S} N$---ignoring for a moment the question of
what happens to the contexts.  This expectation begins to hint at the
soundness theorem we will demonstrate below, but for now we will use
it just to guide our intuitions.

For example, since an intersection sort is represented by a product
of predicates,  we should expect that a term judged to have an
intersection sort should translate to a proof of a product, or a pair.
Similarly, since the sort $\sstop$ translates to a trivially true unit
predicate, a term judged to have sort $\sstop$ should translate to a
trivial unit element.
\begin{mathpar}
\namedrule{\jchecktermsortt \Gamma N {S_1} {\what {N_1}}
        \\ \jchecktermsortt \Gamma N {S_2} {\what {N_2}}}
          {\jchecktermsortt \Gamma N {\sinter {S_1} {S_2}}
                                     {\mpair {\what {N_1}} {\what {N_2}}}}
          {$\intersect$-I}
\and
\namedaxiom {\jchecktermsortt \Gamma N \sstop \munit}
          {$\top$-I}
\end{mathpar}
Intuitively, knowing that a term has an intersection sort $\sinter {S_1}
{S_2}$ gives us two pieces of information about it, while knowing that a
term has sort $\sstop$ tells us nothing new.
\newcommand{\lambdatmeet}{\Lambda^{\textsf{t}}_{\intersect}}
This aspect of our translation is similar in spirit to Liquori and Ronchi
Della Rocca's $\lambdatmeet$~\cite{Liquori07intersection}, a Church-style
type system for intersections in which derivations are explicitly
represented as proofs and intersections as products, though in their
setting the proofs are viewed as part of a program rather than the output
of a translation.

We can similarly intuit the appropriate proof for an implication predicate
by examining the rule for translating $\spi x S T$ above.
We start from the sort-checking rule \rulename{$\Pi$-I}, which shows that
a term $\mlam x N$ has sort $\spi x S T$.
To prove that the corresponding $\Pi$ predicate holds of $\mlam x N$, we
will have to produce a function taking an object $x$ of type $A$ and a
proof that $x$ satisfies $\what S$ and yielding a proof that $\revapp
{(\mlam x {N})} x = \subst x x {N} = N$ satisfies $\what T$.  This is
easily done: the translation of the body $N$ is precisely the proof we
require about $N$, and we wrap this in two $\lambda$-abstractions to get a
proof of the $\Pi$ predicate.
\begin{mathpar}
\namedrule{\jchecktermsortt {\Gamma, \rdecl x {S} {A}} N {T} {\what N}}
          {\jchecktermsortt \Gamma {\mlam x N} {\sspi x {S} {A} {T}}
                            {\mlam x \mlam {\what x} {\what N}}}
          {$\Pi$-I}
\end{mathpar}

\noindent Careful examination of the \rulename{$\Pi$-I} rule reveals a
subtlety: it is clear from our understanding of the sort-checking part
of the rule that the free variables of $N$ and $T$ may include $x$,
but we seem to have indicated by our $\lambda$-abstraction that the
proof $\what N$ may depend not only on the variable $x$, but also on a
variable $\what x$.  Where did this second variable come from?

The answer---as hinted above---is that we have not yet specified with
respect to what context the translation of a term is to be interpreted.
This context should in fact be the translation of the context $\Gamma$
associated with the original term $N$, and by convention we write it as
$\what \Gamma$.  The judgment translating contexts is an annotated version
of the context-formation judgment, written $\jctxt \Gamma {\what \Gamma}$.
\begin{mathpar}
\inferaxiom{\jctxt \cdot \cdot}
\and
\inferrule{\jctxt \Gamma {\what \Gamma}
        \\ \jchecksortclasst \Gamma S A {\what S}}
          {\jctxt {\Gamma, \rdecl x S A}
                  {\what \Gamma, \decl x A, 
                                 \decl {\what x}
                                       {\metaapp {\what S} {\expand A x}}}}
\end{mathpar}
The second rule is quite similar to the translation rule we
have seen for signature declarations $\decl c A$: each declaration $\rdecl
x S A$ splits into a typing declaration $\decl x A$ and a proof declaration
$\decl {\what x} {\metaapp {\what S} {\expand A x}}$.  Now it is easily
seen why the proof $\what N$ in the translation rule \rulename{$\Pi$-I}
may depend on $\what x$: our soundness criterion will tell us that
$\jchecktermtype {\what\Gamma, \decl x A, \decl {\what x} {\metaapp {\what
S} {\expand A x}}} {\what N} {\metaapp {\what T} N}$.

There is just one sort checking rule remaining: the \rulename{switch} rule
for checking an atomic term at a base sort.  This rule appeals to
subsorting, so we postpone discussion of it until we discuss the
translation of subsorting judgments in Section~\ref{sect:trans-subsorting}.
For now, the reader may think of the rule as simply returning the result of
the sort synthesis translation judgment, $\jsynthtermsortt \Gamma R S
{\what R}$.  At the base cases, this judgment returns the hatted proof
constants ${\what c}$ and variables ${\what x}$ we have seen in the
translations of signature declarations and contexts.  The other rules
correspond to elimination forms, and they follow straightforwardly by the
same intuitions we used to derive the introduction rules in the sort
checking translation.  All the rules for this judgment are shown in
Figure~\ref{fig:ssynth}.

\begin{figure}
\hrule
\begin{mathpar}
\rulesheader{\jsynthtermsortt [\Sigma] \Gamma {R^+} {S^-} {{\what R}^-}}
\and
\namedrule{\mrefdecl c S \in \Sigma
          }
          {\jsynthtermsortt \Gamma c S {\what c}}
          {const}
\and
\namedrule{\rdecl x S A \in \Gamma}
          {\jsynthtermsortt \Gamma x S {\what x}}
          {var}
\and
\namedrule{\jsynthtermsortt \Gamma {R_1} {\sspi x {S_2} {A_2} {S}} {\what {R_1}}
        \\ \jchecktermsortt \Gamma {N_2} {S_2} {\what {N_2}}
        \\ \jhsubst s {N_2} x {A_2} {S} {S'}
        }
          {\jsynthtermsortt \Gamma {\mapp {R_1} {N_2}} {S'}
                            {\mapp {\mapp {\what {R_1}} {N_2}} {\what {N_2}}}} 
          {$\Pi$-E}
\\
\and
\namedrule{\jsynthtermsortt \Gamma R {\sinter {S_1} {S_2}} {\what R}}
          {\jsynthtermsortt \Gamma R {S_1} {\mfst {\what R}}}
          {$\intersect$-E$_1$}
\and
\namedrule{\jsynthtermsortt \Gamma R {\sinter {S_1} {S_2}} {\what R}}
          {\jsynthtermsortt \Gamma R {S_2} {\msnd {\what R}}}
          {$\intersect$-E$_2$}
\end{mathpar}
\hrule
\caption{Translation rules for atomic term sort synthesis}
\label{fig:ssynth}
\end{figure}

There is also just one sort formation rule remaining: the rule for
translating base sorts $Q$.  Although this translation seems
straightforward in the case of simple sorts like \lst'even' and \lst'odd',
it is rather subtle when it comes to dependent sort families due to a
problem of coherence.  To explain, we return to another early example, the
doubling relation on natural numbers.

\subsection{Dependent Base Sorts}

Recall the double relation defined as a type family in LF:
\begin{lstlisting}
    double : nat -> nat -> type.
    dbl/z : double z z.
    dbl/s : {N:nat} {N2:nat} double N N2 -> double (s N) (s (s N2)).
\end{lstlisting}
As we saw earlier, we can use LFR \textit{refinement kinds}, or
\textit{classes}, to express and enforce the property that the second
subject of any doubling relation is always even, no matter what properties
hold of the first subject.  To do so we define a sort \doublestar{} which
is isomorphic to \lst{double}, but has a more precise class.%
\footnote{Earlier, we used the name \lst{double} for both the type family
and the sort family refining it, but in what follows it will be important
to distinguish the two.}
\begin{lstlisting}
    $\doublestar$ << double :: # -> even -> sort.
    dbl/z :: $\doublestar$ z z.
    dbl/s :: {N::#} {N2::even} $\doublestar$ N N2 -> $\doublestar$ (s N) (s (s N2)).
\end{lstlisting}
Successfully sort-checking the declarations for \lst{dbl/z} and \lst{dbl/s}
demonstrates that whenever \lst{$\doublestar$ M N} is inhabited, the second
argument, \lst{N}, is even.

There is a crucial difference between refinements like \lst{even} or
\lst{odd} and refinements like \doublestar: while \lst{even} and \lst{odd}
denote particular subsets of the natural numbers, the inhabitants of
the refinement \lst{$\doublestar{}$ M N} are identical to
those 
of the ordinary type \lst{double M N}.  What is important is not whether a
particular instance \lst{$\doublestar$ M N} is inhabited, but rather
whether it is \textit{well-formed at all}.

For this reason, we separate the \textit{formation} of a dependent
refinement type family from its \textit{inhabitation}.  Simple sorts
like \lst'even' and \lst'odd' are always well-formed, but we would like a
way to explicitly represent the formation of an indexed sort like
\lst'$\doublestar$ M N'.  Therefore, we translate \lst'$\doublestar$' into
two parts: a \textit{formation family}, written \doublestarhat{}, and a
\textit{predicate family}, written using the original name of the sort,
\doublestar.

There are two declarations involving the formation family.  First, the
declaration of the formation family itself:
\begin{lstlisting}
    $\doublestarhat$ : nat -> nat -> type.
\end{lstlisting}
The formation family has the same kind as the original refined type.
Intuitively, the formation family \lst{$\doublestarhat$ M N} should be
inhabited whenever the sort \lst{$\doublestar$ M N} would have been a
well-formed sort pre-translation.  For example, \lst{$\doublestarhat$ z z}
will be inhabited, since \lst{$\doublestar$ z z} was a well-formed sort.

Next, we have a constructor for the formation family:
\begin{lstlisting}
    $\doublestarhat$/i : {x:nat} {y:nat} even y -> $\doublestarhat$ x y.
\end{lstlisting}
The constructor takes all the arguments to \doublestar{} along with
evidence that they have the appropriate sorts and yields a member of the
formation family, i.e. a proof that \doublestar{} applied to those arguments
was well-formed pre-translation.  For example,
\lst{$\doublestarhat/i$ z z $\zhat$} is a proof that \lst{$\doublestar$ z z}
was well-formed, since it contains the necessary evidence: a proof that
the second argument \lst'z' is \lst'even'.

Finally, we have a declaration for the predicate family itself:
\begin{lstlisting}
    $\doublestar$ : {x:nat} {y:nat} $\doublestarhat$ x y -:> double x y -> type.
\end{lstlisting}
For any $M$ and $N$, the predicate family will be inhabited by proofs that
derivations of \lst{double M N} have the refinement \lst{$\doublestar$ M N},
provided that \lst{$\doublestar$ M N} is well-formed in the first place.
In our doubling example, all derivations of \lst{double M N} satisfy the
refinement \lst{$\doublestar$ M N}, so the predicate family will have one
inhabitant for each of them.  As before, these inhabitants come from the
translation of the refinement declarations for \lst{dbl/z} and \lst{dbl/s}.
Writing arguments in irrelevant position in \lst{[[ square brackets ]]}, we
get:
\begin{lstlisting}
    $\dblzhat$ : $\doublestar$ z z
                [[ $\doublestarhat$/i z z $\zhat$ ]]
                dbl/z.
    $\dblshat$ : {N:nat} {N2:nat} {$\Ntwohat$:even N2} {D:double N N2}
             $\doublestar$ N N2 [[ $\doublestarhat$/i N N2 $\Ntwohat$ ]] D
         -> $\doublestar$ (s N) (s (s N2))
                [[ $\doublestarhat$/i (s N) (s (s N2)) ($\shat_2$ (s N2) ($\shat_1$ N2 $\Ntwohat$)) ]]
                (dbl/s N N2 D).
\end{lstlisting}
As is evident even from this short and abbreviated example, the
interpretation leads to a significant blowup in the size and complexity of
a signature, underscoring the importance of a primitive understanding of
refinement types.

Note that in the declaration of the predicate family \doublestar{}, the
proof of well-formedness is made irrelevant using a proof-irrelevant
function space $\aiarrow A B$, representing functions from $A$ to $B$ that
are insensitive to the identity of their argument.  Using irrelevance
ensures that a given sort has a unique translation, up to equivalence.  We
elaborate on this below.

Generalizing from the above example, a sort declaration translates into
three declarations: one for the \textit{formation family}, one for the
\textit{proof constructor} for the formation family, and one for the
\textit{predicate family}.
\begin{mathpar}
\inferrule{\jsigt \Sigma {\what \Sigma}
        \\ \decl a K \in \Sigma
        \\ \jclasstform [\Sigma] \cdot L K {\what L\f}
        \\ \kindtpred K {\what K\p}}
          {\jsigt {\Sigma, \arefdecl s a L}
                  {\what \Sigma,
                   \ \decl {\what s} K,
                   \ \decl {\what s / i}
                        {\metaapp {\what L\f} {\what s}},
                   \ \decl s {\metaapptwo {\what K\p} {\what s} a}}}
\end{mathpar}
The class formation judgment $\jclasstform \Gamma L K {\what L\f}$
yields a metafunction describing the type of proofs of formation family,
while an auxiliary kind translation judgment $\kindtpred K {\what K\p}$
yields a metafunction describing the kind of the predicate family. 
As in the example, the kind of the formation family is the same as the kind
of the refined type, $K$.

The metafunction $\what L\f$ takes as input the formation family so far,
initially just $\what s$.  The translation of $\Pi$ classes adds an
argument, and the base case returns the formation family so constructed.
\begin{mathpar}
\inferrule{\jchecksortclasst \Gamma S A {\what S}
        \\ \jclasstform {\Gamma, \rdecl x S A} L K {\what L}}
          {\jclasstform \Gamma {\ccpi x S A L} {\kpi x A K}
                    {\metalam {Q\f} {\kpi x A
                                {\kpi {\what x} {\metaapp {\what S} {\expand A x}} 
                                 {\metaapp {\what L} {\aapp {Q\f} {\expand A x}}}}}}}
\and
\inferaxiom{\jclasstform \Gamma \csort \ktype {\metalam {Q\f} {Q\f}}}
\end{mathpar}
Employing a similar trick as we did with intersection sorts, we will
translate intersection and $\ctop$ classes to unit and product types.
\begin{mathpar}
\inferrule{\jclasstform \Gamma {L_1} K {\what {L_1}}
        \\ \jclasstform \Gamma {L_2} K {\what {L_2}}}
          {\jclasstform \Gamma {\cinter {L_1} {L_2}} K
                    {\metalam {Q\f} {\kprod {\metaapp {\what {L_1}} {Q\f}}
                                            {\metaapp {\what {L_2}} {Q\f}}}}}
\and
\inferaxiom {\jclasstform \Gamma \ctop K {\metalam {Q\f} \kunit}}
\end{mathpar}
Intersection classes give multiple ways for a sort to be well-formed, and a
product of formation families gives multiple ways to project out a proof of
well-formedness.

The metafunction $\what K\p$ takes two arguments: one for the formation
family so far (initially $\what s$) and one for the refined type so far
(initially $a$).  The rule for $\Pi$ kinds just adds an argument to each:
\begin{mathpar}
 \inferrule{\kindtpred K {\what K}}
           {\kindtpred {\kpi x A K}
                {\metalamtwo {Q\f} P
                    {\kpi x A {\metaapptwo {\what K}
                                           {\aapp {Q\f} {\expand A x}}
                                           {\aapp P {\expand A x}}}}}} 
\end{mathpar}
while the translation is really characterized by its behavior on the
base kind, $\ktype$:
\begin{mathpar}
\inferaxiom{\kindtpred \ktype
                {\metalamtwo {Q\longform} P
                    {\aiarrow {Q\longform} {\aarrow P \ktype}}}}
\end{mathpar}
The kind of the predicate family for a base sort $Q$ refining $P$ is
essentially a one-place judgment on terms of type $P$, along with an
irrelevant argument belonging to the formation family of $Q$.

Finally, we are able to make sense of the rule for translating base sorts:
\begin{mathpar}
\namedrule{\jsynthsortclasst \Gamma Q {P'} L {\what Q}
        \\ P' = P
        \\ L = \csort
        }
          {\jchecksortclasst \Gamma Q P
            {\metalam N {\aapp {\aiapp Q {\what Q}} N}}}
        {$Q$-F}
\end{mathpar}
The class synthesis translation judgment $\jsynthsortclasst \Gamma Q P L
{\what Q}$ (similar to the sort synthesis judgment; see
Figure~\ref{fig:lsynth}) yields a proof of $Q$'s formation family; thus
the predicate for a base sort $Q$, given an argument $N$, is simply the
predicate family $Q$ applied to an irrelevant proof $\what Q$ that $Q$ is
well-formed and the argument itself, $N$.

\begin{figure}
\hrule
\begin{mathpar}
\rulesheader{\jsynthsortclasst [\Sigma] \Gamma {Q^+} {P^-} {L^-} {\what Q^-}}
\and
\inferrule{\arefdecl s a L \in \Sigma
        }
          {\jsynthsortclasst \Gamma s a L {\what s / i}}
\and
\inferrule{\jsynthsortclasst \Gamma Q P {\ccpi x S A L} {\what Q}
        \\ \jchecktermsortt \Gamma N S {\what N}
        \\ \jhsubst l N x A L {L'}
        }
          {\jsynthsortclasst \Gamma {\sapp Q N} {\aapp P N} {L'}
                             {\aapp {\aapp {\what Q} N} {\what N}}} 
\and
\inferrule{\jsynthsortclasst \Gamma Q P {\cinter {L_1} {L_2}} {\what Q}}
          {\jsynthsortclasst \Gamma Q P {L_1} {\afst {\what Q}}}
\and
\inferrule{\jsynthsortclasst \Gamma Q P {\cinter {L_1} {L_2}} {\what Q}}
          {\jsynthsortclasst \Gamma Q P {L_2} {\asnd {\what Q}}}
\end{mathpar}
\hrule
\caption{Translation rules for base sort class synthesis}
\label{fig:lsynth}
\end{figure}


What if we hadn't made the proofs of
formation irrelevant?  Then if there were more than one proof that
$Q$ were well-formed, a soundness problem would arise.  To see how, let us
return to the doubling example.  Imagine extending our encoding of natural
numbers with a sort distinguishing zero as a refinement.
\begin{lstlisting}
    zero << nat.
    z :: even  ^  zero.
\end{lstlisting}
As with \lst{even} and \lst{odd}, the sort \lst{zero} turns into a predicate.
Now that \lst{z} has two sorts, it translates to two proof constructors.%
\footnote{For the sake of simplicity, we will continue our example with the
slightly unfaithful assumptions we've been making all along.  Strictly
speaking, \lst{zero} should also have a formation family with a single
trivial member, and the two declarations $\zhat_1$ and $\zhat_2$ should be
one declaration of product type.  The point we wish to make will be the
same nonetheless.}
\begin{lstlisting}
    zero : nat -> type.
    $\zhat_1$ : even z.
    $\zhat_2$ : zero z.
\end{lstlisting}

\noindent Next, we can observe that zero always doubles to itself and
augment the declaration of \doublestar{} using an intersection class:
\begin{lstlisting}
    $\doublestar$ << double ::  # -> even -> sort
                           ^  zero -> zero -> sort.
\end{lstlisting}
After translation, since there are potentially two ways for
\lst{$\doublestar$ x y} to be well-formed, there are two introduction
constants for the formation family.
\begin{lstlisting}
    $\doublestarhat/i_1$ : {x:nat} {y:nat} even y -> $\doublestarhat$ x y.
    $\doublestarhat/i_2$ : {x:nat} zero x -> {y:nat} zero y -> $\doublestarhat$ x y.
\end{lstlisting}
The declarations for $\doublestarhat$ and $\doublestar$ remain the same.

Now recall the refinement declaration for doubling zero,
\begin{lstlisting}
    dbl/z :: $\doublestar$ z z ,
\end{lstlisting}
and observe that it is valid for two reasons, since \lst{$\doublestar$ z z}
is well-formed for two reasons.  Consequently, after translation, there
will be two proofs inhabiting the formation family \lst{$\doublestarhat$ z z},
but only one of them will be used in the translation of the \lst{dbl/z}
declaration.  Supposing it is the first one, we'll have
\begin{lstlisting}
    $\dblzhat$ : $\doublestar$ z z [[ $\doublestarhat/i_1$ z z $\zhat_1$ ]] dbl/z ,
\end{lstlisting}
but our soundness criterion will still require that the constant $\dblzhat$
check at the type
\lst{$\doublestar$ z z [[ $\doublestarhat/i_2$ z $\zhat_2$ z $\zhat_2$ ]] dbl/z},
the other possibility.
The apparent mismatch is resolved by the fact that the formation proofs are
irrelevant, and so the two types are considered equal.  Without proof
irrelevance, the two types would be distinct and we would have a
counterexample to the soundness theorem (Theorem~\ref{thm:sound-interp}) we
prove below.



\subsection{Subsorting}
\label{sect:trans-subsorting}

We now return to the question of how the translation handles subsorting.
Recall that an LFR signature can include subsorting declarations between
sort family constants, $\subdecl {s_1} {s_2}$.  For instance, continuing
with our running example of the natural numbers, we might note that any
\lst'nat' that is \lst'zero' is \lst'even' by declaring:
\begin{lstlisting}
    zero <: even.
\end{lstlisting}
Such a declaration may seem redundant, since the only thing declared to
have sort \lst'zero' has \textit{already} been declared to have sort
\lst'even', but it may be necessary given the inherently open-ended nature
of an LF signature.  We may find ourselves later in a situation where we
have a new hypothesis \lst{x : zero}, and without the inclusion, we would
not be able to conclude that \lst{x : even}.  For example the derivation of
$\jchecktermsort \cdot {\mlam x x} {\sarrow {\var{zero}} {\var{even}}}$
requires the inclusion to satisfy the second premise of the
\rulename{switch} rule.
\begin{mathpar}
\infer[\rulename{$\Pi$-I}]
      {\jchecktermsort \cdot {\mlam x x} {\sarrow {\var{zero}} {\var{even}}}}
      {\infer[\rulename{switch}]
             {\jchecktermsort {\decl x {\var{zero}}} x {\var{even}}}
             {\infer[\rulename{var}]
                    {\jsynthtermsort {\decl x {\var{zero}}} x {\var{zero}}}
                    {}
            & \infer{\var{zero} \subtype \var{even}}
                    {\subdecl {\var{zero}} {\var{even}} \in \Sigma}}}
\end{mathpar}

\noindent How should we translate that derivation into a proof?  As we
saw earlier, the representation of \lst{zero -> even} as a predicate
is $\metalam N ${\lst'{x:nat} zero x -> even (N @ x)'}, and applying
this predicate to $\mlam x x$ yields the type we need the proof to
have: \lst'{x:nat} zero x -> even x'.  It is not much of a leap of the
imagination to see that one solution is simply to posit a constant of
the appropriate type:
\begin{lstlisting}
    zero-even : {x:nat} zero x -> even x.
\end{lstlisting}
Now the translation of $\mlam x x \checks {\sarrow {\var{zero}}
{\var{even}}}$ can be simply the $\eta$-expansion of this constant: $\mlam
x \mlam {\what x} {\var{zero-even}\ x\ \what x}$.  This makes intuitive
sense: the constant \lst'zero-even' witnesses the meaning of the
declaration \lst'zero <: even' under the subset interpretation.

Our example leads us to a rule: a subsorting declaration $\subdecl {s_1}
{s_2}$ will will translate into a declaration for a coercion constant
${s_1{\textit-}s_2}$.
\begin{mathpar}
\inferrule{\jsigt \Sigma {\what \Sigma}
        \\ \arefdecl {s_1} a L \in \Sigma
        \\ \arefdecl {s_2} a L \in \Sigma
        \\ \decl a K \in \Sigma
        \\ \kindtsub K {\what K\s}}
          {\jsigt {\Sigma, \subdecl {s_1} {s_2}}
                  {\what \Sigma,
                    \decl {s_1{\textit-}s_2}
                    {\metaappfive {\what K\s} a {\what{s_1}} {s_1}
                                                {\what{s_2}} {s_2}}}}
\end{mathpar}
The auxiliary judgment $\kindtsub K {\what K\s}$ yields a metafunction
describing the type of proof coercions between sorts that refine a type
family of kind $K$.  The metafunction $\what K\s$ takes five arguments: the
refined type, the formation family and predicate family for the domain of
the coercion, and the formation family and predicate family for the
codomain of the coercion.
As before, the $\Pi$ translation adds an argument to each of the
meta-arguments.
\begin{mathpar}
\inferrule{\kindtsub K {\what K}}
          {\kindtsub {\kpi x A K}
            {\metalamfive P {{Q_1}\f} {Q_1} {{Q_2}\f} {Q_2}
             {\api x A
              {\metaappfive {\what K}
                {P'} {{Q_1}\f'} {Q_1'} {{Q_2}\f'} {Q_2'}}}}
        \\ \text{(where, for each $P$, $P' = \aapp P {\expand A x}$})}
\end{mathpar}
At the base kind $\ktype$, the rule outputs the type of the coercion:
\begin{mathpar}
\inferaxiom{\kindtsub \ktype
            {\metalamfive P {{Q_1}\f} {Q_1} {{Q_2}\f} {Q_2}
                {\api {f_1} {{Q_1}\f} 
                 \api {f_2} {{Q_2}\f} 
                 \api x P {\aarrow {\aapp {\aiapp {Q_1} {f_1}} x}
                                   {\aapp {\aiapp {Q_2} {f_2}} x}}}}}
\end{mathpar}
Essentially, this is the type of coercions, given $x$, from proofs of $Q_1\
x$ to proofs of $Q_2\ x$, but in the general case, we must pass the
predicates $Q_1$ and $Q_2$ evidence that they are well-formed, so the
coercion requires formation proofs as inputs as well.


How do these coercions work?  Recall
that subsorting need only be defined at
base sorts $Q$, and there, it is simply the application-compatible,
reflexive, transitive closure of the declared relation.  For the purposes
of the translation, we employ an equivalent algorithmic formulation of
subsorting.  Following the inspiration of bidirectional typing, there are
two judgments: a \textit{checking} judgment that takes two base sorts as
inputs and a \textit{synthesis} judgment that takes one base sort as input
and outputs another base sort that is one step higher in the subsort
hierarchy.

The synthesis judgment constructs a coercion from the new coercion
constants in the signature.
\begin{mathpar}
\inferrule{\subdecl {s_1} {s_2} \in \Sigma}
          {\jsynthsubtypet {s_1} {s_2} {s_1{\textit-}s_2}}
\and
\inferrule{\jsynthsubtypet {Q_1} {Q_2} {\whatQQ}
          }
          {\jsynthsubtypet {\sapp {Q_1} N} {\sapp {Q_2} N}
                {\mapp \whatQQ N}
          }
\end{mathpar}
The checking judgment, on the other hand, constructs a \textit{meta-level}
coercion between proofs of the two sorts.  It is defined by two rules: a
rule of reflexivity and a rule to climb the subsort hierarchy.
\begin{mathpar}
\namedrule{Q_1 = Q_2}
          {\jchecksubtypet \Gamma {Q_1} {Q_2} {\metalamtwo R {R_1} {R_1}}}
          {refl}
\and
\namedrule{\jsynthsubtypet {Q_1} {Q'} \whatQQp
        \\ \jsynthsortclasst \Gamma {Q_1} P {\csort} {\what{Q_1}}
      \\\\ \jchecksubtypet \Gamma {Q'} {Q_2} F
        \\ \jsynthsortclasst \Gamma {Q'} P {\csort} {\what{Q'}}
        }
          {\jchecksubtypet \Gamma {Q_1} {Q_2}
                {\metalamtwo R {R_1}
                 {\metaapptwo F R {\mapp {\mapp {\mapp {\mapp \whatQQp 
                 {\what {Q_1}}} {\what{Q'}}} R} {R_1}}} }}
          {climb}
\end{mathpar}
The reflexivity rule's metacoercion simply returns the proof it is given,
while the climb rule composes the actual coercion $\whatQQp$ with the
metacoercion $F$.  Two extra premises generate the necessary formation
proofs.

Finally, we have described enough of the translation to explain the rule
most central to the design of LFR, the \rulename{switch} rule.
\begin{mathpar}
\namedrule{\jsynthtermsortt \Gamma R {Q'} {\what R}
        \\ \jchecksubtypet \Gamma {Q'} Q F}
          {\jchecktermsortt \Gamma R Q {\metaapptwo F R {\what R}}}
          {switch}
\end{mathpar}
The first premise produces a proof $\what R$ that $R$ satisfies property
$Q'$, and the second premise generates the meta-level proof coercion that
transforms such a proof into a proof that $R$ satisfies property $Q$.

Having sketched the translation and the role of proof irrelevance, we now
review some metatheoretic results.


\subsection{Correctness}


Our translation is both sound and complete with respect to the original
system of LF with refinement types, and so our correctness criteria will
come in two flavors.

Soundness theorems tell us that the result of a translation is well-formed.
But even more importantly than telling us that our translation is on some level
correct, they serve as an independent means of understanding the translation.
In a sense, a soundness theorem can be read as the meta-level type of a
translation judgment---a specification of its intended behavior---and just
as types serve as an organizing principle for the practicing programmer, so
too do soundness theorems serve the
thoughtful theoretician.  We explain our soundness theorems, then, not only
to demonstrate the sensibility of our translation, but also to aid the
reader in understanding its purpose.

In what follows, $\form{Q}$ represents the formation family for a base sort
$Q$.
\begin{align*}
    \form{s} &= \what s     &   \form{\aapp Q N} &= \aapp {\form{Q}} N
\end{align*}
\begin{thm}[Soundness]
\label{thm:sound-interp}
Suppose $\jctxt \Gamma {\what\Gamma}$ and $\jsigt \Sigma {\what\Sigma}$.
Then:
    \begin{enumerate}[\em(1)]
    \item If $\jchecksortclasst \Gamma S A {\what S}$
          and $\jchecktermsortt \Gamma N S {\what N}$,
          then $\jchecktermtype[\what\Sigma] {\what\Gamma} {\what N}
                {\metaapp {\what S} N}$.
        \label{cl:tr-n}
    \item If $\jsynthtermsortt \Gamma R S {\what R}$,
          then 
               $\jchecksortclasst \Gamma S A {\what S}$
          and $\jsynthtermtype[\what\Sigma] {\what\Gamma} {\what R}
                {\metaapp {\what S} {\expand A R}}$
                \\
               (for some $A$ and $\what S$).
        \label{cl:tr-r}
    \medskip
    \item If $\jchecksortclasst \Gamma S A {\what S}$
          and $\jchecktermtype \Gamma N A$,
          then $\jchecktypekind[\what\Sigma] {\what\Gamma}
                {\metaapp {\what S} N}$.
    \item If $\jsynthsortclasst \Gamma Q P L {\what Q}$,
          then 
               for some $K$, $\what L\f$, and $\what K\p$,
               \begin{enumerate}[$\bullet$]
                \item $\jclasstform \Gamma L K {\what L\f}$ and
                      $\jsynthtermtype[\what\Sigma] {\what\Gamma} {\what Q}
                        {\metaapp {\what L\f} {\form Q}}$, and
                \item $\kindtpred K {\what K\p}$ and
                      $\jsynthtypekind[\what\Sigma] {\what\Gamma} Q
                        {\metaapptwo {\what K\p} {\form Q} P}$.
               \end{enumerate}
        \label{cl:tr-q}
    \medskip
    \item If $\jclasstform \Gamma L K {\what L\f}$
          and $\jsynthtypekind \Gamma P K$,
          then $\jchecktypekind[\what\Sigma] {\what\Gamma}
                {\metaapp {\what L\f} P}$.
    \item If $\kindtpred K {\what K\p}$,\ \ 
          $\jsynthtypekind \Gamma {Q\longform} K$,
          and $\jsynthtypekind \Gamma P K$,
          then $\jkind[\what\Sigma] {\what\Gamma} {\metaapptwo {\what K\p}
                {Q\longform} P}$.
    \medskip
    \item If $\jsynthsubtypet {Q_1} {Q_2} {\whatQQ}$,\ \ 
             $\jsynthsortclass \Gamma {Q_1} P L K$,\ \ 
             $\jsynthtypekind \Gamma P K$,
          and $\kindtsub K {\what K\s}$,\ \ 
          then $\jsynthsortclass \Gamma {Q_2} P L K$
          and $\jsynthtermtype {\what\Gamma} {\whatQQ}
                    {\metaappfive {\what K\s} P {\form {Q_1}} {Q_1}
                                                {\form {Q_2}} {Q_2}}$.
    \item If $\jsynthtermtype \Gamma R P$,\ \ 
             $\jchecksortclasst \Gamma {Q_i} P {\what {Q_i}}$,\ \ 
             $\jchecksubtypet \Gamma {Q_1} {Q_2} F$,
          and $\jsynthtermtype {\what\Gamma} {R_1} {\metaapp {\what {Q_1}} R}$,\ \ 
          then $\jsynthtermtype {\what\Gamma} {\metaapptwo F R {R_1}}
                                              {\metaapp {\what {Q_2}} R}$.
        \label{cl:tr-f}
    \item If $\kindtsub K {\what K\s}$,\ \ 
             $\kindtpred K {\what K\p}$,\ \ 
             $\jsynthtypekind {\Gamma} P K$,\ \ 
             $\jsynthtypekind {\Gamma} {{Q_i}\f} K$, and
             $\jsynthtypekind {\what\Gamma} {Q_i} {\metaapptwo {\what K\p}
                                                    {{Q_i}\f} P}$,
          then $\jchecktypekind {\what\Gamma}
                    {\metaappfive {\what K\s} P {{Q_1}\f} {Q_1} {{Q_2}\f}
                    {Q_2}}$.
    \end{enumerate}
\end{thm}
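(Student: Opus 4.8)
The plan is to prove all ten clauses simultaneously by structural induction on the translation derivation that each clause takes as its input. Because every translation judgment is derivation-directed---its rules mirror exactly the LFR formation and typing rules---each clause is essentially a statement of the intended ``meta-level type'' of the corresponding translation judgment, and the induction amounts to checking, rule by rule, that the output exhibited by each translation rule is well-formed in the target LFI (extended with unit and product types). The hypotheses $\jctxt \Gamma {\what\Gamma}$ and $\jsigt \Sigma {\what\Sigma}$ supply well-formedness of the ambient translated context and signature; these are threaded through unchanged except where a rule extends the context (as in \rulename{$\Pi$-I} and the context-translation rule), in which case the extension is itself justified by appeal to the appropriate sort-formation clause (clause~3).

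Before the main induction I would isolate a substitution lemma for the translation, asserting that the translation commutes with hereditary substitution: if $\jhsubst s {N_2} x {A_2} S {S'}$ with $S$ translating to $\what S$ and $N_2$ to $\what{N_2}$, then $\metaapp{\what{S'}}{-}$ is, up to definitional equality, the result of substituting $\what{N_2}$ and $N_2$ into $\metaapp{\what S}{-}$, and analogously for $\what{L'}$, for $\what Q$, and for the term translations. This lemma, together with the LFI Substitution Principle, is what makes the application cases go through. I would also record the interaction of the translation with $\eta$-expansion and with the reverse-application operator $\revapp N R$, relying on Theorem~\ref{thm:expansion} (Expansion), Theorem~\ref{thm:identity} (Identity), and Lemma~\ref{lem:commutativity} (Commutativity) to relate source-level $\eta$-expansion to the $\lambda$-wrapping performed by rules such as \rulename{$\Pi$-I} and \rulename{$Q$-F}.

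With these in hand, most cases are routine: the intersection and $\top$ rules translate to pairing and the unit element and check against product and unit types; \rulename{$\Pi$-I} introduces the two abstractions $\lambda x$ and $\lambda \what x$ whose domains are exactly the declarations that the context-translation rule adds, so the body type-checks by the induction hypothesis. The main obstacle is the application case \rulename{$\Pi$-E} (and, for base sorts, the argument rule of Figure~\ref{fig:lsynth}): there the output type $\metaapp{\what{S'}}{\expand A R}$ arises from a hereditary substitution in the source, so I must invoke the translation substitution lemma to rewrite it into the form that the LFI application rule actually produces, and then invoke the LFI substitution theorem to discharge the dependency. The second, subtler obstacle is \emph{coherence}: in the \rulename{switch} and base-sort cases the target type mentions a formation proof $\what Q$ that is \emph{not} uniquely determined (distinct derivations of $Q$'s well-formedness yield distinct $\what Q$), yet the soundness statement demands a single type. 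This is precisely where proof irrelevance is essential: because the predicate family takes its formation argument through the irrelevant function space $\iarrow$, any two formation proofs render the predicate types definitionally equal, so the apparent mismatch dissolves and the \rulename{switch} case closes.

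Finally, the clauses governing subsorting and coercions---(7)--(10)---are proved in the same simultaneous induction and feed the \rulename{switch} case of clause~(1). Clause~(7) shows that each synthesized coercion constant ${s_1{\textit-}s_2}$ inhabits the type produced by $\kindtsub K {\what K\s}$; clause~(8), proved by induction on the \rulename{climb} structure of the $\jchecksubtypet \Gamma {Q_1} {Q_2} F$ derivation, shows the checking coercion $F$ transports a proof of $\metaapp{\what{Q_1}}R$ to one of $\metaapp{\what{Q_2}}R$, composing coercions and supplying the needed formation proofs via clause~(4); and clauses~(9)--(10) verify that the $\what K\s$ and $\what K\p$ metafunctions yield well-formed types and kinds. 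Threading these together, the \rulename{switch} rule's output $\metaapptwo F R {\what R}$ checks at $\metaapp{\what Q}R$, completing the clause that motivated the whole construction.
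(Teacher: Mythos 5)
Your overall strategy coincides with the paper's: a mutual structural induction on the translation derivations of all the clauses, a compositionality lemma asserting that the translation commutes with hereditary substitution, and proof irrelevance deployed exactly where the paper deploys it, namely to dissolve the coherence problem for formation proofs in the \rulename{switch} and base-sort cases.

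There is, however, one concrete missing ingredient: the paper's Erasure and Reconstruction lemmas (if $\judgetrans J X$ then $\judge J$; conversely, if $\judge J$ then $\judgetrans J X$ for some $X$). These are not a convenience but the mechanism by which LFR's metatheory---above all the Substitution Theorem (Theorem~\ref{thm:substitution})---is transported to the translation judgments. The issue shows up in the \rulename{$\Pi$-E} case of clause (2): there the synthesized sort $S'$ is the result of a hereditary substitution, and the clause's conclusion obliges you to \emph{exhibit} a derivation $\jchecksortclasst \Gamma {S'} {A'} {\what{S'}}$, yet your substitution lemma takes the existence of $\what{S'}$ as a \emph{hypothesis}, and nothing else in your sketch produces it. The paper closes this gap by applying Erasure, invoking LFR's substitution theorem on the erased judgments, and then invoking Reconstruction to obtain \emph{some} translation output, whose identity is afterwards pinned down (up to the irrelevance-generated equality) by Compositionality. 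You would need the same device, or equivalently a strengthening of your substitution lemma into an existence statement. A second, smaller omission is the paper's Proof Variable Substitution lemma, which guarantees that substituting for the hatted proof variables $\what x$ cannot change the identity of a translated sort or class; this is needed in the application cases, though it is arguably implicit in your appeal to irrelevance.
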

\proof
By induction on each clause's main input derivation.  Several clauses must
be proved mutually; for instance, clauses \ref{cl:tr-n}, \ref{cl:tr-r},
\ref{cl:tr-f}, and \ref{cl:tr-q} are all mutual, since the rules for
translating terms refer to the translation of subsorting, the rules for
translating subsorting refer to the class synthesis translation, and since
sorts may be dependent, the rules for class synthesis translation refer
back to the term translation.
\qed

\noindent The proofs use entirely standard syntactic methods, but they
appeal to several key lemmas about the structure of the translation.

\begin{lem}[Erasure]
If $\judgetrans J X$, then $\judge J$.
\end{lem}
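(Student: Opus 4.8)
The plan is to proceed by induction on the derivation of $\judgetrans J X$, performed mutually across all of the translation judgments listed in Table~\ref{tab:judgments}. The guiding observation, already stressed in our description of the translation, is that every translation judgment $\judgetrans J X$ is an \emph{annotated} version of a corresponding original judgment $\judge J$: the erasure that simply deletes the ``$\leadsto X$'' component carries the conclusion of each translation rule to an instance of a rule of the original system. So I would define erasure on judgments to drop the output annotation, and then verify rule by rule that it maps each translation rule to a valid original rule.

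Concretely, I would case-analyze the last rule of the given translation derivation. In each case the recursive premises are themselves translation judgments, so the induction hypothesis supplies their erasures --- precisely the original premises needed to apply the matching original rule. Side conditions that are not translation judgments (signature and context memberships such as $\decl c A \in \Sigma$ or $\rdecl x S A \in \Gamma$, equalities such as $P' = P$ and $L = \csort$, and hereditary-substitution premises $\jhsubst s N x A S {S'}$ and their analogues) appear unchanged in both the translation rule and its original counterpart, so they transfer directly. Thus, for example, the translation rule \rulename{$\Pi$-I} erases to the original \rulename{$\Pi$-I}, \rulename{$\intersect$-I} to \rulename{$\intersect$-I}, the context-translation rule to the context-formation rule, and the \rulename{switch} rule to \rulename{switch}, uniformly across the families of term, sort, class, subsorting, context, and signature judgments.

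The one place that calls for a little care is the handful of translation rules carrying \emph{extra} premises beyond those of their original counterpart. The prime example is the \rulename{climb} rule for subsorting: besides the two recursive subsorting premises $\jsynthsubtypet {Q_1} {Q'} \whatQQp$ and $\jchecksubtypet \Gamma {Q'} {Q_2} F$, whose erasures $\jsubtype {Q_1} {Q'}$ and $\jsubtype {Q'} {Q_2}$ feed the original transitivity rule, it also carries the well-formedness premises $\jsynthsortclasst \Gamma {Q_1} P \csort {\what{Q_1}}$ and $\jsynthsortclasst \Gamma {Q'} P \csort {\what{Q'}}$ that merely generate formation proofs; on erasure these yield facts the original rule does not require, so they are simply discarded. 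A similar remark applies to the signature rules, whose auxiliary kind- and class-translation premises ($\kindtpred K {\what K\p}$, $\jclasstform \Gamma L K {\what L\f}$, $\kindtsub K {\what K\s}$) are dropped on erasure. I expect this bookkeeping --- checking that no rule's original premises are ever \emph{under}-supplied after erasure, only possibly over-supplied --- to be the only mildly delicate point; everything else is a direct, rule-by-rule projection. The mutual structure of the induction is unproblematic, since erasure respects the cross-references among the judgments (for instance, term translation referring to subsorting translation, and class synthesis referring back to term translation).
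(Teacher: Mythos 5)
Your proposal is correct and matches the paper's own argument, which is exactly this induction on the translation derivation together with the observation that each translation rule is premise-wise at least as restrictive as its LFR counterpart (so extra premises are simply discarded), with the subsorting rules noted as the one place where the restriction is instead about forcing a particular derivation shape. Your more detailed rule-by-rule account, including the treatment of the \rulename{climb} rule via the original transitivity rule, is just an expansion of the same proof.
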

\proof
Straightforward induction on the structure of the translation derivation.
The translation rules are premise-wise strictly more restrictive than the
original LFR rules, except for the subsorting rules, which are also more
restrictive in the sense that they force rules to be applied in a certain
order.
\qed

\begin{lem}[Reconstruction]
If $\judge J$, then for some $X$, $\judgetrans J X$.
\end{lem}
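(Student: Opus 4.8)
The plan is to proceed by induction on the derivation of $\judge J$, structured as a single mutual induction ranging over all the judgment forms of Table~\ref{tab:judgments}. The guiding principle is that each translation judgment $\judgetrans J X$ is defined by rules that mimic, premise-for-premise, the rules defining $\judge J$; so the skeleton of the argument is to invert the last rule of the LFR derivation, obtain translations of its premises from the induction hypothesis, and apply the matching translation rule to assemble the output $X$. The mutuality is forced by the same dependencies noted for the Soundness theorem: term translation invokes subsorting translation, subsorting translation invokes class-synthesis translation, and---because sorts are dependent---class-synthesis translation invokes term translation back again.

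For the bulk of the rules this is entirely routine. Where a translation rule carries \emph{only} translation outputs and no genuinely new premises (the intersection, $\top$, $\Pi$ introduction and elimination, context, and signature rules), the induction hypothesis supplies exactly the sub-translations needed, and we simply apply the corresponding rule. Where a translation rule carries auxiliary side conditions, those conditions are already guaranteed by the metatheory: the hereditary substitutions appearing in the \rulename{$\Pi$-E} and base-sort-synthesis rules are defined because the \emph{source} rule already performs the same substitution, so their definedness is inherited from the given derivation; and the $\eta$-expansions $\expand A x$ and $\expand A R$ threaded through the context, signature, and \rulename{switch} translations are always defined, with their typing furnished by Theorem~\ref{thm:expansion} (Expansion) and Theorem~\ref{thm:identity} (Identity). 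These cases therefore close by a direct appeal to the matching translation rule.

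The one case that is not a mechanical rule-match---and the main obstacle---is subsorting, because the source system uses the \emph{declarative} relation $\jsubtype {Q_1} {Q_2}$ (the reflexive, transitive, application-compatible closure of the declared inclusions) whereas the translation is defined over the \emph{algorithmic} formulation, whose checking judgment $\jchecksubtypet \Gamma {Q_1} {Q_2} F$ is generated by \rulename{refl} and \rulename{climb}. Reconstruction for subsorting thus amounts to a completeness statement: from a declarative derivation of $\jsubtype {Q_1} {Q_2}$, produce some coercion $F$ with $\jchecksubtypet \Gamma {Q_1} {Q_2} F$. I would prove this by an inner induction on the declarative derivation, treating the algorithmic checking judgment as the ``path'' relation. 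Reflexivity maps to \rulename{refl}; a single declared step maps to one use of the synthesis judgment $\jsynthsubtypet {Q_1} {Q_2} {\whatQQp}$ followed by \rulename{refl}, combined through \rulename{climb}; and transitivity is handled by a small admissibility lemma showing that two \rulename{climb}-chains may be concatenated (itself an easy induction on the first chain, replacing its terminal \rulename{refl} by the second chain).

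The genuinely delicate point inside this bridging argument is that each application of \rulename{climb} demands the formation proofs $\what{Q_1}$ and $\what{Q'}$ produced by the class-synthesis translation $\jsynthsortclasst \Gamma {Q_1} P \csort {\what{Q_1}}$. These are available precisely because, under the refinement restriction, every base sort occurring in a subsorting derivation is a well-formed sort of class $\csort$ refining a common $P$, so the mutual induction hypothesis for class synthesis supplies the required translations. Once this bridging lemma is in hand, the \rulename{switch} case of term translation closes immediately: the induction hypothesis yields $\jsynthtermsortt \Gamma R {Q'} {\what R}$ from the synthesis premise and $\jchecksubtypet \Gamma {Q'} Q F$ from the declarative subsorting premise, and the translated \rulename{switch} rule assembles the result. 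Together with the (already established) Erasure lemma, this completes the tight correspondence between LFR judgments and their translations.
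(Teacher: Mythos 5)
Your proposal takes essentially the same route as the paper's proof: induction on the structure of the LFR derivation, with the subsorting cases handled by converting a declarative derivation into the right-nested algorithmic form generated by \rulename{refl} and \rulename{climb}, and with the formation premises required by \rulename{climb} supplied by the tacit well-formedness of the judgment being translated. Your inner induction and chain-concatenation lemma for transitivity merely spell out what the paper asserts when it says an LFR subsorting derivation ``can be put into algorithmic form.''
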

\proof
By induction on the structure of the LFR derivation.  The cases for the
subsorting rules require us to demonstrate that an LFR subsorting
derivation can be put into ``algorithmic form'', with all uses of
reflexivity and transitivity outermost and right-nested, like the
algorithmic translation rules \rulename{refl} and \rulename{climb}.
We also make use of the tacit assumption that the judgment $\judge J$
itself is well-formed, e.g. if $\mathcal J = N \checks S$, then
$\jchecksortclass \Gamma S A$, which ensures that we will have the
necessary formation premises when we need to apply the \rulename{climb}
rule.
\qed

\noindent Erasure and reconstruction substantiate the claim that our
translation is der\-i\-va\-tion-directed by allowing us to move freely
between translation judgments and ordinary ones.  Using erasure and
reconstruction, we can leverage all of the LFR metatheory without
reproving it for translation judgments.  For example, several cases
require us to substitute into a translation derivation: we can apply
erasure, appeal to LFR's substitution theorem, and invoke
reconstruction to get the output we require.

But since reconstruction only gives us \textit{some} output $X$, we may not
know that it is the one that suits our needs.  Therefore, we usually require
another lemma, compositionality, to tell us that the translation commutes
with substitution.  There are several such lemmas; we show here the one for
sort translation.
\begin{lem}[Compositionality]
Let $\sigma$ denote $\hsubst {} M x A {}$.
\begin{enumerate}[\em(1)]
\item If $\jchecksortclasst {\GammaL, \mrefdecl x \_, \GammaR} S A {\what S}$
      and $\jchecksortclasst {\GammaL, \sigma \GammaR} {\sigma S} {\sigma A}
                             {\what {S'}}$,
      then $\sigma \metaapp {\what S} N
           = \metaapp {\what {S'}} {\sigma N}$,
\item If $\jclasstform {\GammaL, \mrefdecl x \_, \GammaR} L K {\what L}$
      and $\jclasstform {\Gamma, \sigma \GammaR} {\sigma L} {\sigma K}
                    {\what {L'}}$,
      then $\sigma \metaapp {\what L} P
            = \metaapp {\what {L'}} {\sigma P}$,
\end{enumerate}
and similarly for $\kindtsub K {\what K\s}$ and $\kindtpred K {\what K\p}$.
\end{lem}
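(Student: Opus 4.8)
The plan is to proceed by induction on the structure of the first translation derivation in each clause, analyzing the second derivation in lockstep. The structural fact that makes this work is that hereditary substitution preserves the outermost constructor of a sort, class, or kind: $\sigma$ sends $\sstop$ to $\sstop$, $\sinter {S_1} {S_2}$ to $\sinter {\sigma S_1} {\sigma S_2}$, $\sspi y {S_1} {A_1} T$ to $\sspi y {\sigma S_1} {\sigma A_1} {\sigma T}$ (preserving the fresh binder $y$, which is distinct from $x$ and not free in $M$), and a base sort $s\ \Nspine$ to $s\ \sigma N_1 \ldots \sigma N_k$ with the head constant $s$ untouched; likewise for classes and kinds. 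Since the translation judgments are syntax-directed on these head constructors, the two derivations expose matching sub-translations, and I can apply the induction hypotheses componentwise and reassemble. Clause (2) for classes appeals to clause (1) on the domain sort $S$ of a $\Pi$ class $\ccpi y S A L$, while the $\kindtsub K {\what K\s}$ and $\kindtpred K {\what K\p}$ clauses recurse only on themselves, so the induction is well-founded (prove clause (1) first, then the rest, or run a single induction on derivation height).

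Most cases are pure bookkeeping about how $\sigma$ distributes over the target type constructors. For $\sstop$, $\ctop$, the base class $\csort$, and the base kind $\ktype$, the two translations have the same shape and the required equation reduces to $\sigma$ either acting trivially (on the unit $\aunit$) or simply distributing over a fixed target structure; the $\intersect$ and $\cinter$ cases translate to products, over which $\sigma$ distributes, so two appeals to the induction hypothesis suffice. The only cases with real content are the $\Pi$ rules (\rulename{$\Pi$-F} and its analogues for classes and kinds), where the translation introduces an $\eta$-expanded bound variable $\expand {A_1} y$ and a reverse application $\revapp N y$. Here I first push $\sigma$ through the target $\Pi$, and then need two commutation facts. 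First, $\eta$-expansion depends only on the simple-type skeleton of its index, and the expansion of a fresh bound variable $y$ contains no occurrence of $x$; hence $\sigma(\expand {A_1} y) = \expand {A_1} y = \expand {\sigma A_1} y$, using $\erasedeps {A_1} = \erasedeps {\sigma A_1}$. Second, $\sigma$ commutes with the reverse-application operator, $\sigma(\revapp N y) = \revapp {\sigma N} y$, because $\revapp N y$ is an ordinary renaming and renamings commute with hereditary substitution by a routine exchange argument (cf.\ Lemma~\ref{lem:composition}, Composition). Both facts are exactly of the kind supplied by Lemma~\ref{lem:commutativity} (Commutativity of Substitution and $\eta$-expansion); with them the induction hypotheses on $S_1$ and on the codomain $T$ close the case.

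I expect the base-sort rule \rulename{$Q$-F} to be the one interesting step, and it is where proof irrelevance is essential. There $\metaapp {\what S} N = \aapp {\aiapp Q {\what Q}} N$, with the formation proof $\what Q$ — produced by the class-synthesis translation $\jsynthsortclasst \Gamma Q P {\csort} {\what Q}$ — sitting in the irrelevant argument slot of the predicate family. Applying $\sigma$ gives $\aapp {\aiapp {\sigma Q} {\sigma \what Q}} {\sigma N}$, whereas the second derivation synthesizes a possibly different formation proof $\what {Q'}$ for $\sigma Q$ and yields $\metaapp {\what {S'}} {\sigma N} = \aapp {\aiapp {\sigma Q} {\what {Q'}}} {\sigma N}$. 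I will \emph{not} try to show $\sigma \what Q = \what {Q'}$: in general they are distinct, since the class-synthesis judgment is not functional (its intersection eliminations admit several derivations) and $\sigma$ need not track the particular choice. Instead, because both proofs occupy the irrelevant position $[\,\cdot\,]$, the definitional equality of LFI identifies $\aiapp {\sigma Q} {\sigma \what Q}$ with $\aiapp {\sigma Q} {\what {Q'}}$, so the two applications are equal and the case closes. This is precisely the coherence difficulty that motivated introducing proof irrelevance, so it is fitting that the single non-syntactic step occurs here; it is also why the lemma asserts compositionality for $\jchecksortclasst \Gamma S A {\what S}$, $\jclasstform \Gamma L K {\what L}$, $\kindtsub K {\what K\s}$, and $\kindtpred K {\what K\p}$ but never for the formation-proof output of $\jsynthsortclasst \Gamma Q P L {\what Q}$ — that output is never pinned down, as it always appears irrelevantly.
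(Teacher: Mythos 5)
Your proposal is correct and matches the paper's proof in all essentials: a straightforward structural induction (the paper phrases the bookkeeping as ``using functionality of hereditary substitution''), with the crucial observation that in the \rulename{$Q$-F} base case the two derivations may synthesize different formation proofs, which is harmless precisely because they sit in an irrelevant argument position, and with clause (2)'s $\Pi$ case appealing to clause (1) since $\Pi$ classes contain sorts. The extra detail you supply for the $\Pi$-F case (commutation of $\sigma$ with $\eta$-expansion and with $\revapp N x$) is a correct elaboration of what the paper leaves implicit.
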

\proof
Straightforward induction using functionality of hereditary substitution.
The base case of the first clause leverages the irrelevance introduced in
the \rulename{$Q$-F} translation rule: both sort formation derivations will
have a premise outputting evidence for the well-formedness of the sort, and
there is no guarantee they will output the \textit{same} evidence, but
since the evidence is relegated to an irrelevant position, its identity is
ignored.  The second clause's $\Pi$ case appeals to the first clause, since
$\Pi$ classes contain sorts.
\qed

Finally, there is a lemma demonstrating that proof variables only ever
occur irrelevantly, so substituting for them cannot change the identity of
a sort or class meta-function output by the translation.
\begin{lem}[Proof Variable Substitution]
\hfill
\begin{enumerate}[\em(1)]
    \item If $\jchecksortclasst {\GammaL, \rdecl x {S_0} {A_0}, \GammaR}
                    S A {\what S}$
          then $\hsubst a M {\what x} {A_0} {\metaapp {\what S} N}
               = \metaapp {\what S} {\hsubst n M {\what x} {A_0} N}$.
    \item If $\jclasstform {\GammaL, \rdecl x {S_0} {A_0}, \GammaR} L K {\what L}$
          then $\hsubst a M {\what x} {A_0} {\metaapp {\what L} P}
               = \metaapp {\what L} {\hsubst p M {\what x} {A_0} P}$.
\end{enumerate}
\end{lem}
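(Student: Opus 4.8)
The plan is to prove both clauses by induction on the formation-translation derivation: clause~(1) by induction on $\jchecksortclasst {\GammaL, \rdecl x {S_0} {A_0}, \GammaR} S A {\what S}$, and clause~(2) by a parallel induction on $\jclasstform {\GammaL, \rdecl x {S_0} {A_0}, \GammaR} L K {\what L}$ that calls clause~(1) in the $\Pi$-class case. The organizing observation --- the structural fact foreshadowed by the remark that proof variables occur only irrelevantly --- is that the lemma's proof variable $\what x$ can appear in the predicate metafunction $\what S$ only \emph{inside a formation-proof argument}, i.e.\ within a subterm of the form $\aiapp Q {-}$ emitted by the \rulename{$Q$-F} rule. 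Every other part of $\what S$ is assembled from source-language data (the refined type $A$, the term arguments of base sorts, and $\eta$-expansions $\expand A y$ of ordinary term variables), none of which mentions $\what x$. I would either carry this invariant explicitly as an auxiliary claim through the induction, or verify it locally in each case.

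The crux is the base case \rulename{$Q$-F}, where $\metaapp {\what S} N = \aapp {\aiapp Q {\what Q}} N$. Applying $\hsubst a M {\what x} {A_0} {-}$ leaves the predicate-family head $Q$ and its source term arguments fixed (they are proof-variable-free), turns $N$ into $\hsubst n M {\what x} {A_0} N$, and replaces the formation proof $\what Q$ by $\hsubst {} M {\what x} {A_0} {\what Q}$. Since this last substitution occurs strictly within the irrelevant argument slot of $\aiapp Q {-}$, the LFI definitional equality --- which disregards the identity of irrelevant arguments --- gives $\aiapp Q {\hsubst {} M {\what x} {A_0} {\what Q}} = \aiapp Q {\what Q}$. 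Hence the substituted type equals $\aapp {\aiapp Q {\what Q}} {\hsubst n M {\what x} {A_0} N} = \metaapp {\what S} {\hsubst n M {\what x} {A_0} N}$, which is the right-hand side. This is exactly the step for which the proof-irrelevant function space in \rulename{$Q$-F} was introduced: without it the two sides would be genuinely distinct types.

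The remaining sort cases are compositional. For \rulename{$\intersect$-F} and \rulename{$\top$-F} the substitution distributes over the product, respectively fixes $\aunit$, and the induction hypotheses on the component derivations close the goal. For \rulename{$\Pi$-F}, writing (after renaming the bound variables away from $\what x$) $\metaapp {\what S} N = \api y A {\api {\what y} {\metaapp {\what {S_1}} {\expand A y}} {\metaapp {\what T} {\revapp N y}}}$, the induction hypotheses on $S_1$ and $T$ push the substitution inside the two codomain predicates; matching the right-hand side then only requires that proof-variable substitution commute with $\eta$-expansion and with reverse application of the fresh ordinary variable $y$, namely $\hsubst n M {\what x} {A_0} {\expand A y} = \expand A y$ and $\hsubst n M {\what x} {A_0} {\revapp N y} = \revapp {(\hsubst n M {\what x} {A_0} N)} y$. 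Both hold because $\what x$ is distinct from $y$ and absent from $A$, with Theorem~\ref{thm:functionality} (Functionality of Substitution) used to identify substitution outputs. Clause~(2) follows the same template: the $\csort$ base case merely threads its argument through, so both sides reduce to $\hsubst {} M {\what x} {A_0} P$ and agree by Functionality; the intersection-class and $\ctop$ cases mirror \rulename{$\intersect$-F} and \rulename{$\top$-F}; and the $\Pi$-class case appeals to clause~(1) for the sort predicate $\metaapp {\what S} {-}$ it contains, which is how proof irrelevance enters clause~(2).

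The only genuine obstacle is the bookkeeping behind the structural invariant of the first paragraph: one must be sure $\what x$ never reaches a \emph{relevant} position of $\what S$. The term-translation rule \rulename{var} does emit proof variables relevantly (as $\what x$ itself), but every such emission happens while constructing a formation proof $\what Q$, and $\what Q$ is invariably consumed in the irrelevant slot of \rulename{$Q$-F}. Pinning down this invariant --- so that the appeal to proof irrelevance in the base case is justified --- is the substance of the argument; everything else is routine commutation of hereditary substitution.
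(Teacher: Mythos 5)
Your proposal is correct and takes essentially the same route as the paper: a straightforward induction on the translation derivation whose only non-trivial point is the \rulename{$Q$-F} base case, where the sole occurrence of $\what x$ is the formation proof $\what Q$ sitting in the irrelevant argument position, so LFI's definitional equality absorbs the substitution. The paper states exactly this in one line; your elaboration of the compositional cases and the invariant that proof variables reach $\what S$ only through irrelevant slots is a faithful expansion of that argument.
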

\proof
Straightforward induction, noting in the base case, the \rulename{$Q$-F}
rule, the only term that could depend on $\what x$ is in an irrelevant
position.
\qed

\noindent Completeness theorems tell us that our target is not too
rich: that everything we find evidence of in the codomain of the
translation actually holds true in its domain.  While important for
establishing general correctness, completeness theorems are not as
informative as soundness theorems, so we give here only the cases for
terms---and in any case, those are the only cases we require to
fulfill our goal of preserving adequacy.

In stating completeness, we syntactically isolate the set of terms that
could represent proofs using metavariables $\what R$ and $\what N$.
\begin{align*}
  \what R &::= \what c \mid \what x \mid {\what R}\ N\ {\what N}
            \mid \mfst {\what R} \mid \msnd {\what R} \\
  \what N &::= \what F \mid \mlam x \mlam {\what x} {\what N}
            \mid \mpair {\what {N_1}} {\what {N_2}} \mid \munit \\
  \what F &::= \what R \mid \whatQQ\ {\what {Q_1}}\ {\what {Q_2}}\ R\ F \\
  \whatQQ &::= s_1{\textit-}s_2 \mid \whatQQ\ N \\
  \what Q &::= \what s/i \mid \what Q\ N\ \what N
            \mid \mfst {\what Q} \mid \msnd {\what Q}
\end{align*}

\begin{thm}[Completeness]
Suppose $\jctxt \Gamma {\what\Gamma}$ and $\jsigt \Sigma {\what\Sigma}$.
Then:
    \begin{enumerate}[\em(1)]
    \item If $\jchecksortclasst \Gamma S A {\what S}$
          and $\jchecktermtype[\what\Sigma] {\what\Gamma} {\what N}
                {\metaapp {\what S} N}$,
          then $\jchecktermsort \Gamma N S$.
    \item If $\jsynthtermsort[\what\Sigma] {\what\Gamma} {\what R} B$,
          then
               $\jchecksortclasst \Gamma S A {\what S}$,
               $B = \metaapp {\what S} {\expand A R}$,
           and $\jsynthtermsort \Gamma R S$
               (for some $S$, $A$, $\what S$, and $R$).
    \item If $\jsynthtermsort {\what\Gamma} {\what F}
                    {\aapp {\aiapp Q {\what Q}} R}$,
          then $\jchecktermsort {\what\Gamma} {\what R} Q$.
    \item If $\jsynthtermtype {\what\Gamma} {\whatQQ} B$,
          then
              $\kindtsub K {\what K\s}$,\ \ 
              $B = \metaappfive {\what K\s} P {\form {Q_1}} {Q_1}
                                              {\form {Q_2}} {Q_2}$,
          and $\jsubtype {Q_1} {Q_2}$
          (for some $K$, $\what K\s$, $P$, ${Q_1}$, and ${Q_2}$).
    \item If $\jsynthtermsort {\what\Gamma} {\what Q} B$,
          then 
              $\jclasstform \Gamma L K {\what L\f}$,\ \ 
              $B = \metaapp {\what L\f} {\form Q}$,
          and $\jsynthsortclass \Gamma Q P L K$
               (for some $L$, $K$, $\what L\f$, and $Q$).
    \end{enumerate}
\end{thm}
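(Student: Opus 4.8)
The plan is to prove all five clauses simultaneously by induction on the LFI typing derivation that appears as the hypothesis of each clause, with the case analysis guided by the grammar of proof terms $\what R$, $\what N$, $\what F$, $\whatQQ$, $\what Q$ isolated just before the statement. Because that grammar is mutually recursive---a checking proof $\what N$ may be a coercion application $\what F$, which embeds coercions $\whatQQ$ and formation proofs $\what Q$---the five clauses must be established together, exactly mirroring the mutual dependence already noted for the Soundness theorem. For each clause I would invert the given LFI derivation and read the corresponding LFR data off the translation rules, witnessing the existential quantifiers (``for some $S$, $A$, $\what S$, $R$'', and so on) by the source objects that the translation rules manifestly produce. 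Every reduction below passes to a strict subderivation, so plain induction on the LFI derivation suffices.

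Concretely, for clause~(1) I would case-analyze the given sort $S$ (equivalently, the shape of $\what S$, which the compositional translation forces to mirror $S$): when $S = \sstop$ the proof term is $\munit$ and $N \checks \sstop$ by rule \rulename{$\sstop$-I}; when $S = \sinter{S_1}{S_2}$ it is a pair, and inversion plus two appeals to the induction hypothesis rebuild an \rulename{$\intersect$-I} derivation; when $S = \sspi x {S_1} {A_1} T$ it is a double abstraction $\mlam x \mlam {\what x} {\what N'}$ and the induction hypothesis on the body yields a \rulename{$\Pi$-I} derivation; and when $S = Q$ is a base sort the proof term is some $\what F$ synthesizing $\aapp{\aiapp Q {\what Q}} N$, which is precisely the hypothesis of clause~(3) applied to the synthesis subderivation. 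Clause~(2) proceeds by cases on the head of $\what R$ ($\what c$, $\what x$, an application, or a projection), using the signature and context translations for constants and variables and the induction hypothesis for the eliminations; the application case is where I would invoke the Compositionality lemma and the Commutativity of substitution with $\eta$-expansion to reconcile the synthesized type $\metaapp{\what S}{\expand A R}$ with the hereditary substitution performed by the LFR \rulename{$\Pi$-E} rule, recovering the source term $R$ structurally. Clauses~(4) and~(5) recover the subsorting and sort-formation derivations: a coercion $\whatQQ$ bottoms out at a declared constant $s_1\text{-}s_2$ (yielding $s_1 \subtype s_2$ from the declaration) and climbs through applications, while a formation proof $\what Q$ bottoms out at $\what s/i$ and is rebuilt by sort application and intersection projection.

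The switch/climb interplay in clause~(3) is the technical heart. There I would case-analyze $\what F$: a bare $\what R$ reduces to clause~(2) together with the reflexive use of the \rulename{switch} rule, while a coercion-wrapped term $\whatQQ\ \what{Q_1}\ \what{Q_2}\ R\ F$ combines the subsort $Q_1 \subtype Q_2$ obtained from clause~(4) with the recursively obtained $R \checks Q_1$, closing the gap by transitivity of declared subsorting. Throughout, the Erasure and Reconstruction lemmas let me borrow the existing LFR metatheory---Substitution and Functionality of hereditary substitution---rather than reproving it for translation judgments.

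I expect the main obstacle to be the base case for atomic sorts, where coherence re-enters. The LFI type $\aapp{\aiapp Q {\what Q}} R$ carries the formation proof $\what Q$ in an \emph{irrelevant} argument position, so inverting such a typing and matching it against the source sort $Q$ must proceed without depending on which formation proof appears. This is exactly where the irrelevant function space earns its keep: the Proof Variable Substitution and Compositionality lemmas guarantee that distinct formation witnesses are definitionally equal, so the reconstruction of $Q$ and of the attendant subsorting derivation is well-defined. A secondary but pervasive difficulty is the $\eta$-expansion bookkeeping that relates $\metaapp{\what S}{\expand A R}$ to predicates applied to bare atomic terms, which I would discharge uniformly with the Commutativity lemma, using that $\eta$-expansion is the identity at base types.
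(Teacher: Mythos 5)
Your proposal is correct and takes essentially the same route as the paper, whose entire proof is the single line ``By induction over the structure of the proof term'': your induction on the (syntax-directed) LFI typing derivations, with cases read off the proof-term grammar and the five clauses handled mutually using the Erasure/Reconstruction, Compositionality, Commutativity, and Proof Variable Substitution lemmas, is exactly that argument spelled out in detail. The only nit is that where one clause delegates to another at the \emph{same} derivation (e.g.\ clause~(3) on a bare $\what R$ invoking clause~(2)), ``plain'' induction needs a lexicographic tiebreak on clause number, a routine adjustment consistent with the paper's other inductions.
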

\proof
By induction over the structure of the proof term.
\qed


\noindent Adequacy of a representation is generally shown by exhibiting a
compositional bijection between informal entities and terms of certain
LFR sorts.
Since we have undertaken a subset interpretation, the set of terms of any
LFR sort are unchanged by translation, and so any bijective correspondence
between those terms and informal entities remains after translation.
Furthermore, soundness and completeness tell us that our interpretation
preserves and reflects the derivability of any refinement type judgments
over those terms.
Thus, we have achieved our main goal: any adequate LFR representation can
be translated to an adequate LFI representation.


\section{Conclusion}
\label{sect:conclusion}

\noindent Logical frameworks are metalanguages specifically designed
so that common concepts and notations in logic and the theory of
programming languages can be represented elegantly and concisely.
LF~\cite{Harper93jacm} intrinsically supports $\alpha$-conversion,
capture-avoiding substitution, and hypothetical and parametric
judgments, but as with any such enterprise, certain patterns fall out
of its scope and must be encoded indirectly. 
SPACE One pattern is the ability to form regular subsets of types
already defined.  We address this by extending LF with type
refinements, leveraging the modern view of LF as a calculus of
canonical forms to obtain a metatheoretically simple yet expressive
system, LFR\@.  Another pattern is to ignore the identities of proofs,
relying only on their existence.  This is addressed in LF extended
with proof irrelevance, LFI~\cite{Pfenning01lics,ReedPfenning}.  We
have shown that our system of refinement types can be mapped into LFI
in a bijective manner, preserving adequacy theorems for LFR
representations in LFI\@.

In the methodology of logical frameworks research, it is important to
understand the cost of such a translation: how much more complicated
are encodings in the target framework, and how much more difficult is
it to work with them?  We cannot measure this cost precisely, but we
hope it is evident from the definition of the translation and the
examples that the price is considerable.  Even if in special cases
more direct encodings are possible, we believe our general translation
could not be simplified much, given the explicit goal to preserve the
adequacy of representations.  Other translations from programming
languages, such as coercion interpretations where sorts are translated
to distinct types and subsorting to coercions, appear even more
complex because adequacy depends on certain functional equalities
between coercions.  Our preliminary conclusion is that refinement
types in logical frameworks provide elegant and immediate
representations that are not easy to simulate without them, providing
a solid argument for their inclusion in the next generation of
frameworks.

Of course, much work remains to be done before refinement types can be
considered a practical addition.  First, it will be necessary to develop a
sufficiently complete algorithm for reconstructing the sorts of implicitly
$\Pi$-quantified metavariables in order to allow the elegant encodings we
imagine without burdensome redundancy.  Furthermore, it would be useful to
have a logic programming interpretation of LFR declarations and the ability
to perform analyses like coverage and termination checking on declarations
\textit{qua} programs; to enable such an interpretation, we will have to
develop an algorithm for sorted unification, generalizing existing work on
pattern unification in the context of logical frameworks.  It may also be a
worthwhile endeavor to formalize the metatheory of LFR and its subset
interpretation in a metalogical framework or proof assistant; although we
have avoided doing so due to the high cost of working around current
technological limitations in proof assistants, the present work has been
carried out in sufficient detail that formalization should not be
particularly difficult beyond the technical challenge of representing a
dependently typed calculus.

Refinement types have been also been proposed for functional
programming~\cite{Freeman94,Dunfield04:Tridirectional,Davies05}, most
recently in conjunction with a limited form of dependent
types~\cite{Dunfield07thesis}.  Proof irrelevance is already integrated in
this setting, and also available in general type theories such as NuPrl or
Coq.  One can ask the same question here: Can we simply eliminate
refinement types and just work with dependent types and proof irrelevance?
The results in this paper lend support to the conjecture that this can be
accomplished by a uniform translation.  On the other hand, just as here, it
seems there would likely be a high cost in terms of brevity in order to
maintain a bijection between well-sorted data in the source and dependently
well-typed data in the target of the translation.


\subsubsection*{Acknowledgements.}
Thanks to Jason Reed for many fruitful discussions on
the topic of proof irrelevance.
Thanks to the anonymous referees for offering insightful commentary
on how to clarify our presentation.

\bibliographystyle{alpha}
\bibliography{refs}

\vfill\eject

\appendix
\section{Complete LFR Rules}
\label{app:lfr-rules}

\noindent In the judgment forms below, superscript $+$ and $-$
indicate a judgment's ``inputs'' and ``outputs'', respectively.

\subsection{Grammar}
\begin{align*}
\grammarheader{Kind level} \\
K &::= \ktype
    \mid \kpi x A K
  & \text{kinds}
\\
L &::= \csort
    \mid \ccpi x S A L
    \mid \ctop
    \mid \cinter {L_1} {L_2}
  & \text{classes}
\\ \\
\grammarheader{Type level} \\
P &::= a
    \mid \aapp P N
  & \text{atomic type families}
\\
A &::= P
    \mid \api x {A_1} {A_2}
  & \text{canonical type families}
\\ \\
Q &::= s
    \mid \sapp Q N
  & \text{atomic sort families}
\\
S &::= Q
    \mid \sspi x {S_1} {A_1} {S_2}
    \mid \sstop
    \mid \sinter {S_1} {S_2}
  & \text{canonical sort families}
\\ \\
\grammarheader{Term level} \\
R &::= c
    \mid x
    \mid \mapp R N
  & \text{atomic terms}
\\
N &::= R
    \mid \mlam x N
  & \text{canonical terms}
\\ \\
\grammarheader{Signatures and contexts} \\
\Sigma &::= \cdot
    \mid \Sigma, D
  & \text{signatures}
\\
D &::= \decl a K
    \mid \decl c A
    \mid \arefdecl s a L
    \mid \subdecl {s_1} {s_2}
    \mid \mrefdecl c S
  & \text{declarations}
\\
\Gamma &::= \cdot
    \mid \Gamma, \rdecl x S A
  & \text{contexts}
\end{align*}

\subsection{Expansion and Substitution}

All bound variables are tacitly assumed to be sufficiently fresh.

\rulesheadernobreak{\erasedeps A = \alpha}
\begin{align*}
\alpha, \beta &::= a \mid \aarrow {\alpha_1} {\alpha_2} \\ \\
\erasedeps a &= a \\
\erasedeps {\aapp P N} &= \erasedeps P \\
\erasedeps {\api x A B} &= {\aarrow {\erasedeps A} {\erasedeps B}}
\end{align*}

\rulesheadernobreak{\expand \alpha R = N}
\begin{align*}
\expand a R &= R \\
\expand {\aarrow \alpha \beta} R
    &= {\mlam x {\expand \beta {\mapp R {\expand \alpha x}}}}
\end{align*}


\begin{mathpar}
\rulesheader{\jhsubst n {N_0} {x_0} {\alpha_0} N {N'}}
\and
\inferrule{\jhsubst {rn} {N_0} {x_0} {\alpha_0} R {(N, a)}}
          {\jhsubst n {N_0} {x_0} {\alpha_0} R {N}}
\and
\inferrule{\jhsubst {rr} {N_0} {x_0} {\alpha_0} R {R'}}
          {\jhsubst n {N_0} {x_0} {\alpha_0} R {R'}}
\and
\inferrule{\jhsubst n {N_0} {x_0} {\alpha_0} N {N'}}
          {\jhsubst n {N_0} {x_0} {\alpha_0} {\mlam x N} {\mlam x {N'}}}
\end{mathpar}

\begin{mathpar}
\rulesheader{\jhsubst {rr} {N_0} {x_0} {\alpha_0} R {R'}}
\and
\inferrule{x \neq x_0}
          {\jhsubst {rr} {N_0} {x_0} {\alpha_0} x x}
\and
\inferaxiom {\jhsubst {rr} {N_0} {x_0} {\alpha_0} c c}
\and
\inferrule{\jhsubst {rr} {N_0} {x_0} {\alpha_0} {R_1} {R_1'}
        \\ \jhsubst n {N_0} {x_0} {\alpha_0} {N_2} {N_2'}}
          {\jhsubst {rr} {N_0} {x_0} {\alpha_0} {\mapp {R_1} {N_2}}
                                                {\mapp {R_1'} {N_2'}}}
\end{mathpar}

\begin{mathpar}
\rulesheader{\jhsubst {rn} {N_0} {x_0} {\alpha_0} R {(N', \alpha')}}
\and
\namedaxiom {\jhsubst {rn} {N_0} {x_0} {\alpha_0} {x_0} {(N_0, \alpha_0)}}
            {subst-rn-var}
\and
\namedrule{\jhsubst {rn} {N_0} {x_0} {\alpha_0} {R_1}
                            {(\mlam x {N_1}, \aarrow {\alpha_2} {\alpha_1})}
        \\\\ \jhsubst n {N_0} {x_0} {\alpha_0} {N_2} {N_2'}
        \\ \jhsubst n {N_2'} x {\alpha_2} {N_1} {N_1'}}
          {\jhsubst {rn} {N_0} {x_0} {\alpha_0} {\mapp {R_1} {N_2}}
                                                {(N_1', \alpha_1)}}
          {subst-rn-$\beta$}
\and
\text{(Substitution for other syntactic categories (q, p, s, a, l, k,
$\gamma$) is compositional.)}
\end{mathpar}

\newpage
\subsection{Kinding}

\begin{mathpar}
\rulesheader{\jclass [\Sigma] \Gamma {L^+} {K^+}}
\and
\inferaxiom{\jclass \Gamma \csort \ktype}
\and
\inferrule{\jchecksortclass \Gamma S A
        \\ \jclass {\Gamma, \rdecl x S A} L K}
          {\jclass \Gamma {\ccpi x S A L} {\kpi x A K}}
\\
\and
\inferaxiom {\jclass \Gamma \ctop K}
\and
\inferrule{\jclass \Gamma {L_1} K
        \\ \jclass \Gamma {L_2} K}
          {\jclass \Gamma {\cinter {L_1} {L_2}} K}
\end{mathpar}

\begin{mathpar}
\rulesheader{\jsynthsortclass [\Sigma] \Gamma {Q^+} {P^-} {L^-} {K^-}}
\and
\inferrule{\arefdecl s a L \in \Sigma
        }
          {\jsynthsortclass \Gamma s a L K}
\\
\and
\inferrule{\jsynthsortclass \Gamma Q P {\ccpi x S A L} {\kpi x A K}
        \\ \jchecktermsort \Gamma N S
        \\ \jhsubst l N x A L {L'}
        }
          {\jsynthsortclass \Gamma {\sapp Q N} {\aapp P N} {L'} {K'}}
\\
\and
\inferrule{\jsynthsortclass \Gamma Q P {\cinter {L_1} {L_2}} K}
          {\jsynthsortclass \Gamma Q P {L_1} K}
\and
\inferrule{\jsynthsortclass \Gamma Q P {\cinter {L_1} {L_2}} K}
          {\jsynthsortclass \Gamma Q P {L_2} K}
\end{mathpar}

\begin{mathpar}
\rulesheader{\jchecksortclass [\Sigma] \Gamma {S^+} {A^+}}
\and
\namedrule{\jsynthsortclass \Gamma Q {P'} L K
        \\ P' = P
        \\ L = \csort
        }
          {\jchecksortclass \Gamma Q P}
          {$Q$-F}
\\
\and
\namedrule{\jchecksortclass \Gamma {S} {A}
        \\ \jchecksortclass {\Gamma, \rdecl x {S} {A}} {S'} {A'}}
          {\jchecksortclass \Gamma {\sspi x {S} {A} {S'}} {\api x {A} {A'}}}
          {$\Pi$-F}
\\
\and
\namedaxiom {\jchecksortclass \Gamma \sstop A} {$\top$-F}
\and
\namedrule{\jchecksortclass \Gamma {S_1} A
        \\ \jchecksortclass \Gamma {S_2} A}
          {\jchecksortclass \Gamma {\sinter {S_1} {S_2}} A}
          {$\intersect$-F}
\end{mathpar}

\textbf{Note:} no intro rules for classes $\ctop$ and $\cinter {L_1} {L_2}$.

\subsection{Typing}

\begin{mathpar}
\rulesheader{\jsynthtermsort [\Sigma] \Gamma {R^+} {S^-}}
\and
\namedrule{\mrefdecl c S \in \Sigma
          }
          {\jsynthtermsort \Gamma c S}
          {const}
\and
\namedrule{\rdecl x S A \in \Gamma}
          {\jsynthtermsort \Gamma x S}
          {var}
\and
\namedrule{\jsynthtermsort \Gamma {R_1} {\sspi x {S_2} {A_2} {S}}
        \\ \jchecktermsort \Gamma {N_2} {S_2}
        \\ \jhsubst s {N_2} x {A_2} {S} {S'}
        }
          {\jsynthtermsort \Gamma {\mapp {R_1} {N_2}} {S'}}
          {$\Pi$-E}
\\
\and
\namedrule{\jsynthtermsort \Gamma R {\sinter {S_1} {S_2}}}
          {\jsynthtermsort \Gamma R {S_1}}
          {$\intersect$-E$_1$}
\and
\namedrule{\jsynthtermsort \Gamma R {\sinter {S_1} {S_2}}}
          {\jsynthtermsort \Gamma R {S_2}}
          {$\intersect$-E$_2$}
\end{mathpar}

\begin{mathpar}
\rulesheader{\jchecktermsort [\Sigma] \Gamma {N^+} {S^+}}
\and
\namedrule{\jsynthtermsort \Gamma R {Q'}
        \\ \jchecksubtype \Gamma {Q'} Q P}
          {\jchecktermsort \Gamma R Q}
          {switch}
\\
\and
\namedrule{\jchecktermsort {\Gamma, \rdecl x {S} {A}} N {S'}}
          {\jchecktermsort \Gamma {\mlam x N} {\sspi x {S} {A} {S'}}}
          {$\Pi$-I}
\\
\and
\namedaxiom {\jchecktermsort \Gamma N \sstop}
          {$\top$-I}
\and
\namedrule{\jchecktermsort \Gamma N {S_1}
        \\ \jchecktermsort \Gamma N {S_2}}
          {\jchecktermsort \Gamma N {\sinter {S_1} {S_2}}}
          {$\intersect$-I}
\end{mathpar}

\begin{mathpar}
\rulesheader{\jchecksubtype \Gamma {Q_1^+} {Q_2^+} {P^+}}
\and
\inferrule{Q_1 = Q_2}
          {\jchecksubtype \Gamma {Q_1} {Q_2} P}
\and
\inferrule{\jsynthsubtype \Gamma {Q_1} {Q'} {P'} L K
        \\ \jchecksubtype \Gamma {Q'} {Q_2} P}
          {\jchecksubtype \Gamma {Q_1} {Q_2} P}
%
\and
\inferrule{\subdecl {s_1} {s_2} \in \Sigma
        }
          {\jsynthsubtype \Gamma {s_1} {s_2} a L K}
\and
\inferrule{\jsynthsubtype \Gamma {Q_1} {Q_2} P {\cpi x S L} {\kpi x A K}
        }
          {\jsynthsubtype \Gamma {\sapp {Q_1} N} {\sapp {Q_2} N}
                                 {\aapp P N} {L'} {K'}}
\end{mathpar}

\subsection{Signatures and Contexts}

\begin{mathpar}
\rulesheader{\jsig \Sigma}
\and
\inferaxiom{\jsig \cdot}
\and
\inferrule{\jsig \Sigma
        \\ \jkind [\eraserefs\Sigma] \cdot K
        \\ \decl a {K'} \not\in \Sigma}
          {\jsig {\Sigma, \decl a K}}
\and
\inferrule{\jsig \Sigma
        \\ \jchecktypekind [\eraserefs\Sigma] \cdot A
        \\ \decl c {A'} \not\in \Sigma}
          {\jsig {\Sigma, \decl c A}}
\and
\inferrule{\jsig \Sigma
        \\ \decl a K \in \Sigma
        \\ \jclass [\Sigma] \cdot L K
        \\ \arefdecl s {a'} {L'} \not\in \Sigma}
          {\jsig {\Sigma, \arefdecl s a L}}
\and
\inferrule{\jsig \Sigma
        \\ \decl c A \in \Sigma
        \\ \jchecksortclass [\Sigma] \cdot S A
        \\ \mrefdecl c {S'} \not\in \Sigma}
          {\jsig {\Sigma, \mrefdecl c S}}
\and
\inferrule{\jsig \Sigma
        \\ \arefdecl {s_1} a L \in \Sigma
        \\ \arefdecl {s_2} a L \in \Sigma}
          {\jsig {\Sigma, \subdecl {s_1} {s_2}}}
\end{mathpar}

\begin{mathpar}
\rulesheader{\jctx [\Sigma] \Gamma}
\and
\inferaxiom{\jctx \cdot}
\and
\inferrule{\jctx \Gamma
        \\ \jchecksortclass \Gamma S A}
          {\jctx {\Gamma, \rdecl x S A}}
\end{mathpar}



\newpage
\section{Complete Translation Rules}
\label{app:trans-rules}

\noindent In the judgment forms below, superscript $+$ and $-$
indicate a judgment's ``inputs'' and ``outputs'', respectively.

\subsection{Kinding}


\begin{mathpar}
\rulesheader{\jclasstform [\Sigma] \Gamma {L^+} {K^+} {\what{L}^-}}
\and
\inferaxiom{\jclasstform \Gamma \csort \ktype {\metalam {Q\f} {Q\f}}}
\and
\inferrule{\jchecksortclasst \Gamma S A {\what S}
        \\ \jclasstform {\Gamma, \rdecl x S A} L K {\what L}}
          {\jclasstform \Gamma {\ccpi x S A L} {\kpi x A K}
                    {\metalam {Q\f} {\kpi x A
                                {\kpi {\what x} {\metaapp {\what S} {\expand A x}} 
                                 {\metaapp {\what L} {\aapp {Q\f} {\expand A x}}}}}}}
\\
\and
\inferaxiom {\jclasstform \Gamma \ctop K {\metalam {Q\f} \kunit}}
\and
\inferrule{\jclasstform \Gamma {L_1} K {\what {L_1}}
        \\ \jclasstform \Gamma {L_2} K {\what {L_2}}}
          {\jclasstform \Gamma {\cinter {L_1} {L_2}} K
                    {\metalam {Q\f} {\kprod {\metaapp {\what {L_1}} {Q\f}}
                                            {\metaapp {\what {L_2}} {Q\f}}}}}
\end{mathpar}



\begin{mathpar}
\rulesheader{\jsynthsortclasst [\Sigma] \Gamma {Q^+} {P^-} {L^-} {\what Q^-}}
\and
\inferrule{\arefdecl s a L \in \Sigma
        }
          {\jsynthsortclasst \Gamma s a L {\what s / i}}
\and
\inferrule{\jsynthsortclasst \Gamma Q P {\ccpi x S A L} {\what Q}
        \\ \jchecktermsortt \Gamma N S {\what N}
        \\ \jhsubst l N x A L {L'}
        }
          {\jsynthsortclasst \Gamma {\sapp Q N} {\aapp P N} {L'}
                             {\aapp {\aapp {\what Q} N} {\what N}}} 
\and
\inferrule{\jsynthsortclasst \Gamma Q P {\cinter {L_1} {L_2}} {\what Q}}
          {\jsynthsortclasst \Gamma Q P {L_1} {\afst {\what Q}}}
\and
\inferrule{\jsynthsortclasst \Gamma Q P {\cinter {L_1} {L_2}} {\what Q}}
          {\jsynthsortclasst \Gamma Q P {L_2} {\asnd {\what Q}}}
\end{mathpar}


\begin{mathpar}
\rulesheader{\jchecksortclasst [\Sigma] \Gamma {S^+} {A^+} {\what S^-}}
\and
\namedrule{\jsynthsortclasst \Gamma Q {P'} L {\what Q}
        \\ P' = P
        \\ L = \csort
        }
          {\jchecksortclasst \Gamma Q P
            {\metalam N {\aapp {\aiapp Q {\what Q}} N}}}
        {$Q$-F}
\and
\namedrule{\jchecksortclasst \Gamma {S} {A} {\what S}
        \\ \jchecksortclasst {\Gamma, \rdecl x {S} {A}} {S'} {A'} {\what {S'}}}
          {\jchecksortclasst \Gamma {\sspi x {S} {A} {S'}} {\api x {A} {A'}}
                             {\metalam N
                              {\api x A
                               {\api {\what x} {\metaapp {\what S} {\expand A x}} 
                                {\metaapp {\what {S'}} {\revapp N x}}}}}}
          {$\Pi$-F}
\\
\and
\namedaxiom {\jchecksortclasst \Gamma \sstop A {\metalam N \aunit}} {$\top$-F}
\and
\namedrule{\jchecksortclasst \Gamma {S_1} A {\what {S_1}}
        \\ \jchecksortclasst \Gamma {S_2} A {\what {S_2}}}
          {\jchecksortclasst \Gamma {\sinter {S_1} {S_2}} A
                             {\metalam N {\aprod {\metaapp {\what {S_1}} N}
                                                 {\metaapp {\what {S_2}} N}}}}
          {$\intersect$-F}
\end{mathpar}

\textbf{Note:} no intro rules for classes $\ctop$ and $\cinter {L_1} {L_2}$.


\begin{mathpar}
\rulesheader{\kindtpred {K^+} {\what K}^-}
\and
\inferaxiom{\kindtpred \ktype
                {\metalamtwo {Q\longform} P
                    {\aiarrow {Q\longform} {\aarrow P \ktype}}}}
\and
 \inferrule{\kindtpred K {\what K}}
           {\kindtpred {\kpi x A K}
                {\metalamtwo {Q\f} P
                    {\kpi x A {\metaapptwo {\what K}
                                           {\aapp {Q\f} {\expand A x}}
                                           {\aapp P {\expand A x}}}}}} 
\end{mathpar}



\begin{mathpar}
\rulesheader{\kindtsub {K^+} {{\what K}^-}}
\and
\inferaxiom{\kindtsub \ktype
            {\metalamfive P {{Q_1}\f} {Q_1} {{Q_2}\f} {Q_2}
                {\api {f_1} {{Q_1}\f} 
                 \api {f_2} {{Q_2}\f} 
                 \api x P {\aarrow {\aapp {\aiapp {Q_1} {f_1}} x}
                                   {\aapp {\aiapp {Q_2} {f_2}} x}}}}}
\and
\inferrule{\kindtsub K {\what K}}
          {\kindtsub {\kpi x A K}
            {\metalamfive P {{Q_1}\f} {Q_1} {{Q_2}\f} {Q_2}
             {\api x A
              {\metaappfive {\what K}
                {P'} {{Q_1}\f'} {Q_1'} {{Q_2}\f'} {Q_2'}}}}
        \\ \text{(where, for each $P$, $P' = \aapp P {\expand A x}$})}
\end{mathpar}

\newpage
\subsection{Typing}

\begin{mathpar}
\rulesheader{\jsynthtermsortt [\Sigma] \Gamma {R^+} {S^-} {{\what R}^-}}
\and
\namedrule{\mrefdecl c S \in \Sigma
          }
          {\jsynthtermsortt \Gamma c S {\what c}}
          {const}
\and
\namedrule{\rdecl x S A \in \Gamma}
          {\jsynthtermsortt \Gamma x S {\what x}}
          {var}
\and
\namedrule{\jsynthtermsortt \Gamma {R_1} {\sspi x {S_2} {A_2} {S}} {\what {R_1}}
        \\ \jchecktermsortt \Gamma {N_2} {S_2} {\what {N_2}}
        \\ \jhsubst s {N_2} x {A_2} {S} {S'}
        }
          {\jsynthtermsortt \Gamma {\mapp {R_1} {N_2}} {S'}
                            {\mapp {\mapp {\what {R_1}} {N_2}} {\what {N_2}}}} 
          {$\Pi$-E}
\\
\and
\namedrule{\jsynthtermsortt \Gamma R {\sinter {S_1} {S_2}} {\what R}}
          {\jsynthtermsortt \Gamma R {S_1} {\mfst {\what R}}}
          {$\intersect$-E$_1$}
\and
\namedrule{\jsynthtermsortt \Gamma R {\sinter {S_1} {S_2}} {\what R}}
          {\jsynthtermsortt \Gamma R {S_2} {\msnd {\what R}}}
          {$\intersect$-E$_2$}
\end{mathpar}


\begin{mathpar}
\rulesheader{\jchecktermsortt [\Sigma] \Gamma {N^+} {S^+} {\what N^-}}
\and
\namedrule{\jsynthtermsortt \Gamma R {Q'} {\what R}
        \\ \jchecksubtypet \Gamma {Q'} Q F}
          {\jchecktermsortt \Gamma R Q {\metaapptwo F R {\what R}}}
          {switch}
\and
\namedrule{\jchecktermsortt {\Gamma, \rdecl x {S} {A}} N {S'} {\what N}}
          {\jchecktermsortt \Gamma {\mlam x N} {\sspi x {S} {A} {S'}}
                            {\mlam x \mlam {\what x} {\what N}}}
          {$\Pi$-I}
\\
\and
\namedaxiom {\jchecktermsortt \Gamma N \sstop \munit}
          {$\top$-I}
\and
\namedrule{\jchecktermsortt \Gamma N {S_1} {\what {N_1}}
        \\ \jchecktermsortt \Gamma N {S_2} {\what {N_2}}}
          {\jchecktermsortt \Gamma N {\sinter {S_1} {S_2}}
                            {\mpair {\what N_1} {\what N_2}}}
          {$\intersect$-I}
\end{mathpar}
\begin{mathpar}
\rulesheader{\jchecksubtypet \Gamma {Q_1^+} {Q_2^+} {F^-}}
\and
\namedrule{Q_1 = Q_2}
          {\jchecksubtypet \Gamma {Q_1} {Q_2} {\metalamtwo R {R_1} {R_1}}}
          {refl}
\and
\namedrule{\jsynthsubtypet {Q_1} {Q'} \whatQQp
        \\ \jsynthsortclasst \Gamma {Q_1} P {\csort} {\what{Q_1}}
      \\\\ \jchecksubtypet \Gamma {Q'} {Q_2} F
        \\ \jsynthsortclasst \Gamma {Q'} P {\csort} {\what{Q'}}
        }
          {\jchecksubtypet \Gamma {Q_1} {Q_2}
                {\metalamtwo R {R_1}
                 {\metaapptwo F R {\mapp {\mapp {\mapp {\mapp \whatQQp 
                 {\what {Q_1}}} {\what{Q'}}} R} {R_1}}} }}
          {climb}
\end{mathpar}


\begin{mathpar}
\rulesheader{\jsynthsubtypet {Q_1^+} {Q_2^-} {\whatQQ^-}}
\and
\inferrule{\subdecl {s_1} {s_2} \in \Sigma}
          {\jsynthsubtypet {s_1} {s_2} {s_1{\textit-}s_2}}
\and
\inferrule{\jsynthsubtypet {Q_1} {Q_2} {\whatQQ}
          }
          {\jsynthsubtypet {\sapp {Q_1} N} {\sapp {Q_2} N}
                {\mapp \whatQQ N}
          }
\end{mathpar}

\newpage
\subsection{Signatures and Contexts}

\begin{mathpar}
\rulesheader{\jsigt {\Sigma^+} {{\what \Sigma}^-}}
\and
\inferaxiom{\jsigt \cdot \cdot}
\and
\inferrule{\jsigt \Sigma {\what \Sigma}
        \\ \jkind [\eraserefs\Sigma] \cdot K
        \\ \decl a {K'} \not\in \Sigma}
          {\jsigt {\Sigma, \decl a K} {\what \Sigma, \decl a K}}
\and
\inferrule{\jsigt \Sigma {\what \Sigma}
        \\ \jchecktypekind [\eraserefs\Sigma] \cdot A
        \\ \decl c {A'} \not\in \Sigma}
          {\jsigt {\Sigma, \decl c A} {\what \Sigma, \decl c A}}
\and
\inferrule{\jsigt \Sigma {\what \Sigma}
        \\ \decl a K \in \Sigma
        \\ \jclasstform [\Sigma] \cdot L K {\what {L_\text{f}}}
        \\ \kindtpred K {\what {K_\text{p}}}
        \\ \arefdecl s {a'} {L'} \not\in \Sigma}
          {\jsigt {\Sigma, \arefdecl s a L}
                  {\what \Sigma,
                   \ \decl {\what s} K,
                   \ \decl {\what s / i}
                        {\metaapp {\what {L_\text{f}}} {\what s}},
                   \ \decl s {\metaapptwo {\what {K_\text{p}}} {\what s} a}}}
\and
\inferrule{\jsigt \Sigma {\what \Sigma}
        \\ \decl c A \in \Sigma
        \\ \jchecksortclasst [\Sigma] \cdot S A {\what S}
        \\ \mrefdecl c {S'} \not\in \Sigma}
          {\jsigt {\Sigma, \mrefdecl c S}
                  {\what \Sigma, \decl {\what c}
                                       {\metaapp {\what S} {\expand A c}}}}
\and
\inferrule{\jsigt \Sigma {\what \Sigma}
        \\ \arefdecl {s_1} a L \in \Sigma
        \\ \arefdecl {s_2} a L \in \Sigma
        \\ \decl a K \in \Sigma
        \\ \kindtsub K {\what K}}
          {\jsigt {\Sigma, \subdecl {s_1} {s_2}}
                  {\what \Sigma,
                    \decl {s_1{\textit-}s_2}
                    {\metaappfive {\what K} a {\what{s_1}} {s_1}
                                              {\what{s_2}} {s_2}}}}
\end{mathpar}

\begin{mathpar}
\rulesheader{\jctxt [\Sigma] {\Gamma^+} {{\what \Gamma}^-}}
\and
\inferaxiom{\jctxt \cdot \cdot}
\and
\inferrule{\jctxt \Gamma {\what \Gamma}
        \\ \jchecksortclasst \Gamma S A {\what S}}
          {\jctxt {\Gamma, \rdecl x S A}
                  {\what \Gamma, \decl x A, 
                                 \decl {\what x}
                                       {\metaapp {\what S} {\expand A x}}}}
\end{mathpar}



\end{document}